 \titleformat{\section}{\bfseries\Large}{\appendixname~\thesection:}{0.5em}{}%
 \titleformat{\subsection}{\bfseries\large}{\thesubsection}{0.5em}{}%
\def\EQ#1{\begin{eqnarray}#1\end{eqnarray}} 
\newcommand*{\cA}{\mathcal{A}}
\newcommand*{\cD}{\mathcal{D}}
\newcommand*{\cF}{\mathcal{F}}
\newcommand*{\cK}{\mathcal{K}}
\newcommand*{\cR}{\mathcal{R}}
\newcommand*{\cX}{\mathcal{X}}
\newcommand*{\cY}{\mathcal{Y}}
\newcommand*{\cZ}{\mathcal{Z}}
\newcommand*{\E}{\mathbb{E}}
\newcommand*{\Z}{\mathbb{Z}}
\newcommand*{\N}{\mathbb{N}}
\newcommand{\LWE}{\textsc{LWE}}
\newcommand{\GapSVP}{\textsc{GapSVP}}
\newcommand{\SIVP}{\textsc{SIVP}}
\newcommand*{\accept}{\mathtt{accept}}
\newcommand*{\abort}{\mathtt{abort}}
\newcommand*{\gadxor}{\mathtt{Gad}_{\oplus}}
\newcommand*{\quin}{\mathtt{in}}
\newcommand*{\quout}{\mathtt{out}}
\newcommand{\pr}[1]{\Pr \left[ #1 \right]}
\newcommand*{\piAFourWeak}{\pi_{A_4}}
\newcommand*{\piBFourWeak}{\pi_{B_4}}
\newcommand*{\piAFourXOR}{\pi_{A_4\xor}}
\newcommand*{\piAFourXORchunk}{\pi_{A_4\xor c}}
\newcommand*{\piBFourXORchunk}{\pi_{B_4\xor c}}
\newcommand*{\eps}{\varepsilon}
\providecommand{\@fourthoffour}[4]{#4}
\newcounter{counttheorems}
\newcommand\fixstatement[2][\proofname\space of]{%
  \ifcsname thmt@original@#2\endcsname
    \AtEndEnvironment{#2}{%
      \xdef\pat@uniqlabel{\thecounttheorems}%
      \xdef\pat@label{\expandafter\expandafter\expandafter
        \@fourthoffour\csname thmt@original@#2\endcsname\space\@currentlabel}%
      \xdef\pat@proofof{\@nameuse{pat@proofof@#2}}%
      \addtocounter{counttheorems}{1}
      \expandafter\label{thm_uniq:\pat@uniqlabel}
    }%
  \else
    \AtEndEnvironment{#2}{%
      \xdef\pat@uniqlabel{\thecounttheorems}%
      \xdef\pat@label{\expandafter\expandafter\expandafter
        \@fourthoffour\csname #1\endcsname\space\@currentlabel}%
      \xdef\pat@proofof{\@nameuse{pat@proofof@#2}}%
      \addtocounter{counttheorems}{1}
      \expandafter\label{thm_uniq:\pat@uniqlabel}
    }%
  \fi
  \@namedef{pat@proofof@#2}{#1}%
}
\globtoksblk\prooftoks{1000}
\newcounter{proofcount}
  \edef\next{%
    \noexpand\begin{proof}[\pat@proofof\space\noexpand\autoref{thm_uniq:\pat@uniqlabel}]%
      \noexpand\hypertarget{proofatend:\pat@uniqlabel}
      \unexpanded\expandafter{\BODY}}%
\def\printproofs{%
  \count@=\z@
  \loop
    \the\toks\numexpr\prooftoks+\count@\relax
    \ifnum\count@<\value{proofcount}%
    \advance\count@\@ne
  \repeat}
\newtheorem{theorem}{Theorem}[section]
\newtheorem{lemma}[theorem]{Lemma}
\newtheorem{conjecture}[theorem]{Conjecture}
\newtheorem{definition}{Definition}[section]
\newtheorem*{remark*}{Remark}
\numberwithin{algorithm}{section}
\def\EQ#1{\begin{eqnarray}#1\end{eqnarray}} 
\newcommand{\nocontentsline}[3]{}
\newcommand{\tocless}[2]{\bgroup\let\addcontentsline=\nocontentsline#1{#2}\egroup}
\title{QFactory: classically-instructed remote secret qubits preparation}
\author[1]{Alexandru Cojocaru}
\author[2]{L\'{e}o Colisson}
\author[1,2]{Elham Kashefi}
\author[1]{Petros Wallden}
\affil[1]{School of Informatics, University of Edinburgh,}
\affil[ ]{10 Crichton Street, Edinburgh EH8 9AB, UK}
\affil[2]{D\'{e}partement Informatique et R\'{e}seaux, CNRS, Sorbonne Universit\'{e},}
\affil[ ]{4 Place Jussieu 75252 Paris CEDEX 05, France}
\begin{document}

\date{}

\maketitle

\begin{abstract}
The functionality of classically-instructed remotely prepared random secret qubits was introduced in (Cojocaru et al 2018) as a way to enable classical parties to participate in secure quantum computation and communications protocols. The idea is that a classical party (client) instructs a quantum party (server) to generate a qubit to the server's side that is random, unknown to the server but known to the client. Such task is only possible under computational assumptions. In this contribution we define a simpler (basic) primitive consisting of only BB84 states, and give a protocol that realizes this primitive and that is secure against the strongest possible adversary (an arbitrarily deviating malicious server). The specific functions used, were constructed based on known trapdoor one-way functions, resulting to the security of our basic primitive being reduced to the hardness of the Learning With Errors problem. We then give a number of extensions, building on this basic module: extension to larger set of states (that includes non-Clifford states); proper consideration of the abort case; and verifiablity on the module level. The latter is based on ``\textit{blind self-testing}'', a notion we introduced, proved in a limited setting and conjectured its validity for the most general case. 
\end{abstract}

\newpage

\tableofcontents

\section{Introduction}

In the coming decades, advances in quantum technologies may cause major shifts in the mainstream computing landscape. In the meantime, we can expect to see quantum devices with high variability in terms of architectures and capacities, the so-called noisy, intermediate-scale quantum (NISQ) devices \cite{preskill2018} (such as those being developed by IBM, Rigetti, Google, IonQ) that are currently available to users via classical cloud platforms. In order to be able to proceed to the next milestone for the utility of these devices in a wider industrial base, the issues of privacy and integrity of the data manipulation must be addressed.

 Early proposals for secure and verifiable delegated quantum computing based on simple obfuscation of data  already exist \cite{apbs03, Childs05, ABE08, BFK09, DKL11, TK12, ACJ13, VLTT13,  MVK15, FK17}. However, these original schemes require a reliable long-distance quantum communication network, connecting all the interested parties, which remains a challenging task.

For these reasons, there has recently been extensive research focusing on the practicality aspect of secure and verifiable delegated quantum computation. One direction is to reduce the required communications by exploiting classical fully-homomorphic-encryption schemes \cite{broadbent2015quantum,dulek2016quantum,alagic2017quantum}, or by defining their direct quantum analogues \cite{liang2015quantum,ouyang2015quantum,tan2016quantum,lai2017statistically}. Different encodings, on the client side, could also reduce the quantum communication \cite{mantri2013optimal,GMMR2013}. However, in all these approaches, the client still requires some quantum capabilities. While no-go results indicate restrictions on which of the above properties are jointly achievable for classical clients  \cite{armknecht2014general,yu2014limitations,ACGK2017,newman2017limitations}, recent breakthroughs based on post-quantum secure trapdoor one-way functions, paved the way for developing entirely new approaches towards fully-classical client protocols for emerging quantum servers. he first such procedures were proposed in \cite{urmila_qfhe} allowing a classical client to securely delegate a universal quantum computation to a remote untrusted server, followed by the work of \cite{brakerski_qfhe}, where the construction achieved stronger security. A similar technique was exploited to derive a classical (non-blind) verification scheme for universal computing  \cite{urmila_verif}.

Our own independent work \cite{qfactory_old}), presented in QCrypt '18, took a modular approach, to directly replace the need for any quantum communication channel between client and server with a computationally (but post-quantum) secure generation of secret and random single qubits (rather than directly obfuscating a target quantum functionalities). It was shown then how a classical client could use this module (referred to as QFactory) to achieve secure delegated universal quantum computing, but potentially also, other functionalities such as multi-party quantum computation. 

Following that philosophy, we present in this paper for the first time a universal yet minimal functionality module that is fully secure and verifiable at the module level and could be used as a black box in other client-server applications to replace the need for a reliable long-distance quantum communication network. However, the price one has to pay, would be a reduction from information-theoretic security (achievable using quantum communication) to post-quantum computational security via our modules. The ultimate vision would be to develop a hybrid network of both classical and quantum communication channels, depending on the desired level of security and the technology development of NISQ devices admitting classical or quantum links \cite{Wehner18}. 

\subsection{Our Contributions}

In \cite{qfactory_old} we defined a classical client - quantum server functionality of delegated pseudo-secret random qubit generator (PSRQG) that can replace the need for quantum channel between parties in certain quantum communication protocols, with the only trade-off being that the protocols would become computationally secure (against \emph{quantum} adversaries). In this paper:

\begin{enumerate}

\item We present a new protocol called Malicious 4-states QFactory in \autoref{sec:qfactory2.0} that achieves the functionality of classically instructed remote secret generation of the states $\{\Ket{0}, \Ket{1}, \Ket{+}, \Ket{-}\}$ (known as the BB84 states), given 2 cryptographic functions: 1) a trapdoor one-way function that is quantum-safe, two-regular and collision resistant and 2) a homomorphic, hardcore predicate. The novelty of this new protocol reflects in both simplicity of construction, as well as enhanced security, namely the protocol is secure against any arbitrarily deviating adversary. The target output qubit set is one of the four BB84 states, states that form the core requirement of any quantum communication protocol. \\
Then, in \autoref{subsec:qfactory2.0_secure}, we present the security of the Malicious 4-states QFactory against any fully malicious server, by proving that the basis of the generated qubits are completely hidden from any adversary, using the properties of the two functions, the security being based on the hardness of the Learning with Errors problem.

\item While the above-mentioned results do not depend on the specific function used, the existence of such functions (with all desired properties) makes the functionality a practical primitive that can be employed as described in this paper. In \autoref{sec:func_cons}, we describe how to construct the two-regular, collision resistant, trapdoor one-way family of functions and the homomorphic, hardcore predicate. 
 Furthermore, we prove using reductions in \autoref{subsec:cons_f} that the resulting functions maintain all the required properties.
 
\item In order to demonstrate the modular construction of the basic Malicious 4-states QFactory, we also present in \autoref{sec:8_states}, a secure and efficient extension to the functionality of generating 8 states, called the Malicious 8-states QFactory protocol (where the security refers to the fact that the basis of the new state is completely hidden). The set of output states $\left\{ \Ket{+_{\theta}} \, | \, \theta \in \{0,\frac{\pi}{4}, ..., \frac{7\pi}{4} \} \right\}$ (no longer within the Clifford group) are used in various protocols, including protocols for verifiable blind quantum computation.

\item While the protocol introduced in \autoref{sec:qfactory2.0} requires (for the security proof) a family of functions having 2 preimages with probability super-polynomially close to 1, we also define in \autoref{sec:qfactory_abort} a protocol named Malicious-Abort 4-states QFactory, that is secure when the functions have 2 preimages with only a constant (greater than $1/2$) probability. Indeed, even if the parameters used for the first category of functions are implicitly used in some protocols \cite{urmila_qfhe}, the second category of functions is strictly more secure and more standard in the cryptographic literature \cite{brakerski_qfhe}. The Malicious-Abort 4-states QFactory protocol is proven secure also for this second category of functions, assuming that the classical Yao's XOR lemma also applies for one-round protocols (with classical messages) with quantum adversaries.
  
\item The Malicious 8-states QFactory can be further extended in order to offer a notion of verification for QFactory in \autoref{sec:verif}, the new protocol being called Verifiable QFactory. We demonstrate that this notion of verifiability of QFactory is suitable, by showing that it is sufficient to obtain \textit{verifiable blind quantum computation}. Such protocol would be the first classical client, verifiable, \emph{blind} quantum computation protocol. \\
We introduce in \autoref{sec:blind_self_testing} a novel framework called \textit{blind self-testing}, which differs from the standard self-testing by replacing the non-locality assumptions for such tests with blindness conditions. 
We describe how this technique can be used to prove the verifiability of QFactory.
Note however, that the security of the Verifiable QFactory \autoref{protocol:vQFactory} is conjectured, while we expect that the full proof would follow using the most general case of the novel notion of \emph{blind self-testing} that we introduced. Finally,
we prove how a (much simpler) i.i.d. blind self-testing is achievable.

\end{enumerate}

\subsection{Overview of the protocols and proofs}

\noindent\textbf{The Protocol.} The general idea is that a classical client communicates with a quantum server instructing him to perform certain actions. By the end of
the interaction, the client obtains a random value
$B=B_1 B_2\in \{00,01,10,11\}$, while the server (if he followed the protocol) ends up with the state $H^{B_1}X^{B_2}\ket{0}$, i.e. with one of the BB84 states. Moreover, the server, irrespective of whether he followed the protocol or how he deviated, cannot guess the value of the (basis) bit $B_1$ any better than making a random guess. 

This module is sufficient to perform (either directly or with simple extensions) multiple secure computation protocols including blind quantum computation. 

To achieve such a task, we require three central elements. Firstly, the quantum operations performed by the server should not be repeatable, in order to avoid letting the (adversarial) server run multiple times these operations and obtain multiple copies of the same output state. That would (obviously) compromise the security since direct tomography of a single qubit is straightforward. This can be achieved if the protocol includes a measurement of many qubits, where the probability of getting twice the same outcome would be exponentially small. The second element is that the server should not be able to efficiently classically simulate the quantum computation that he needs to perform. This is to stop the server from running everything classically and obtaining the explicit classical description of the output state. This is achieved using techniques from post-quantum cryptography and specifically the Learning-With-Errors problem. Lastly, the computation has to be easy to perform for the client, since she needs to know the output state. This asymmetry (easy for client/ hard for server) can be achieved only in the computational setting, where the client has some extra trapdoor information. The protocol requires the following cryptographic primities:
\begin{itemize}
    \item $\mathcal{F}$: a family of 2-regular, collision resistant, trapdoor one-way functions (that can be easily constructed from a family of injective, homomorphic, trapdoor one-way functions $\mathcal{G}$);
    \item $h$: a homomorphic (related to the homomorphic operation of $\mathcal{G}$) and hardcore predicate of the functions $\mathcal{G}$;
\end{itemize}
Given these functions, the protocol steps are given below: The client sends the descriptions of the functions $f_k$ (from the family $\mathcal{F}$) and $h$. The server's actions are described by the circuit given in   \autoref{fig:circuitserver} (see \autoref{sec:qfactory2.0}), classically instructed by the client: prepares one register at $\otimes^n H\ket{0}$ and second register at $\ket{0}^{m}$; then applies $U_{f_k}$ using the first register as control and the second as target; measures the second register in the computational basis, obtains the outcome $y$. Through these steps server produces a superposition of the 2 preimages of $y$ for the function $f_k$. Next, the server is instructed to apply a unitary corresponding to the function $h$ and to measures all but one qubit, which represents the output of the protocol. This last step intuitively magnifies the randomness of all the qubits to this final output qubit. 

Then, it can be proven that, in an honest run, this output state is:
$$ \Ket{out} = H^{B_1}X^{B_2}\ket{0} \text{ , where }$$
$$ B_1 = h(z) \oplus h(z')$$
$$ B_2 = [b_n \oplus \sum_{i = 1}^{n - 1} (z_i \oplus z_i') \cdot b_i \bmod 2] \cdot B_1 \, \oplus \, h(z) \cdot (1 \oplus B_1)$$
and where the 2 preimages of $f_k$ are written as: $x = (z, 0)$ and $x' = (z', 1)$. \\
Therefore, the client can efficiently obtain the description of the output state, namely $B_1$ and $B_2$ by inverting $y$, to obtain the 2 preimages $x$ and $x'$ using his secret trapdoor information $t_k$.

\noindent\textbf{Security.} Informally speaking the desired security property of the module is to prove that the server cannot guess better than randomly the basis bit $B_1$ of what the client has, no matter how the server deviates or what answers he returns. In other words, we prove that given that the client chooses $k$ randomly, then no matter which messages $y$ and $b$ the server returns, he cannot determine $B_1$.
More specifically, using the properties of the two cryptographic functions, we show that the basis of the output state is independent of the messages sent by the server and essentially, the basis is fixed by the client at the beginning of the protocol.

Here it is important to emphasize that the simplicity of our modular construction allow us to make a direct reduction from the above security property to the cryptographic assumptions of our primitives functions $\mathcal{F}$ and $h$. Indeed, from the expression above, we can see that at the end of the interaction the client has recorded as the basis bit the expression $B_1 = h(z) \oplus h(z')$. \\
However, from the properties of the functions, we observe the following:
\begin{itemize}
    \item The 2 preimages of $f_k$ satisfy: $z' = z - z_0$
    \item From the homomorphic property of $h$, we get: $B_1 = h(z_0)$
    \item The basis bit $B_1$ depends on the choice of $z_0$ (randomly chosen by the client) and nothing else. Therefore $B_1$ is independent from the actions of the server.
    \item Finally, $h$ being a hardcore predicate, implies that the server should be unable to guess this irrespective of what answers $y,b$ he returns.
\end{itemize}

\noindent\textbf{The Primitive Construction.} In order to use this module in practise, it is crucial to have functions that satisfy our cryptographic requirements, and explore the choices of parameters that ensure that all these properties are jointly satisfied. Building on the function construction of \cite{qfactory_old} we gave specific choices that achieve these properties. The starting point is the injective, trapdoor one-way family of functions from \cite{MP2012}, where the hardness of the function is derived from the Learning With Errors problem.

\noindent\textbf{The Extended Protocol.} In order to use the above protocol for applications such as blind quantum computing \cite{ubqc}, we need to be able to produce states taken from the (extended) set of eight states $\{\Ket{+_{\theta}}, \theta \in \{0, \frac{\pi}{4}, ..., \frac{7\pi}{4}\}\}$. Importantly, we still need to ensure that the bits corresponding to the basis of each qubits produced, remain hidden. Here we prove how given two states produced by the basic protocol described previously, 
which we denote as $\Ket{in_1}$ and $\Ket{in_2}$, we can obtain a single state from the 8-states set, denoted $\Ket{out}$, ensuring that no information about the bits of the basis of $\Ket{out}$ is leaked\footnote{Note that one of the input states is exactly the output of the basic module, while the second comes from a slightly modified version (essentially rotated in the XY-plane of the Bloch sphere).}.

To achieve this, we need to find an operation (see Figure \ref{fig:gadget} in Section \ref{sec:correct_qfac_2_1}), that in the honest case maps the indices of the inputs to those of the output using a map that satisfies certain conditions.
This relation (inputs/output) should be such that learning anything about the basis of the output state implies learning non-negligible information for the basis of (one) input. This directly means, that any computationally bounded adversary that can break the basis blindness of the output, can use this to construct an attack that would also break the basis blindness of at least one of the inputs, i.e. he would break the security guarantees of the basic module that was proven earlier.

\noindent\textbf{Other Properties.} To further demonstrate the utility of our core module, as a building block for other client-server protocols, one might wish to expand further the desired properties of the basic functionality, as well as further enhancing the security model. The simplicity of our construction allows us to extend our security proof directly to the Abstract Cryptography framework of \cite{AC} (work in preparation). Next, to obtain the verifiability of the module (i.e. imposing an honest behaviour on the server) we propose a generalization of the self-testing, where the non-locality condition is replaced by the blindness property and the analysis is done in the computational setting. To further improve the practicality of the black box call of the QFactory we also present the security against abort scenario that could be achieved based on a quantum version of Yao's XOR Lemma. However, these additional properties require stronger basic assumptions that we leave as an open question to be removed or proven correct separately. \\

\section{Preliminaries}

We are considering protocols secure against quantum
adversaries, so we assume that all the properties of our functions hold for a general Quantum Polynomial Time (QPT) adversary, rather than the usual Probabilistic Polynomial Time (PPT) one. We will denote $\mathcal{D}$ the domain of the functions, while $\mathcal{D}(n)$ is the subset of strings of length $n$.

\noindent The following definitions are for PPT adversaries, however in this paper we will generally use quantum-safe versions of those definitions and thus security is guaranteed against QPT adversaries.

\begin{definition}[One-way]\label{def:one_way_function}
A family of functions $\{f_k : \mathcal{D} \rightarrow \mathcal{R}\}_{k \in \mathcal{K}}$ is \textbf{one-way} if:
\begin{itemize}
\item There exists a PPT algorithm that can compute $f_k(x)$ for any index function $k$, outcome of the PPT parameter-generation algorithm \text{Gen} and any input $x \in \mathcal{D}$;
\item Any PPT algorithm $\cA$ can invert $f_k$ with at most negligible probability over the choice of $k$: \\
  $ \underset{\substack{k \leftarrow Gen(1^n) \\  x \leftarrow \mathcal{D} \\ rc \leftarrow \{0, 1\}^{*}}} \Pr [f(\mathcal{A}(k, f_k(x)) = f(x)] \leq \negl$\\
  where $rc$ represents the randomness used by $\mathcal{A}$
\end{itemize}
\end{definition}

\begin{definition}[Collision resistant]\label{def:collision_resistant}
  A family of functions $\{f_k : \mathcal{D} \rightarrow \mathcal{R}\}_{k \in \mathcal{K}}$ is \textbf{collision resistant} if:
  \begin{itemize}
  \item There exists a PPT algorithm that can compute $f_k(x)$ for any index function $k$, outcome of the PPT parameter-generation algorithm \text{Gen} and any input $x \in \mathcal{D}$;
  \item Any PPT algorithm $\cA$ can find two inputs $x \neq x'$ such that $f_k(x) = f_k(x')$ with at most negligible probability over the choice of $k$: \\
  $\underset{
      \substack{k \leftarrow Gen(1^n) \\ rc \leftarrow \{0, 1\}^{*}}}
    \Pr [\cA(k) = (x,x') \text{such that } x \neq x' \text{ and } f_k(x) =
  f_k(x')] \leq \negl$\\
where $rc$ is the randomness of $\cA$ ($rc$ will be omitted from now).
\end{itemize}

\end{definition}

\begin{definition}[k-regular]\label{def:k_regular}
  
A deterministic function $f \colon \mathcal{D} \rightarrow \mathcal{R}$ is \textbf{k-regular} if $ \, \, \forall y \in \Im(f)$, we have ${|f^{-1}(y)| = k}$.
\end{definition}

\begin{definition}[Trapdoor Function]\label{def:trapdoor_function} 
  A family of functions $\{f_k : \mathcal{D} \rightarrow \mathcal{R} \}$
   is a \textbf{trapdoor function} if:
\begin{itemize}
\item There exists a PPT algorithm {\tt Gen} which on input $1^n$ outputs $(k, t_k)$, where $k$ represents the index of the function;
\item $\{f_k : \mathcal{D} \rightarrow \mathcal{R}\}_{k \in \mathcal{K}}$ is a family of one-way functions;
\item There exists a PPT algorithm {\tt Inv}, which on input $t_k$ (which is called the trapdoor information) output by {\tt Gen}($1^n$) and $y = f_k(x)$ can invert $y$ (by returning all preimages of $y$\footnote{While in the standard definition of trapdoor functions it suffices for the inversion algorithm {\tt Inv} to return one of the preimages of any output of the function, in our case we require a two-regular tradpdoor function where the inversion procedure returns both preimages for any function output.})
with non-negligible probability over the choice of $(k, t_k)$ and uniform choice of $x$.

\end{itemize}
\end{definition}

\begin{definition}[Hard-core Predicate]\label{def:hardcore_predicate}
  A function $hc \colon \mathcal{D} \rightarrow \{0, 1\}$ is a \textbf{hard-core predicate} for a function $f$ if:

  \begin{itemize}
\item There exists a QPT algorithm that for any input $x$ can compute $hc(x)$;
\item Any PPT algorithm $\mathcal{A}$ when given $f(x)$, can compute $hc(x)$ with negligible better than $1/2$ probability: \\
$ \underset{\substack{x \leftarrow \mathcal{D}(n) \\ rc \leftarrow \{0, 1\}^{*}}} \Pr [\mathcal{A}(f(x), 1^n) = hc(x)] \leq \frac{1}{2} + \negl$, where $rc$ represents the randomness used by $\mathcal{A}$;
\end{itemize}
\end{definition}

\noindent The Learning with Errors problem (\LWE{}) can be described in the following way:
\begin{definition}[\LWE{} problem (informal)]\label{def:lwe}
Given $s$, an $n$ dimensional vector with elements in $\mathbb{Z}_q$, for some modulus $q$,  the task is to distinguish between a set of polynomially many noisy random linear combinations of the elements of $s$ and a set of polynomially many random numbers from $\mathbb{Z}_q$. 
\end{definition}
Regev \cite{Regev} and Peikert \cite{Peikert} have given quantum and classical reductions from the average case of \LWE{} to problems such as approximating the length of the shortest vector or the shortest independent vectors problem in the worst case, problems which are conjectured to be hard even for quantum computers.
\begin{theorem}[Reduction \LWE{}, from {{\cite[Therem 1.1]{Regev}}}]
  Let $n$, $q$ be integers and $\alpha \in (0,1)$ be such that $\alpha q > 2\sqrt{n}$. If there exists an efficient algorithm that solves $\LWE{}_{q, \bar{\Psi}_\alpha}$, then there exists an efficient quantum algorithm that approximates the decision version of the shortest vector problem \GapSVP{} and the shortest independent vectors problem \SIVP{} to within $\tilde{O}(n/\alpha)$ in the worst case.
\end{theorem}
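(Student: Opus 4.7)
The plan is to reconstruct Regev's iterative quantum reduction, whose central idea is to bootstrap an \LWE{} oracle into a procedure that outputs increasingly narrow discrete Gaussian samples on an arbitrary $n$-dimensional lattice $L$, and then to argue that sufficiently narrow samples immediately yield approximate solutions to \GapSVP{} and \SIVP{} with factor $\tilde{O}(n/\alpha)$.

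First I would establish the classical bridge from \LWE{} to bounded distance decoding (BDD) on the dual lattice. Given \LWE{} samples $(a_i, \langle a_i, s\rangle + e_i \bmod q)$ with Gaussian error of width $\alpha q$, a standard argument interprets the $a_i$ as coming from a $q$-ary lattice and shows that any algorithm solving $\LWE{}_{q,\bar{\Psi}_{\alpha}}$ is equivalent, up to polynomial overhead, to an algorithm that, given a point within distance roughly $\alpha q / \sqrt{2\pi}$ of a lattice point in $L^{*}$, returns that nearest lattice point. I would therefore work for the rest of the argument with a BDD oracle on $L^{*}$ at decoding radius $d \approx \sqrt{n}/r$, and ask what such an oracle buys us if we already possess samples from a discrete Gaussian of width $r$ on $L$.

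The heart of the reduction is the quantum step: turn the BDD oracle on $L^{*}$ at decoding distance $d$ into samples of a discrete Gaussian of width $\sqrt{n}/d$ on the primal lattice $L$. Here I would prepare, on a sufficiently fine discretization of $\R^n$, a Gaussian superposition $\sum_{x} e^{-\pi d^{2} \|x\|^{2}} \ket{x}\ket{x \bmod L^{*}}$, invoke the BDD oracle coherently to uncompute the second register wherever decoding succeeds, and then apply a quantum Fourier transform to the remaining register; by Poisson summation, the resulting state is negligibly close to the desired discrete Gaussian state on $L$. Bounding the overall error requires controlling Gaussian tails beyond $\sqrt{n}$ standard deviations, the discretization error incurred in replacing $\R^n$ by a fine grid, and the failure probability of BDD on the support of the state.

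Finally I would iterate: start from samples of a very wide Gaussian on $L$, which are producible without any oracle from any basis of $L$; feed those samples into the \LWE{}-to-BDD reduction above to obtain a BDD oracle at correspondingly small decoding radius; invoke the quantum step to obtain samples from a strictly narrower Gaussian; and repeat for $\operatorname{poly}(n)$ rounds. The final samples come from a Gaussian of width $\tilde{O}(\sqrt{n}/\alpha)\cdot\lambda_{n}(L)$, and yield $n$ linearly independent lattice vectors of that length — precisely an $\tilde{O}(n/\alpha)$-approximation of \SIVP{} (and, via standard reductions, of \GapSVP{}). The main obstacle will be the quantum step together with the careful bookkeeping of approximation errors, showing that at every iteration the output state is negligibly close to the ideal discrete Gaussian despite imperfect BDD, discretization of continuous Gaussians, and the accumulation of error across $\operatorname{poly}(n)$ iterations; this is exactly where Regev's original analysis invests its main effort.
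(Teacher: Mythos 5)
This theorem is not proved in the paper: it is imported verbatim from Regev's work and cited as \cite[Theorem~1.1]{Regev}, serving only as external justification for the hardness of \LWE{} on which the paper's security reductions rest. There is therefore no in-paper proof to compare your sketch against. As a high-level reconstruction of Regev's actual argument, however, your outline is essentially correct: the \LWE{}-to-BDD bridge on the dual lattice, the quantum step that turns a BDD oracle plus wide discrete-Gaussian samples into strictly narrower discrete-Gaussian samples (Gaussian superposition, coherent BDD uncomputation, QFT, Poisson summation), and the iteration from a trivially samplable wide Gaussian down to width $\tilde{O}(\sqrt{n}/\alpha)\,\lambda_n(L)$, from which $\tilde{O}(n/\alpha)$ approximations of \SIVP{} and (by standard reductions) \GapSVP{} follow. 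One small imprecision: given discrete-Gaussian samples of width $r$ on $L$, the BDD decoding radius that the \LWE{} solver yields on $L^{*}$ is more precisely on the order of $\alpha q/(\sqrt{2}\,r)$, not $\sqrt{n}/r$ as you wrote; the hypothesis $\alpha q > 2\sqrt{n}$ is exactly what makes these comparable and guarantees that each iteration shrinks the Gaussian width by a constant factor, and the discrepancy is absorbed into the $\tilde{O}$.
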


\begin{definition}[Function Unitary]\label{def:unitary_function}

For any function $f : A \rightarrow B$ that can be described by a polynomially-sized classical circuit, we define the controlled-unitary $U_f$, as acting in the following way:
\EQ{U_f\ket{x}\ket{y} = \ket{x}\ket{y \oplus f(x)} \, \, \, \forall x \in A \, \, \, \forall y \in B,} 
where we name the first register $\ket{x}$ control and the second register $\ket{y}$ target.
Given the classical description of this function $f$, we can always define a QPT algorithm that efficiently implements $U_f$. 

\end{definition}

\subsection{Notations}

We assume basic familiarity with quantum computing notions.\\
For a state $\Ket{+_{\theta}} = \frac{1}{\sqrt{2}}(\Ket{0} + e^{i\theta} \Ket{1})$, where $\theta \in \{0, \frac{\pi}{4}, ..., \frac{7\pi}{4} \}$, we use the notation:
$$\theta = \frac{\pi}{4} L$$
Additionally, as $L$ is a 3-bit string, we write it as $L = L_1L_2L_3$, where $L_1, L_2, L_3$ represent the bits of $L$. \\
As a result when we refer to the basis of the $\Ket{+_{\theta}}$ state, it is equivalent to referring to the last 2 bits of $L$, thus saying that nothing is leaked about the basis of this state, is equivalent to saying nothing is leaked about the bits $L_2$ and $L_3$. \\
For a set of 4 quantum states $\{ \Ket{0}, \Ket{1}, \Ket{+}, \Ket{-} \}$, we denote the index of each state using 2 bits: $B_1, B_2$, with $B_1 = 0$ if and only if the state is $\ket{0}$ or $\ket{1}$, and $B_2 = 0$ if and only if the state is $\ket{0}$ or $\ket{+}$. We will use interchangeably the Dirac notation and the basis/value notation.\\
In the following sections, we will consider polynomially bounded malicious adversaries, usually denoted by $\cA$. The honest clients will be denoted with the $\pi$ letter, and both honest parties and adversaries can output some values, that could eventually be used in other protocols. To denote that two parties $\pi_A$ and $\cA$ interact in a protocol, and that $\pi_A$ outputs $a$ while $\cA$ outputs $b$, we write $(a,b) \leftarrow (\pi_A \| \pi_B)$ (we may forget the left hand side, or replace variables with underscores ``\_'' if it is not relevant). We can also refer to the values of the classical messages send between the two parties using something like $\Pr[a = \accept \mid (\pi_A \| \cA) ]$, and this probability is implicitly over the internal randomness of $\pi_A$ and $\cA$. To specify a two-party protocol, it is enough to specify the two honest parties $(\pi_A,\pi_B)$. Moreover, if the protocol is just made of one round of communication, we can just write $y \leftarrow \cA(x)$ with $x$ the first message sent to $\cA$, and $y$ the messages sent from $\cA$. Finally, a value with a tilde, such as $\tilde{d}$, represents a guess from an adversary.

\section{The Malicious 4-states QFactory Protocol \label{sec:qfactory2.0}}\label{sec:noabortprotocol}

The Malicious 4-states QFactory Protocol uses a family of functions $\mathcal{F}$ and a function $h$. \\
$\mathcal{F}$ is a family $\{f_k : \mathcal{D} \times \{0, 1\} \rightarrow \mathcal{R}\}_k$ of 2-regular, collision resistant, trapdoor one-way functions (against quantum adversaries). This family of functions is constructed using a family $\mathcal{G} = \{g_k : \mathcal{D} \rightarrow \mathcal{R}\}_k$ of injective, trapdoor one-way, homomorphic functions \footnote{We only require $\mathcal{G}$ to be homomorphic with high probability for a single application of the operation $+_{\mathcal{D}}$ and this would result in $\mathcal{F}$ being 2-regular with high probability (as we prove in appendix \ref{app:pseudohom}).}: \\
There exist 2 operations $"+_{\mathcal{D}}"$ acting on $\mathcal{D}$ and $"+_{\mathcal{R}}"$ acting on $\mathcal{R}$ such that:
\EQ{
g_k(z_1 +_{\mathcal{D}} z_2) = g_k(z_1) +_{\mathcal{R}} g_k(z_2) \, \,  \forall k \, \, \forall z_1, z_2 \in \mathcal{D} 
\label{eq:hom_g}
} 

The function $h: \mathcal{D} \rightarrow \{0, 1\}$ is a hardcore predicate with respect to the function $g_k$ and homomorphic:
\EQ{
h(z_1) \oplus h(z_2) = h(z_2 -_{\mathcal{D}} z_1) \, \, \forall z_1, z_2 \in \mathcal{D}, \\
\text{where ''$-_{\mathcal{D}}$'' is the inverse of the operation ''$+_{\mathcal{D}}$''.} \nonumber
\label{eq:hom_h}
}\

Then,the functions $\mathcal{F}$ are constructed as:
\EQ{
f_k : \mathcal{D} \times \{0, 1\} \rightarrow \mathcal{R} \nonumber \\
f_k(z, c) = g_k(z)  +_{\mathcal{R}} c \cdot y_0
}
where $y_0 = g_k(z_0)$ for a fixed $z_0$.

\begin{algorithm}[H]

\caption{Malicious 4-states QFactory Protocol: classical delegation of the BB84 states} \label{protocol:qfactory_real}
\textbf{Requirements:} \\
Public: The functions $\mathcal{F}$ and $h$ described above. For simplicity, we will represent the sets $\mathcal{D}$ and $\mathcal{R}$, using $n - 1$, respectively $m$ bit strings: $\mathcal{D} = \{0, 1\}^{n - 1}$, $\mathcal{R} = \{0, 1\}^{m}$.\\
\textbf{Stage 1: Preimages superposition} \\
-- Client: runs the algorithm $(k,t_k) \leftarrow \text{Gen}_{\mathcal{F}}(1^n)$. The description of $h$ and $k$ are public inputs (known to any party), while $t_k$ is the private input of the Client.
-- Client: instructs Server to prepare one register at $\otimes^n H\ket{0}$ and second register initiated at $\ket{0}^{m}$.

-- Client: sends $k$ to Server and the Server applies $U_{f_k}$ using the first register as control and the second as target.

-- Server: measures the second register in the computational basis, obtains the outcome $y$. Here, in an honest run, the Server would have a state ${(\ket{x}+\ket{x'}) \otimes \ket{y}}$ with $f_k(x)=f_k(x')=y$ and $y\in \Im f_k$. \\
\textbf{Stage 2: Output preparation} \\
-- Server: applies $U_{h}$ using all but the last qubit of the preimage state $\ket{x}+\ket{x'}$ as control and another qubit iniated at $\ket{0}$ as target. Then, measures all the qubits, but the first one (the target) in the $\{\frac{1}{\sqrt{2}}(\Ket{0} \pm \Ket{1})\}$ basis, obtaining the outcome $b = (b_1, ..., b_{n})$. Now, the Server returns both $y$ and $b$ to the Client. \\
-- Client: using the trapdoor $t_k$ computes $x,x'$ and obtains the classical description of the Server's state. \\

\textbf{Output:} If the protocol is run honestly, the state that the Server has produced is:
\EQ{
\Ket{Output} = X^{h(z)} Z^{b_n + \sum_{i = 1}^{n - 1} (z_i \oplus z_i') \cdot b_i} H^{h(z) \oplus h(z')}\Ket{0}
}
where $z$ and $z'$ are the strings obtained from $x$, respectively $x'$ by removing the last bit, while $z_i$ is the $i$-th bit of $z$.

\end{algorithm}

\subsection{Correctness of Malicious 4-states QFactory \label{subsec:qfactory2.0_correct}}

In an honest run, the description of the output state of the protocol depends on the measurement results $y \in Im(f_k)$, and $b$, but also on the 2 preimages $x$ and $x'$ of $y$. Without loss of generality  we assume that these 2 preimages differ in the last bit
We notice, that for our construction of the family $\mathcal{F}$, the 2 unique preimages of the same image differ in the last bit, therefore we will use the notation: $x = (z, 0)$ and $x' = (z', 1)$. \\

The output state of Malicious 4-states QFactory belongs to the set of states $\{\Ket{0}, \Ket{1}, \Ket{+}, \Ket{-}\}$ and its exact description is the following:
\begin{theorem}
In an honest run, the Output state of the Malicious 4-states QFactory Protocol is a BB84 state whose basis is $B_1 = h(z) \xor h(z')$, i.e.: \\
\EQ{
\ket{Output} = H^{B_1}X^{B_2} \ket{0}
\label{eq:correct_output}
}
where:
\EQ{
B_1 &=& h(z) \oplus h(z') \\
B_2 &=& [b_n \oplus \sum_{i = 1}^{n - 1} (z_i \oplus z_i') \cdot b_i \bmod 2] \cdot [h(z) \oplus h(z')] \oplus \{h(z) \cdot [1 \oplus h(z) \oplus h(z')]\} \nonumber \\ 
\label{eq:expressions_exp}
}
\label{thm:correctness}
\end{theorem}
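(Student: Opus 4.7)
The plan is to prove \autoref{thm:correctness} by direct state evolution through the two stages of the protocol, followed by a short case split to match the resulting single-qubit state to the BB84 parametrization $H^{B_1}X^{B_2}\Ket{0}$.

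First I would carry out Stage 1. After the server prepares $(\otimes^n H\Ket{0})\otimes \Ket{0}^m$ and applies $U_{f_k}$, the joint state is $\frac{1}{\sqrt{2^n}}\sum_{x\in\{0,1\}^n}\Ket{x}\Ket{f_k(x)}$. Since $f_k$ is 2-regular with its two preimages differing exactly in the last bit (by the construction of $\mathcal{F}$), measuring the second register and observing $y\in\Image(f_k)$ collapses the first register to $\frac{1}{\sqrt{2}}(\Ket{z,0}+\Ket{z',1})$, where $x=(z,0)$ and $x'=(z',1)$ are the two preimages of $y$.

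Next I would handle Stage 2. Applying $U_h$ to the first $n-1$ qubits with a fresh ancilla $\Ket{0}$ as target yields $\frac{1}{\sqrt{2}}(\Ket{z,0}\Ket{h(z)}+\Ket{z',1}\Ket{h(z')})$. Measuring the $n$ non-target qubits in the $\{\Ket{+},\Ket{-}\}$ basis with outcomes $b=(b_1,\ldots,b_n)$ amounts to projecting via $\frac{1}{\sqrt{2^n}}\sum_{w\in\{0,1\}^n}(-1)^{b\cdot w}\Bra{w}$. After pulling out the global phase $(-1)^{\sum_{i=1}^{n-1}b_i z_i}$ and using the bit identity $(-1)^{c-d}=(-1)^{c\oplus d}$ in the exponents, the target qubit is left (up to normalization and global phase) in the state $\Ket{h(z)}+(-1)^{\alpha}\Ket{h(z')}$, where $\alpha = b_n \oplus \sum_{i=1}^{n-1}(z_i\oplus z'_i)\,b_i$, which is exactly the $Z$-exponent appearing in the theorem.

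Finally a two-case analysis closes things. If $h(z)=h(z')$ (so $B_1=0$), the two branches merge and the target is simply $\Ket{h(z)}=X^{h(z)}\Ket{0}$, which matches $H^{B_1}X^{B_2}\Ket{0}$ with $B_2=h(z)$. If $h(z)\neq h(z')$ (so $B_1=1$), the state is $\Ket{0}+(-1)^{\alpha}\Ket{1}$ when $h(z)=0$ and $\Ket{1}+(-1)^{\alpha}\Ket{0}$ when $h(z)=1$; in both subcases this equals $HX^{\alpha}\Ket{0}$ up to an overall phase, so $B_2=\alpha$. The two cases are unified by the Boolean identity $B_2 = \alpha\cdot B_1 \oplus h(z)\cdot(1\oplus B_1)$ stated in the theorem. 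The main delicate point — essentially the only one — is bookkeeping which $\pm 1$ factors coming from the X-basis projection and from the $h(z)=1$ branch are truly global (hence discardable) versus which feed into the Pauli corrections; beyond that the calculation is mechanical.
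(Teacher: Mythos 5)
Your proof is correct and follows essentially the same route as the paper's: a direct step-by-step state evolution through the circuit, yielding the output qubit in the form $\Ket{h(z)}+(-1)^{\alpha}\Ket{h(z')}$ with $\alpha=b_n\oplus\sum_{i=1}^{n-1}(z_i\oplus z_i')b_i$. The only difference is that the paper stops at the equivalent form $X^{h(z)}Z^{\alpha}H^{h(z)\oplus h(z')}\Ket{0}$ and leaves the conversion to the $H^{B_1}X^{B_2}\Ket{0}$ parametrization implicit, whereas you carry out that conversion explicitly via the case split on $B_1$ — a small but genuine improvement in completeness that also makes it visible that when $h(z)=h(z')$ the target is unentangled (so $\alpha$ is forced to $0$ and the $B_2$ formula correctly ignores it).
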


\begin{proof}

The operations performed by the quantum server, can be described as follows:

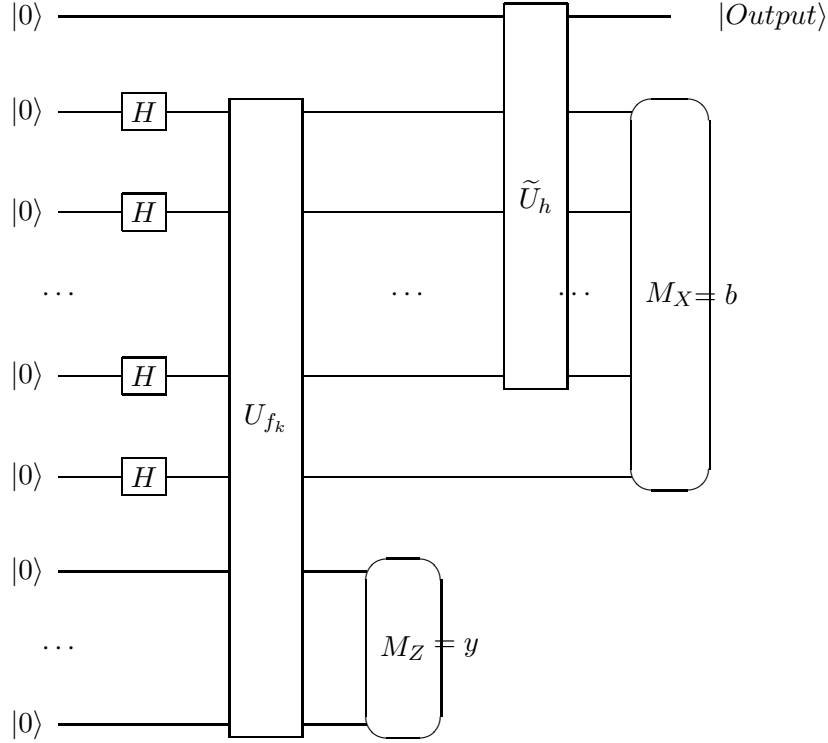
\begin{figure}[ht!]
\[
\Qcircuit{
	& \lstick{\ket{0}} &   \qw &     \qw &                     \qw & \multigate{4}{\widetilde{U}_{h}} & \qw & \Ket{Output} \\
	& \lstick{\ket{0}}  & \gate{H} & \multigate{7}{U_{f_k}} &  \qw & \ghost{\widetilde{U}_{h}} & \multimeasure{4}{M_{X}}    \\
	& \lstick{\ket{0}}  & \gate{H} & \ghost{U_{f_k}} &  \qw & \ghost{\widetilde{U}_{h}} & \ghost{M_{X}}  \\
	&  \cdots  &  \hspace{70mm} \cdots &  \hspace{82mm} \cdots &  \hspace{83mm} = b          \\
	& \lstick{\ket{0}}  & \gate{H} & \ghost{U_{f_k}} &  \qw & \ghost{\widetilde{U}_{h}} & \ghost{M_{X}}  \\
    & \lstick{\ket{0}}  & \gate{H} & \ghost{U_{f_k}} &  \qw & \qw  & \ghost{M_{X}}  \\
    & \lstick{\ket{0}} & \qw &       \ghost{U_{f_k}} & \multimeasure{2}{M_{Z}} \\
    & \cdots & \hspace{83mm} = y             \\
    & \lstick{\ket{0}} & \qw &       \ghost{U_{f_k}} & \ghost{M_{Z}} 
}
\]
\caption{The circuit computed by the Server}
\label{fig:circuitserver}
\end{figure}

\EQ{
\Ket{0} \otimes \Ket{0^n} \otimes \Ket{0^m} &\xrightarrow{I_2 \otimes H^{\otimes n} \otimes {I_2}^{\otimes m} }& \nonumber \\
\Ket{0} \otimes (\sum_{x \in Dom(f_k)} \Ket{x}) \otimes \Ket{0^m} &\xrightarrow{I_2 \otimes U_{f_k} }& \nonumber \\
\Ket{0} \otimes (\sum_{x \in Dom(f_k)} \Ket{x} \otimes \Ket{f_k(x)}) &\xrightarrow{\text{2-regularity of } f_k }& \nonumber \\
\Ket{0} \otimes (\sum_{y \in Im(f_k)} (\Ket{x} + \Ket{x'}) \otimes \Ket{y}) &\xrightarrow{I_2 \otimes {I_2}^{\otimes n} \otimes M_Z^{\otimes m} }& \nonumber \\
\Ket{0} \otimes (\Ket{x} + \Ket{x'}) \otimes \Ket{y} = \Ket{0} \otimes (\Ket{z}\Ket{0} + \Ket{z'}\Ket{1}) \otimes \Ket{y} &\xrightarrow{\widetilde{U}_{h} \otimes {I_2}^{\otimes m} }& \nonumber \\
(\Ket{h(z)} \otimes \Ket{z}\Ket{0} + \Ket{h(z')} \otimes \Ket{z'}\Ket{1}) \otimes \Ket{y} &\xrightarrow{I_2 \otimes M_X^{\otimes n} \otimes {I_2}^{\otimes m} }& \nonumber \\
\Ket{Output} \otimes \Ket{b_1} ... \otimes \Ket{b_n} \otimes \Ket{y} &\Rightarrow& \nonumber \\
\Ket{Output} = X^{h(z)}Z^{b_n + \sum_{i = 1}^{n - 1} (z_i \oplus z_i') \cdot b_i}H^{h(z) \oplus h(z')}\Ket{0}
}
where $\widetilde{U}_{h}$ is defined as $U_{h}$ acting on the first register as target and second register as control: $\Ket{0}\Ket{z}\Ket{c} \xrightarrow{\widetilde{U}_{h}} \Ket{h(z)}\Ket{z} \Ket{c}$.

The sever initially prepares the state $\Ket{0^n} \otimes \Ket{0^m}$, where we will call the first register the preimage register, and the second one the image register. \\ 
After applying $U_{f_k}$ we obtain the state $\sum_{x \in Dom(f_k)} \Ket{x}\Ket{f_k(x)}$. Using the 2-regularity property of $f_k$, after measuring the second register (in the computational basis) and obtaining the measurement result $y \in Im(f_k)$, the state can be expressed as $(\Ket{x} + \Ket{x'}) \otimes \Ket{y}$, where $x$ and $x'$ are the 2 unique preimages of $y$. By omitting the image register and by initializing another qubit in the $\Ket{0}$ state and using the above notation, the input to the unitary $\widetilde{U}_{h}$ can be written as:
\EQ{
(\Ket{z}\otimes \Ket{0} + \Ket{z'} \otimes \Ket{1}) \otimes \Ket{0}
}
$\widetilde{U}_{h}$ is basically $U_{h}$ acting on the first and third register ($\Ket{z}\Ket{c}\Ket{0} \xrightarrow{\widetilde{U}_{h}} \Ket{z}\Ket{c} \Ket{h(z)}$), and after we apply it, we obtain the state:
\EQ{
(\Ket{z} \otimes \Ket{0} \otimes \Ket{h(z)} + \Ket{z'} \otimes \Ket{1} \otimes \Ket{h(z')}
}
As a final step, we measure all but the last qubit of this state in the $\{\frac{1}{\sqrt{2}}(\Ket{0} \pm \Ket{1})\}$ basis (obtaining the measurement result string $b$), and the unmeasured qubit represents the output of the Malicious 4-states QFactory protocol. 

\end{proof}

It can be noticed that, in an honest run of the protocol, using $y$ and the trapdoor information of the function $f_k$, the Client obtains $x$ and $x'$ and thus can efficiently determine what is the output state that the Server has prepared. \\
In the next section, we prove that no malicious adversary can distinguish between the 2 possible bases $\{\Ket{0}$, $\Ket{1}\}$ and $\{\Ket{+}$, $\Ket{-}\}$ of the output qubit, or equivalently distinguish whether $B_1$ is $0$ or $1$.

\subsection{Security against Malicious Adversaries of Malicious 4-states QFactory \label{subsec:qfactory2.0_secure}}

In any run of the protocol, honest or malicious, the state that the client believes that the server has, is given by Thereom \ref{thm:correctness}. \\
Therefore, the task that a malicious server wants to achieve, is to be able to guess, as good as he can, the index of the output state that the client (based on the public communication) thinks the server has produced. In particular, in our case, the server needs to guess the bit $B_1$ (corresponding to the basis) of the (honest) output state.

\begin{definition}[4 states basis blindness]\label{def:4basisblind}
  We say that a protocol $(\pi_A, \pi_B)$ achieves \textbf{basis-blindness} with respect to an ideal set of 4 states $S$ if:
  \begin{itemize}
  \item $S$ is the set of states that the protocol outputs, i.e.:
    \[\Pr[\ket{\phi} \in S \mid (\_,\ket{\phi}) \leftarrow (\pi_A \| \pi_B) ]=1\]
  \item and if no information is leaked about the index bit $B_1$ of the output state of the protocol, i.e:
  \[\forall \cA, \Pr[ B_1 = \tilde{B_1} \mid (B_1B_2, \tilde{B_1}) \leftarrow (\pi_A \| \cA)] \leq 1/2 + \negl \]
  \end{itemize}
\end{definition}

\begin{theorem} \label{thm:security_qfac_2_0}
Malicious 4-states QFactory satisfies $4$-state basis blindness with respect to the ideal set of states $\{\Ket{0}, \Ket{1}, \Ket{+}, \Ket{-}\}$.
\end{theorem}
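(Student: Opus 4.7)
The plan is to reduce the non-trivial clause of $4$-state basis blindness to the hardcore-predicate property of $h$ relative to $g_k$. The first clause---that the honest protocol always outputs a state in $\{\Ket{0}, \Ket{1}, \Ket{+}, \Ket{-}\}$---is immediate from \autoref{thm:correctness}. For the second clause, the key observation is that no matter which messages $y,b$ the malicious server returns, the bit $B_1$ the client records equals $h(z_0)$, where $z_0 \in \mathcal{D}$ is the random value the client sampled during $\text{Gen}_{\mathcal{F}}$ when constructing $f_k$ from $g_k$. In particular $B_1$ is information-theoretically fixed at setup, before the server ever speaks.

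First I would establish the identity $B_1 = h(z_0)$. By $2$-regularity, the two preimages of any $y \in \Im(f_k)$ that the server could return can be written as $x = (z, 0)$ and $x' = (z', 1)$. From $f_k(z,0) = g_k(z)$ and $f_k(z',1) = g_k(z') +_{\mathcal{R}} y_0$, equality of images gives $g_k(z) = g_k(z') +_{\mathcal{R}} g_k(z_0) = g_k(z' +_{\mathcal{D}} z_0)$ by homomorphism of $g_k$, and injectivity of $g_k$ yields $z -_{\mathcal{D}} z' = z_0$. The homomorphism of $h$ then delivers $B_1 = h(z) \oplus h(z') = h(z -_{\mathcal{D}} z') = h(z_0)$.

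Next I would carry out a black-box reduction. Assume towards contradiction that some QPT server $\cA$ in \autoref{protocol:qfactory_real} outputs a guess $\tilde{B_1}$ with $\Pr[\tilde{B_1} = B_1] \geq 1/2 + \eps$ for non-negligible $\eps$. I construct a hardcore-predicate adversary $\cA'$ against $h$ as follows: on input $(k_g, y_0)$ with $y_0 = g_{k_g}(z_0)$ and $z_0 \leftarrow \mathcal{D}$, $\cA'$ assembles $k = (k_g, y_0)$ exactly as $\pi_A$ would, forwards $k$ to $\cA$, discards the $(y,b)$ message returned by $\cA$, and outputs $\cA$'s guess $\tilde{B_1}$ as its own guess for $h(z_0)$. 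The view of $\cA$ inside $\cA'$ is identically distributed to its view against the honest client, because $\text{Gen}_{\mathcal{F}}$ samples $k_g$ and $z_0$ independently and sets $y_0 := g_{k_g}(z_0)$ in exactly the same way; combined with the previous step, $\cA'$ guesses $h(z_0)$ with the same advantage $\eps$, contradicting the hardcore property of $h$.

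The main obstacle will be the corner cases of the identity $B_1 = h(z_0)$. Specifically, $\cA$ might return some $y \notin \Im(f_k)$, or an image for which the sampled key $k$ happens to fail $2$-regularity (the construction of \autoref{sec:func_cons} only guarantees $2$-regularity with overwhelming probability over $k$). These events must be shown to contribute only negligibly to the winning probability, either by absorbing them into the negligible slack of the bound or by using the convention that an undefined $B_1$ counts as a loss for the adversary. A secondary subtlety is confirming that the $(k, z_0)$ marginal produced by the hardcore-predicate challenger matches the one produced by $\text{Gen}_{\mathcal{F}}$; this is immediate from the construction in \autoref{sec:func_cons} since $z_0$ is drawn independently of $k_g$ and $y_0$ is a deterministic function of both.
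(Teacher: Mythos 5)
Your proof is correct and follows the same structure as the paper's: derive $B_1 = h(z_0)$ from the homomorphic and injectivity properties of $g_k$ together with the homomorphic property of $h$, then reduce a $B_1$-guessing adversary $\cA$ to a hardcore-predicate adversary $\cA'$ for $h$. The $2$-regularity corner case you flag is exactly what the paper defers to the abort-handling section (and sidesteps here by assuming overwhelming-probability two-regularity and noting the server never sees the accept/abort bit); your description of the reduction, where $\cA'$ receives $(k_g, y_0)$ as its hardcore challenge and discards the server's $(y,b)$, is in fact cleaner than the paper's pseudocode, which conflates the challenger's sampling of $z_0$ and $t_k$ with the actions available to $\cA'$.
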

\begin{proof}
From Theorem \ref{thm:correctness}, we notice that the basis of the output qubit is $\{\Ket{0}, \Ket{1}\}$ if $B_1 = h(z) \oplus h(z') = 0$ and the basis is $\{\Ket{+}, \Ket{-}\}$ if $B_1 = h(z) \oplus h(z') = 1$. \\
Therefore, to hide the basis of the output qubit (against any malicious adversary) is equivalent to hiding the value of the bit $B_1 = h(z) \oplus h(z')$. \\

Next, we will prove that $h(z) \oplus h(z')$ is indistinguishable from a random bit, using the properties of the functions $\mathcal{F}$ and $h$. \\
As a first step, we recall that our function $f_k$ is constructed using the injective, homomorphic trapdoor one-way function $g_k$:
\EQ{
f_k : \mathcal{D} \times \{0, 1\} \rightarrow \mathcal{R} \nonumber \\
f_k(z, c) = g_k(z)  +_{\mathcal{R}} c \cdot y_0
}
where $y_0 = g_k(z_0)$ for a fixed $z_0$ \footnote{At the beginning of each run of Malicious 4-states QFactory, the Client chooses a new $z_0$ uniformly at random from the domain of $g_k$ and then he sends $y_0 = g_k(z_0)$ to the Server, in order for the Server to compute $U_{f_k}$.}. The proof of this construction can be found in Thm 6.1, in \cite{qfactory_old}). \\

Using the homomorphic property of $g_k$ from \autoref{eq:hom_g}, we can rewrite $f_k$ as:
\EQ{
f_k(z, c) = g_k(z) +_{\mathcal{R}} c \cdot y_0  = g_k(z) +_{\mathcal{R}} c \cdot g_k(z_0) = g_k(z +_{\mathcal{D}}  (c \cdot z_0))
}
And as $g_k$ is injective, for any image $y$ of the function $f_k$, the 2 preimages of $y$ will be of the form $x = (z, 0)$ and $x' = (z +_{\mathcal{D}} z_0, 1)$ and equivalently $z' = z +_{\mathcal{D}} z_0$.\footnote{As the functions $\mathcal{G}$ are homomorphic with high probability, we derive 2-regularity and thus the existence of 2 preimages of this form with high probability as we prove in \autoref{lemma:approx_two_reg}.}\\
Then, using the homomorphic property of $h$, we deduce that the basis bit $B_1$ of the output state is equal to:
\EQ{
B_1 =  h(z) \oplus h(z') =  h(z' -_{\mathcal{D}} z) = h(z_0)
}

Finally, to prove that $B_1$ is completely hidden we will use the hardcore property of $h$. \\
To do that, we proceed with a proof by contradiction. We assume there exists a QPT adversary $\mathcal{A}$ that can determine $B_1$ with a non-negligible advantage and construct a QPT adversary $\mathcal{A}'$ that breaks the hardcore predicate property of $h$ with the same non-negligible advantage, which completes the proof.

\procedure [linenumbering]{ $Hardcore_{\mathcal{A}', h } $}{
(k, t_k) \sample {Gen}_{\mathcal{G}}(1^n) \\
z_0 \sample \mathcal{D} \\
k' \gets (k, g_k(z_0)) \\
t_k' \gets (t_k, z_0) \\
B_1' \gets \mathcal{A}(k', h) \\
\pccomment{ $B_1$ is the index of the outcome} \\ \pccomment{ that we assumed $\mathcal{A}$ is able to determine} \\
\pcreturn B_1'
\pccomment{as $B_1 = h(z_0)$, this is equivalent to breaking the hardcore property} \\
\pccomment{ with the same non-negligible advantage of $\mathcal{A}$ to determine $B_1$} \\
}

Remark: in the run of the Malicious 4-states QFactory protocol, the adversary/server has no access to the abort/accept bit, specifying whether the Client wants to abort the protocol after receiving the image $y$ from the server (the abort occurs when $y$ does not have exactly two preimages). See \autoref{sec:qfactory_abort} to see how we address this issue.

\end{proof}

 \section{Function Implementation \label{sec:func_cons}}

To complete the construction of Malicious 4-states QFactory, we must find functions $\mathcal{F}$ and $h$ satisfying the properties described in \autoref{sec:qfactory2.0}. \\

The starting point is the injective, trapdoor one-way family of functions $\bar{\mathcal{G}} = \{\bar{g}_K : \mathbb{Z}_q^n \times {\chi}^m \rightarrow \mathbb{Z}_q^m\}_K $ from \cite{MP2012} (where $\chi$ defines the set of integers bounded in absolute value by some value $\mu$ which will be defined later):
\EQ{
\bar{g}_K(s,e) = Ks + e \bmod q
}
From this we construct the family of functions $\bar{\mathcal{F}} = \{ \bar{f}_K : \mathbb{Z}_q^n \times {\chi}^m \times \{0, 1\} \rightarrow \mathbb{Z}_q^m \}$:  
\EQ{
\bar{f}_K(s,e, c) = Ks + e + c \cdot \bar{y}_0 \bmod q
}
where $\bar{y}_0 = \bar{g}_K(s_0,e_0)$, with $s_0$ chosen at random from $\mathbb{Z}_q^n$ and $e_0$ chosen at random from ${\chi'}^m$ ($\chi'$ is the set of integers bounded in absolute value by some value $\mu' < \mu$) .\\

\begin{lemma}[from \cite{qfactory_old}]
The family of functions of $\bar{\mathcal{F}}$ is 2 regular with high (constant) probability, trapdoor, one-way and collision resistant (all these properties are true even against a quantum attacker), assuming that there is no quantum algorithm that can efficiently solve $SIVP{}_\gamma$ for $\gamma = \poly[n]$, for the following choices of parameters:
\EQ{
q &=& 2^{5\ceil{log(n)} + 21} \nonumber \\
m &=& 23n + 5n\ceil{log(n)} \nonumber \\
\mu &=& 2mn\sqrt{23+5log(n)} \nonumber \\
\mu' &=& \mu / m
\label{eq:params}
}
\label{lemma:params}
\end{lemma}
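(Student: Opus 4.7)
The approach is to derive each of the four claimed properties of $\bar{\mathcal{F}}$ from the corresponding properties of the underlying Micciancio--Peikert family $\bar{\mathcal{G}} = \{\bar{g}_K(s,e) = Ks + e \bmod q\}$ of \cite{MP2012}, whose post-quantum hardness reduces to decisional \LWE{} and thence, via Regev's quantum reduction, to $\SIVP_\gamma$ with $\gamma = \poly[n]$. The choice of parameters in \eqref{eq:params} has to be tuned so that three things hold simultaneously: (i) \LWE{} with modulus $q$ and noise amplitude $\mu$ is quantum-hard; (ii) $\bar{g}_K$ is injective on $\mathbb{Z}_q^n \times \chi^m$ even at the widened noise level $2\mu$; and (iii) $\mu' = \mu/m$ is small enough compared to $\mu$ that shifting $e \in \chi^m$ by $e_0$ stays inside $\chi^m$ with constant probability.

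For the trapdoor property, $\text{Gen}_{\bar{\mathcal{F}}}$ would output the MP12 key $K$ together with its trapdoor $t_K$ and the freshly sampled $(s_0, e_0)$. To invert a target image $y$, I would run MP12 inversion on both $y$ and $y - \bar{y}_0 \bmod q$; whenever both inversions succeed, they yield the $c = 0$ and $c = 1$ preimages respectively. One-wayness then follows by a direct reduction: an inverter for $\bar{f}_K$ on a random input $(s, e, c)$ can be turned into an inverter for $\bar{g}_K$ on a challenge $y$ by sampling $(s_0, e_0)$ internally and embedding $y$ in either the $c = 0$ or the $c = 1$ branch with a fair coin.

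For approximate $2$-regularity, injectivity of $\bar{g}_K$ already forces $|\bar{f}_K^{-1}(y)| \leq 2$, with at most one preimage per branch. The interesting direction is the lower bound: whenever $(s, e, 0)$ lies in the domain, then $(s - s_0 \bmod q,\, e - e_0,\, 1)$ is a second preimage of $\bar{f}_K(s,e,0)$ exactly when $e - e_0 \in \chi^m$. Since each coordinate of $e$ is uniform in $[-\mu, \mu]$ while $|e_{0,i}| \leq \mu' = \mu/m$, each coordinate remains in $\chi$ after the shift with probability at least $1 - 1/m$; independence across the $m$ coordinates yields success probability at least $(1 - 1/m)^m$, which is bounded below by a positive constant (tending to $1/e$). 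Collision resistance then reduces to one-wayness of $\bar{g}_K$: any collision must cross branches and take the form $(s_1,e_1,0), (s_2,e_2,1)$ with $K(s_1 - s_2) + (e_1 - e_2) = \bar{y}_0 \bmod q$, and injectivity of MP12 at noise level $2\mu$ forces the unique solution $(s_1 - s_2,\, e_1 - e_2) = (s_0, e_0)$, so an efficient collision finder directly inverts $\bar{g}_K$ on $\bar{y}_0$.

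The main technical obstacle will be verifying that the concrete numbers in \eqref{eq:params} satisfy all three quantitative constraints at once: the \LWE{} hardness threshold from \cite{Regev}, the MP12 injectivity bound at the widened noise level $2\mu$ (rather than $\mu$) needed to make the collision-resistance reduction go through, and the constant-probability bound requiring $\mu' \ll \mu$ for the $2$-regularity argument. This is a somewhat delicate but essentially mechanical parameter calculation in the MP12/\LWE{} regime; I would carry it out by plugging \eqref{eq:params} into the MP12 inversion bounds and checking each inequality in turn, deferring to the analogous verification already performed in \cite{qfactory_old}.
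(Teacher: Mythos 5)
The paper does not prove this lemma itself; it imports it verbatim from \cite{qfactory_old} (hence the tag ``from [qfactory\_old]''), so there is no in-paper argument to compare your plan against. That said, your outline is structurally the standard one for this construction: pull one-wayness, injectivity, and the trapdoor from the Micciancio--Peikert family, invert both $y$ and $y - \bar{y}_0$ to recover the two branches, and show that crossing-branch collisions recover $(s_0,e_0)$ via injectivity at noise radius $2\mu$. Two points deserve tightening, one of which is a genuine gap.

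The substantive issue is the $2$-regularity constant. You bound the per-coordinate survival probability of $e_i - e_{0,i}$ by $1 - 1/m$ and then invoke $(1-1/m)^m \to 1/e \approx 0.37$. But the lemma needs the two-preimage probability to be not merely a constant but a constant \emph{strictly greater than} $1/2$: that is what the abort-case machinery in \autoref{sec:qfactory_abort} (the choice of $p_a > 1/2$, and the test $p_c \in (1/2, p_a)$) relies on, and the parameter choice $\mu' = \mu/m$ was made precisely to hit this threshold. The sharper bound is that $e_i$ ranges over $2\mu + 1$ values while $|e_{0,i}| \leq \mu' = \mu/m$, so the per-coordinate failure probability is at most $\mu'/(2\mu+1) < 1/(2m)$, giving $(1 - 1/(2m))^m \to e^{-1/2} \approx 0.61 > 1/2$. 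Your looser factor of $1/m$ loses exactly the margin the lemma is designed to preserve.

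A smaller point on collision resistance: as you note, a crossing-branch collision $(s_1,e_1,0),(s_2,e_2,1)$ gives $(s_1 - s_2, e_1 - e_2) = (s_0, e_0)$, but $\bar{y}_0 = \bar{g}_K(s_0,e_0)$ has $e_0$ drawn from the \emph{narrower} set $\chi'^m$ (noise bound $\mu'$), not the ambient $\chi^m$. The reduction therefore inverts $\bar{g}_K$ on a lower-noise input distribution than the one in the one-wayness definition, and you should say explicitly that \LWE{} with this smaller error bound is still hard for the chosen parameters (it is, since decisional \LWE{} with $q = 2^{5\lceil \log n \rceil + 21}$ and noise $\mu'$ still satisfies the Regev threshold, but the claim should be made rather than assumed). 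With those two fixes the plan does match the argument that \cite{qfactory_old} carries out.
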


\subsection{Construction of $\mathcal{F}$ \label{subsec:cons_f}}

Now, we are able to define the 2 families of functions $\mathcal{G}$, $\mathcal{F}$:
$$g_K : \mathbb{Z}_q^n \times {\chi}^m \times \{0, 1\} \rightarrow \mathbb{Z}_q^m$$
\EQ{
g_K(s,e,d) = \bar{g}_K(s,e) + d \cdot \begin{pmatrix}
  {\frac{q}{2}}\\
  0\\
  \vdots\\
  0\\
\end{pmatrix} \bmod q = 
Ks + e + d \cdot \begin{pmatrix}
  {\frac{q}{2}}\\
  0\\
  \vdots\\
  0\\
\end{pmatrix} \bmod q 
\, \, \, \, \, \, \, \,
\footnote{in this section we will use the notation $v = \begin{pmatrix}
  {\frac{q}{2}}\\
  0\\
  \vdots\\
  0\\
\end{pmatrix}$.}
}
\EQ{
f_K : \mathbb{Z}_q^n \times {\chi}^m \times \{0, 1\} \times \{0, 1\} \rightarrow \mathbb{Z}_q^m \nonumber \\
f_K(s, e, c, d) = Ks + e + c \cdot y_0 + d \cdot v \bmod q
}
where $y_0 = g_K(s_0,e_0,d_0)$, with $s_0$ chosen at random from $\mathbb{Z}_q^n$, $e_0$ chosen at random from ${\chi'}^m$ and $d_0$ chosen at random from $\{0, 1\}$ .  \\ \\

Before defining the function $h$, we must ensure that $g_K$ has the same properties as $\bar{g}_K$ (homomorphic, injective and one-way), which would also imply that $f_K$ has the same properties as $\bar{f}_K$ (quantum-safe, two-regular, trapdoor one-way).

\begin{lemma}
If $\bar{\mathcal{G}}$ is a family of homomorphic functions, then $\mathcal{G}$ is also a family
of homomorphic functions.
\label{lemma:hom}
\end{lemma}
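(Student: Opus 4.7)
The plan is to carry the homomorphism of $\bar{g}_K$ over to $g_K$ by defining a compatible group operation on the enlarged domain $\mathcal{D} = \mathbb{Z}_q^n \times \chi^m \times \{0,1\}$. Since $g_K$ differs from $\bar{g}_K$ only by the extra summand $d \cdot v$, where $v = (q/2, 0, \ldots, 0)^T$, a natural choice is
\[
(s_1, e_1, d_1) +_{\mathcal{D}} (s_2, e_2, d_2) := \bigl((s_1, e_1) +_{\bar{\mathcal{D}}} (s_2, e_2),\ d_1 \oplus d_2\bigr),
\]
where $+_{\bar{\mathcal{D}}}$ is the existing domain operation for $\bar{\mathcal{G}}$ and $\oplus$ denotes addition modulo $2$. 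On the range $\mathcal{R} = \mathbb{Z}_q^m$ we simply reuse $+_{\mathcal{R}} = +_{\bar{\mathcal{R}}}$, i.e.\ componentwise addition modulo $q$.

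Next, I would verify the homomorphism equation directly. Expanding with this definition,
\[
g_K\!\bigl((s_1,e_1,d_1) +_{\mathcal{D}} (s_2,e_2,d_2)\bigr)
= \bar{g}_K\!\bigl((s_1,e_1) +_{\bar{\mathcal{D}}} (s_2,e_2)\bigr) + (d_1 \oplus d_2)\cdot v \bmod q,
\]
and applying the homomorphism of $\bar{g}_K$ (\autoref{eq:hom_g} for $\bar{\mathcal{G}}$) the right-hand side equals $\bar{g}_K(s_1,e_1) + \bar{g}_K(s_2,e_2) + (d_1 \oplus d_2)\cdot v \bmod q$. On the other hand,
\[
g_K(s_1,e_1,d_1) +_{\mathcal{R}} g_K(s_2,e_2,d_2) = \bar{g}_K(s_1,e_1) + \bar{g}_K(s_2,e_2) + (d_1 + d_2)\cdot v \bmod q,
\]
so it suffices to show $(d_1 + d_2)\cdot v \equiv (d_1 \oplus d_2)\cdot v \pmod{q}$.

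The only non-trivial case is $d_1 = d_2 = 1$, which is precisely the subtle point of the proof. Here $(d_1 + d_2)\cdot v = 2v = (q, 0, \ldots, 0)$, while $(d_1 \oplus d_2) \cdot v = 0$. The two agree modulo $q$ because the parameter $q = 2^{5\lceil \log n \rceil + 21}$ chosen in \autoref{lemma:params} is even, so $2 \cdot (q/2) \equiv 0 \pmod{q}$. This is the key (and essentially the only) obstacle; once it is noted that $v$ generates an order-two subgroup of $\mathbb{Z}_q^m$, the XOR structure on the new bit is exactly what the addition on $\mathbb{Z}_q^m$ induces, and the identity follows in all four cases for $(d_1,d_2) \in \{0,1\}^2$. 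Combining this with the previous display closes the proof.
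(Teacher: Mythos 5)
Your proof is correct and follows essentially the same route as the paper's: expand $g_K$ using its definition, apply the homomorphism of $\bar{g}_K$, and reduce to the observation that $d_1\cdot v + d_2\cdot v \equiv (d_1\oplus d_2)\cdot v \pmod q$ because $q$ is even. You are slightly more explicit in spelling out the induced group operation $+_\mathcal{D}$ on the enlarged domain, which the paper leaves implicit, but the key step is the same.
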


\begin{proof}
To prove that $g_K$ is homomorphic, we notice that:
\EQ{
g_K(s_1,e_1,d_1) + g_K(s_2,e_2,d_2)
= \bar{g}_K(s_1,e_1) + d_1 \cdot v + 
\bar{g}_K(s_2,e_2) + d_2 \cdot v \bmod q \nonumber \\
 = \bar{g}_K(s_1 + s_2 \bmod q, e_1 + e_2) + (d_1 + d_2) \cdot v \bmod q \nonumber \\
 = g_K(s_1 + s_2 \bmod q, e_1 + e_2, d_1 \oplus d_2) \nonumber
}
where for the last equality we used the fact that if $d_1, d_2 \in \{0, 1\}$, then  $d_1 \cdot \frac{q}{2} + d_2 \cdot \frac{q}{2} \bmod q = (d_1 \oplus d_2) \cdot \frac{q}{2} \bmod q$. \\

\end{proof}

We make the following remark: the proof of \autoref{lemma:hom} is constructed for the case when $\bar{g}$ is perfectly homomorphic, but it also holds in the case when $\bar{g}$ is homomorphic with high probability, resulting in $g$ being homomorphic with the same high probability.

\begin{lemma}
If $\bar{\mathcal{G}}$ is a family of one-way functions, then $\mathcal{G}$ is also a family
of one-way functions.
\label{lemma:oneway}
\end{lemma}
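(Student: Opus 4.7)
The plan is a direct black-box reduction from one-wayness of $\mathcal{G}$ to one-wayness of $\bar{\mathcal{G}}$. Suppose for contradiction that there exists a QPT adversary $\cA$ that, on input $(K, y')$ with $K \leftarrow \text{Gen}(1^n)$ and $y' = g_K(s,e,d)$ for uniform $(s,e,d) \in \mathbb{Z}_q^n \times \chi^m \times \{0,1\}$, returns a valid preimage with non-negligible probability $\eps(n)$. I build a QPT adversary $\cA'$ that inverts $\bar{g}_K$ with probability at least $\eps(n)/2$, contradicting the assumed one-wayness of $\bar{\mathcal{G}}$.

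On challenge $(K, y)$ with $y = \bar{g}_K(s,e)$ and $(s,e)$ uniform in $\mathbb{Z}_q^n \times \chi^m$, the reduction $\cA'$ samples $d \sample \{0,1\}$ uniformly, sets $y' := y + d \cdot v \bmod q$, runs $(s', e', d') \gets \cA(K, y')$, and outputs $(s', e')$ if $d' = d$ (aborting otherwise). The first key observation is distributional: $y' = g_K(s,e,d)$, and since $(s,e)$ is uniform and $d$ is independently uniform on $\{0,1\}$, the triple $(s,e,d)$ is uniform on the domain of $g_K$. Therefore $(K, y')$ is distributed identically to a genuine $g_K$-challenge, and $\cA$ returns a valid preimage of $y'$ with probability exactly $\eps(n)$. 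When $d' = d$, we get $\bar{g}_K(s', e') = y' - d' \cdot v = y' - d \cdot v = y$, so by injectivity of $\bar{g}_K$ we have $(s', e') = (s,e)$, a correct $\bar{g}_K$-preimage.

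The second key step is to lower-bound the conditional probability $\Pr[d' = d \mid \cA \text{ succeeds}]$ by $1/2$. By injectivity of $\bar{g}_K$, any image $y'$ of $g_K$ has at most two preimages, exactly one per value of the last bit; let $p_0, p_1$ with $p_0 + p_1 = 1$ denote the (adversary-chosen) conditional probabilities that $\cA$ returns the $d' = 0$ or $d' = 1$ preimage, given $(K, y')$ and conditioned on success. If $y'$ has a unique preimage, then $d$ is determined by $(K, y')$ and any successful output $d'$ must match, so the conditional success probability is $1$. If $y'$ has two preimages, then the posterior distribution of $d$ given $(K, y')$ is uniform on $\{0,1\}$ (because $d$ was chosen uniformly and independently of $\cA$'s internal coins, and both preimages induce the same marginal weight on $y'$); consequently $\Pr[d' = d \mid K, y', \cA \text{ succeeds}] = p_0 \cdot \tfrac12 + p_1 \cdot \tfrac12 = \tfrac12$. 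Combining, $\Pr[\cA' \text{ succeeds}] \geq \eps(n)/2$, which is still non-negligible --- the desired contradiction.

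The main thing to be careful about is the distributional equivalence in step two and the independence argument in step three; both rely essentially on the fact that $d$ is sampled uniformly and independently by $\cA'$, which prevents $\cA$ from biasing its choice of preimage against the reduction. Beyond that, the two-to-one structure of $g_K$ relative to $\bar{g}_K$ (inherited from the construction $g_K(s,e,d) = \bar{g}_K(s,e) + d \cdot v$) and the injectivity of $\bar{g}_K$ make the rest of the argument routine. No properties of $\mathcal{G}$ other than those of $\bar{\mathcal{G}}$ and the structural identity above are used.
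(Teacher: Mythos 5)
Your reduction is the same one the paper uses: sample a uniform bit $d$, shift the $\bar{g}_K$-challenge $y$ to $y' := y + d\cdot v$, and feed $y'$ to the $g_K$-inverter. The paper's proof is terse --- it states the reduction and asserts that $\cA'$ inverts $\bar{g}_K$ ``with the same probability $P$'' as $\cA$ inverts $g_K$, without addressing what happens if $\cA$ returns a preimage with $d' \neq d$. Your proof fills that gap explicitly with the check $d' = d$ and the case analysis on the preimage count of $y'$, settling for a factor-of-$2$ loss, which is perfectly adequate. The only thing to note is that in the paper's actual setting the factor of $2$ can be avoided: since $\bar{g}_K$ is injective (an assumption already in force, and the basis of the subsequent \autoref{lemma:injective}), $g_K$ is itself injective, so $y'$ has a \emph{unique} preimage and a successful $\cA$ is forced to return the correct $d' = d$. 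That is the implicit justification for the paper's ``same probability'' claim. Your version only uses injectivity of $\bar{g}_K$ directly (to bound the preimage count by two), which is slightly weaker in assumption and loses a factor of $2$; both are correct and adequate here, and your write-up is the more self-contained of the two.
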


\begin{proof}
To prove the \textbf{one-wayness} of $g$, we are going to reduce it to the one-wayness of $\bar{g}$. \\

Thus, we assume there exists a QPT adversary $\mathcal{A}$ that can invert $g$ with probability $P$ and we construct a QPT adversary $\mathcal{A'}$ inverting $\bar{g}$ with the same probability $P$.

\procedure [linenumbering]{ $Invert_{\mathcal{A}', K}(y) $} {
d \sample \{0,1\} \\
y' \gets y + d \cdot v \\ 
(s', e', d') \gets \mathcal{A}_K(y')\\
\pcreturn (s', e') 
}

\end{proof}

\begin{lemma}
If $\bar{\mathcal{G}}$ is a family of injective functions, then $\mathcal{G}$ is a family
of injective functions.
\label{lemma:injective}
\end{lemma}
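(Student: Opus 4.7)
The plan is to prove injectivity of $g_K$ by contradiction, reducing any hypothetical collision of $g_K$ to a violation of $\bar{g}_K$'s injectivity. Suppose there are distinct tuples $(s_1,e_1,d_1) \neq (s_2,e_2,d_2)$ in the domain of $g_K$ satisfying $g_K(s_1,e_1,d_1)=g_K(s_2,e_2,d_2)$. Unfolding the definition of $g_K$, this rearranges to
\[
\bar{g}_K(s_1,e_1) - \bar{g}_K(s_2,e_2) \equiv (d_2-d_1)\,v \pmod{q}.
\]
I would then split the argument on whether $d_1=d_2$.

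The easy case is $d_1 = d_2$: the right-hand side vanishes, so $\bar{g}_K(s_1,e_1)=\bar{g}_K(s_2,e_2)$; since the $d$'s agree but the full tuples differ, we must have $(s_1,e_1) \neq (s_2,e_2)$, which directly contradicts the injectivity of $\bar{g}_K$ and closes this branch.

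The harder case is $d_1 \neq d_2$ (WLOG $d_1=0, d_2=1$), which reduces to ruling out the shifted relation $\bar{g}_K(s_1,e_1) \equiv \bar{g}_K(s_2,e_2)+v \pmod{q}$. The plan is to open up the concrete LWE form $\bar{g}_K(s,e)=Ks+e \bmod q$ and exploit the special shape $v=(q/2,0,\ldots,0)^\top$. Rewriting as $K(s_1-s_2)+(e_1-e_2) \equiv v \pmod q$ and reading off the first coordinate, the entries of $(e_1-e_2)$ are bounded by $2\mu$ while $v_1 = q/2$; the parameter choices in \autoref{lemma:params} enforce $2\mu \ll q/2$, so the only way the equation can hold is if the bounded-distance decoding underlying $\bar{g}_K$'s injectivity forces a unique short pre-image for $v$. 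One can then use the homomorphic rewriting from \autoref{lemma:hom} to package the shift as a single $\bar{g}_K$-image whose pre-image cannot simultaneously lie in $\mathbb{Z}_q^n \times \chi^m$ and coincide with both $(s_1,e_1)$ and $(s_2,e_2)$, yielding the desired contradiction.

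The main obstacle is precisely this second case: injectivity of $\bar{g}_K$ on $\mathbb{Z}_q^n \times \chi^m$ does not, on its own, forbid a shifted relation where the offset $v$ has an unusual coordinate shape, so the argument must invoke the concrete LWE structure rather than treat $\bar{g}_K$ as a black box. The crux is the gap between $q/2$ and $2\mu$ from \autoref{lemma:params}, which turns what would otherwise be a potentially valid small-error equation into an impossible one.
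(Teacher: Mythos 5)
Your case split matches the paper exactly, and the easy case ($d_1=d_2$) is handled correctly: the $v$ terms cancel, and injectivity of $\bar g_K$ immediately forces $(s_1,e_1)=(s_2,e_2)$, contradicting distinctness. The first case needs nothing more.

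The gap is in the second case. You write that "reading off the first coordinate, the entries of $(e_1-e_2)$ are bounded by $2\mu$ while $v_1=q/2$," and treat the size gap $2\mu < q/2$ as though it were already close to a contradiction. It is not: the first-coordinate equation is
$K_{1,\cdot}(s_1-s_2)+(e_{1,1}-e_{2,1}) \equiv q/2 \pmod q$, and the term $K_{1,\cdot}(s_1-s_2)$ ranges over all of $\mathbb{Z}_q$, so the smallness of $e_{1,1}-e_{2,1}$ by itself forbids nothing. Your subsequent appeal to "bounded-distance decoding underlying $\bar g_K$'s injectivity" and a "homomorphic rewriting" from \autoref{lemma:hom} does not supply the missing step — \autoref{lemma:hom} is about the homomorphicity of $g_K$, not about injectivity of $\bar g_K$, and BDD language does not by itself eliminate the $Ks$-contribution. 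What the paper actually does, and what your sketch is missing, is to first kill that term: delete the first row of $K$ to obtain a matrix $K_1 \in \mathbb{Z}_q^{(m-1)\times n}$; since the injectivity proof of \cite{MP2012} holds for any $m=\Omega(n)$, the restricted map $\bar g_{1,K_1}(s,e) = K_1 s + e$ is still injective. Applying this to equations $i=2,\dots,m$ (where the right-hand side is $0$) forces $s_1=s_2$ and $e_{1,i}=e_{2,i}$ for $i\geq 2$. Only \emph{after} $s_1=s_2$ has been established does the first coordinate collapse to $e_{1,1}-e_{2,1}=q/2$, and only then does the size bound $|e_{1,1}-e_{2,1}|<2\mu<q/2$ from \autoref{lemma:params} give the contradiction. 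You correctly identified that the argument "must invoke the concrete LWE structure," but the concrete ingredient you need is the row-deletion-preserves-injectivity observation, not a decoding heuristic.
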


\begin{proof}
To prove this, we will use the injectivity property of the function $\bar{g}$.
\EQ{
g_K(s, e, d) = \bar{g}_K(s, e) + d \cdot v \nonumber
}
Assume there exist 2 tuples $(s_1, e_1, d_1)$ and $(s_2, e_2, d_2)$ such that $g(s_1, e_1, d_1) = g(s_2, e_2, d_2)$. To prove that $g$ is injective we must show that $(s_1, e_1, d_1) = (s_2, e_2, d_2)$. \\
 This is equivalent to:
\EQ{
 \bar{g}_K(s_1, e_1) + d_1 \cdot v = \bar{g}_K(s_2, e_2) + d_2 \cdot v \nonumber \\
 \bar{g}_K(s_1, e_1) - \bar{g}_K(s_2, e_2) + (d_1 - d_2) \cdot v = 0 
}
Now, if $d_1 = d_2$, then we have that $\bar{g}(s_1, e_1) - \bar{g}(s_2, e_2) = 0$, and because $\bar{g}$ is injective, this would imply that $s_1 = s_2$ and $e_1 = e_2$ and thus $g$ would also be injective. \\
Let us suppose that $d_1 \neq d_2$ and we will prove that this is impossible. Without loss of generality we can assume that $d_1 = 0$ and $d_2 = 1$. \\

Thus, we want to show that it is impossible to have $(s_1, e_1)$ and $(s_2, e_2)$ such that:
\EQ{
\bar{g}_K(s_1, e_1) - \bar{g}_K(s_2, e_2) = v
}
Equation is equivalent to:
\EQ{
Ks_1 + e_1 - Ks_2 - e_2 = \begin{pmatrix}
  {\frac{q}{2}}\\
  0\\
  \vdots\\
  0\\
\end{pmatrix}
\bmod q
\label{eq:inj}
} 

This can be rewritten as:
\EQ{
K_{1, 1}(s_{1,1} - s_{2,1})+ ... + K_{1, n}(s_{1, n} - s_{2, n}) + (e_{1, 1} - e_{2, 1}) = \frac{q}{2} \bmod q \label{eq:first_line_sys}
}
\EQ{
K_{i, 1}(s_{1,1} - s_{2,1})+ ... + K_{i, n}(s_{1, n} - s_{2, n}) + (e_{1, i} - e_{2, i}) = 0 \bmod q  \forall i = 2,..,m \label{eq:rest_sys}
}
where $s_{1, i}$ and $s_{2, i}$ are the $i$-th bits of $s_1$ and $s_2$ respectively and $e_{1, i}$ and $e_{2, i}$ are the $i$-th bit of $e_1$, respectively $e_2$. \\

Now, as the function $\bar{g}_K: \mathbb{Z}_q^n \times {\chi}^m \times \{0, 1\} \rightarrow \mathbb{Z}_q^m $, where $K \leftarrow \mathbb{Z}_q^{m \times n}$ is injective, the following function is also injective: \\
$\bar{g_1}_{K_1}: \mathbb{Z}_q^n \times {\chi}^{m-1} \times \{0, 1\} \rightarrow \mathbb{Z}_q^{m-1} $, where $K_1 \leftarrow \mathbb{Z}_q^{m - 1 \times n}$ and where $\bar{g}$ and $\bar{g_1}$ have the exact same definition: \\
$$\bar{g_1}_{K_1}(s, e) = K_1s + e \bmod q$$
More specifically, the only difference between the 2 functions is the change of dimension from $m$ to $m - 1$, but as the injectivity proof from \cite{MP2012} holds for any $m = \Omega(n)$, then $\bar{g_1}$ is also injective. 

Now, consider the matrix $K_1$ obtained from $K$ by removing the first line. As shown above,  $\bar{g_1}_{K_1}$ is an injective function, thus from \autoref{eq:rest_sys}, we get that:
\EQ{
s_1 &=& s_2 \\
e_{1, i} &=& e_{2, i} \, \, \forall \,  i = 2,..,m
}
Now as $s_1 = s_2$, from \autoref{eq:first_line_sys}, we obtain $e_{1, 1} - e_{2, 1} = \frac{q}{2}$.\\ However, from the domain of $\bar{g}$, we have that: $|e_{1, 1} - e_{2, 1}| < 2\mu < \frac{q}{2}$ (where for the last inequality we used Lemma \ref{lemma:params}). Contradiction.

\end{proof}

\subsection{Construction of function $h$ \label{subsec:cons_h}}

At this stage, we can define $h$:
\EQ{
h: \mathbb{Z}_q^n \times \Z_q^m &\times& \{0, 1\} \rightarrow \{0, 1\} \nonumber \\
h(s, e, d) &=& d
}
\begin{theorem}
The function $h$ is homomorphic and hardcore predicate (with respect to the one-way function $g$). 
\label{thm:h_hom_hp}
\end{theorem}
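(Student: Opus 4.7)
The plan is to prove the two claims separately.

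For homomorphism, I would directly unpack the group operation on $\mathcal{D} = \mathbb{Z}_q^n \times \chi^m \times \{0,1\}$ that was exhibited in the proof of \autoref{lemma:hom}: addition is coordinate-wise with $(s_1,e_1,d_1) +_{\mathcal{D}} (s_2,e_2,d_2) = (s_1+s_2,\,e_1+e_2,\,d_1 \oplus d_2)$. Since $\oplus$ is self-inverse on $\{0,1\}$, the inverse operation satisfies $(s_2,e_2,d_2) -_{\mathcal{D}} (s_1,e_1,d_1) = (s_2-s_1,\,e_2-e_1,\,d_1 \oplus d_2)$. Applying $h$, which simply projects onto the third coordinate, gives $h((s_2,e_2,d_2) -_{\mathcal{D}} (s_1,e_1,d_1)) = d_1 \oplus d_2 = h(s_1,e_1,d_1) \oplus h(s_2,e_2,d_2)$, which is exactly the identity required by \autoref{eq:hom_h}.

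For the hardcore-predicate claim, I would reduce to \LWE{}. Suppose for contradiction there is a QPT adversary $\mathcal{A}$ that, given $K$ and $y = g_K(s,e,d)$ for $(s,e,d)$ drawn uniformly from the domain, outputs $d$ with probability $1/2 + \varepsilon$ for non-negligible $\varepsilon$. I would construct a distinguisher $\mathcal{B}$ for \LWE{} as follows: on input $(K,y)$, where $y$ is either $Ks+e$ with $s$ uniform in $\mathbb{Z}_q^n$ and $e$ drawn from $\chi^m$, or uniform in $\mathbb{Z}_q^m$, sample $d' \in \{0,1\}$ uniformly, form $y' = y + d' \cdot v \bmod q$, run $d'' \leftarrow \mathcal{A}(K, y')$, and output $1$ iff $d'' = d'$.

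The analysis splits on the two worlds. In the \LWE{} world, $y' = Ks + e + d' \cdot v = g_K(s,e,d')$ is distributed exactly as an honest evaluation of $g_K$ on uniform input with $d'$ uniform in $\{0,1\}$, so by hypothesis $\Pr[d'' = d'] = 1/2 + \varepsilon$. In the uniform world, $y'$ is uniform in $\mathbb{Z}_q^m$ and independent of $d'$ (since $q$ is even, translation by $v$ is a bijection of $\mathbb{Z}_q^m$), so $\mathcal{A}$'s output is independent of $d'$ and $\Pr[d'' = d'] = 1/2$. Thus $\mathcal{B}$ distinguishes \LWE{} with advantage $\varepsilon$, contradicting the worst-case $\SIVP{}_\gamma$ hardness assumption under which the parameters of \autoref{lemma:params} are chosen.

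The main subtlety to watch is matching distributions on $K$: I need the key output by $\text{Gen}_{\mathcal{G}}$ (which embeds a trapdoor) to be at least statistically close to a uniform matrix in $\mathbb{Z}_q^{m \times n}$, as required by the \LWE{} game. For the \cite{MP2012} construction this is standard, but it is the only step where I rely on something beyond plain algebraic manipulation. Once this is invoked, the rest of the argument is syntactic: the crucial identity $g_K(s,e,0) + d' \cdot v = g_K(s,e,d')$ turns the extra bit into a perfect one-time pad in the uniform world and into an honest hardcore challenge in the \LWE{} world, which is what makes the reduction tight.
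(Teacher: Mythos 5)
Your proof is correct and follows the same underlying route as the paper's: for homomorphy you observe that $h$ is the projection onto the third coordinate and that the group law acts coordinate-wise with $\oplus$ on the bit; for the hardcore claim both you and the paper reduce to decisional $\LWE{}$ via the observation that $Ks+e$ and $Ks+e+v$ are each indistinguishable from uniform, so the bit $d$ is hidden. The difference is one of rigor rather than method: the paper merely asserts that unpredictability of $d$ is ``equivalent'' to indistinguishability of $D_1=\{K,Ks+e\}$ and $D_2=\{K,Ks+e+v\}$, relegating the reduction to a footnote, whereas you spell out the explicit distinguisher $\mathcal{B}$ (sample $d'$, form $y'=y+d'v$, check whether $\mathcal{A}$ recovers $d'$) and verify the advantage computation in both worlds. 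You also flag a genuine subtlety the paper glosses over: the $\LWE{}$ challenge requires $K$ uniform, whereas $\text{Gen}_{\mathcal{G}}$ samples $K$ together with a trapdoor, so one must invoke the statistical closeness of trapdoored keys to uniform keys (which holds for the \cite{MP2012} construction). Your version is therefore a tighter, more self-contained rendering of the same argument.
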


\begin{proof}

We can easily notice the homomorphicity property of the defined function $h$:
\EQ{
h(s_1, e_1, d_1) \oplus h(s_2, e_2, d_2) = d_1 \oplus d_2 \nonumber \\
= h(s_1 + s_2 \bmod q, e_1 + e_2 \bmod q, d_1 \oplus d_2)
}
To prove that $h$ is a hardcore predicate of $g_K$, we must prove that for any QPT adversary $\mathcal{A}$, we have:
\EQ{
 &\Pr_{\substack{s \sample \mathbb{Z}_q^n \\ e \sample \chi^m \\ d \sample \{0, 1\}}}  [\mathcal{A}(1^{\lambda}, K, g_K(s, e, d)) = h(s, e, d)] \leq \frac{1}{2} + negl(\lambda) 
}
Using the definitions of the 2 functions, we can express it as:
\EQ{
 &\Pr_{\substack{s \sample \mathbb{Z}_q^n \\ e \sample \chi^m \\ d \sample \{0, 1\}}}  [\mathcal{A}(1^{\lambda}, K, Ks + e + d \cdot v)  = d] \leq \frac{1}{2} + negl(\lambda) 
}
This is equivalent to prove that the distributions $D_1 = \{K, Ks + e\}$ and $D_2 = \{K, Ks + e + v\}$ are indistinguishable\footnote{It is also easy to write an explicit reduction}, or equivalently:
\begin{equation}\{ K_i, \langle K_i, s \rangle + e_i \}_{i = 1}^{m} \overset{c}{\approx} \{ K_i, \langle K_i, s \rangle + e_i + v_i \}_{i = 1}^{m}  
\label{eq:indisting}
\end{equation}
Using the decisional LWE assumption we already know that when $u_i$ are uniform chosen from $\mathbb{Z}_q$, we have \footnote{this holds because the parameters (fully given in \cite[Lemma 6.9]{qfactory_old}) of the function are chosen to make $y_0$ indistinguishable from a random vector by a direct reduction to LWE}:
\EQ{
\{ K_i, \langle K_i, s \rangle + e_i \}_{i = 1}^{m} \overset{c}{\approx}  \{ K_i, u_i\}_{i = 1}^{m}
}
Then, as $v$ is a fixed vector, we also have that:
\EQ{
\{ K_i, \langle K_i, s \rangle + e_i + v_i \}_{i = 1}^{m} \overset{c}{\approx}  \{ K_i, u_i\}_{i = 1}^{m} \overset{c}{\approx} \{ K_i, \langle K_i, s \rangle + e_i \}_{i = 1}^{m}
}
which completes the proof.

\end{proof}

\section{The Malicious 8-states QFactory Protocol \label{sec:8_states}}

In order to use the Malicious 4-states QFactory Protocol functionality for applications such as blind quantum computing \cite{ubqc}, we need to be able to produce states taken from the set $\{\Ket{+_{\theta}}, \theta \in \{0, \frac{\pi}{4}, ..., \frac{7\pi}{4}\}\}$, always ensuring that the bases of these qubits remain hidden. Here we prove how by obtaining two states of Malicious 4-states QFactory Protocol, we can obtain a single state from the 8-states set, while no information about the bases of the new output state is leaked.

To achieve this, we need to find an operation, that in the honest case maps the correct inputs to the outputs, in such a way, that the index of the output state corresponding to the basis, is directly related with the bases bits of the input states. This relation should be such that learning anything about the basis of the output state implies learning non-negligible information for the input. This directly means, that any computationally bounded adversary that breaks the 8-states basis blindness of the output, also breaks the 4-states basis blindness of at least one of the inputs.\footnote{Here it is worth pointing out that a similar result (in a more complicated method) was achieved in \cite{vedran_elham}. That technique however, is applied in the information theoretic setting.}

\begin{algorithm}[H]
\caption{Malicious 8-states QFactory} \label{protocol:QFactory4to8}
\textbf{Requirements:} Same as in Protocol \autoref{protocol:qfactory_real}\\
\textbf{Input:} Client runs 2 times the algorithm ${Gen}_{\mathcal{F}}(1^n)$, obtaining $(k^1, t^1_k), (k^2, t^2_k)$. Client keeps $t_k^1, t_k^2$ private.\\ 
\textbf{Protocol:}\\
-- Client: runs the Malicious 4-states QFactory algorithm to obtain a state $\ket{\quin_1}$ and a "rotated" Malicious 4-states QFactory to obtain a state $\ket{\quin_2}$ (by a rotated Malicious 4-states QFactory we mean a Malicious 4-states QFactory, but where the last set of measurements in the $\Ket{\pm}$ basis (\autoref{fig:circuitserver}) is replaced by a set of measurements in the $\Ket{\pm_{ \frac{\pi}{2}} }$ basis). \\ 
-- Client: records measurement outcomes $(y^1,b^1)$, $(y^2, b^2)$ and computes and stores the corresponding  indices of the output states of the 2 Malicious 4-states QFactory runs: $(B_1, B_2)$ for $\ket{\quin_1}$ and $(B_1', B_2')$ for $\ket{\quin_2}$  \\
-- Client: instructs the server to apply the Merge Gadget (\autoref{fig:gadget}) on the states $\ket{\quin_1}$, $\ket{\quin_2}$. \\
-- Server: returns the 2 measurement results $s_1$, $s_2$. \\
-- Client: using $(B_1, B_2)$, $(B_1', B_2')$, $s_1$, $s_2$ computes the index $L = L_1L_2L_3 \in \{0, 1\}^3$ of the output state. \\
\textbf{Output:} If the protocol is run honestly, the state that the Server has produced is:
\EQ{
\ket{\quout} =  X^{{(s_2 + B_2) \cdot B_1}} Z^{B_2' + B_2(1 - B_1) + B_1[s_1 + (s_2 + B_2)B_1']} R(\frac{\pi}{2})^{B_1} R(\frac{\pi}{4})^{B_1'} \Ket{+}
}

\end{algorithm}

\subsection{Correctness of Malicious 8-states QFactory\label{sec:correct_qfac_2_1}}

We prove the existence of a mapping $\mathcal{M}$ (which we will call Merge Gadget),  from 2 states $\ket{\quin_1}$ and $\ket{\quin_2}$, where $\ket{\quin_1} \in \{\Ket{0}, \Ket{1}, \Ket{+}, \Ket{-} \}$ and $\ket{\quin_2} \in \{\Ket{+}, \Ket{-}, \Ket{+_y}, \Ket{-_y} \}$ to a state $\ket{\quout}  = \Ket{+_{L \cdot \frac{\pi}{4}}}$, where $L = L_1L_2L_3 \in \{0, 1\}^3$. \\
Namely, as defined in Protocol \ref{protocol:QFactory4to8}, $\mathcal{M}$ is acting in the following way:
\EQ{
\mathcal{M}(\ket{\quin_1}, \ket{\quin_2}) = M_{X, 2} M_{X, 1}{\wedge Z}_{2,3} {\wedge Z}_{1,2} \left[ \Ket{+_{\frac{\pi}{4}}} \otimes \ket{\quin_1} \otimes \ket{\quin_2} \right]
}

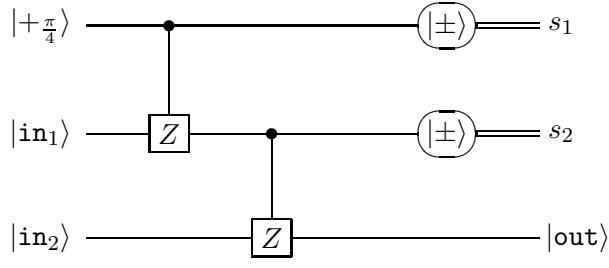
\begin{figure}[ht!]
\[
\Qcircuit{
	& \lstick{\ket{+_{\frac{\pi}{4}}}} &  \ctrl{1}  & \qw  & \qw  & \measure{\ket{\pm}} & \cw \, \, \,  \, \, \,  \, \, \,  s_1
    \\
    & \lstick{\ket{\quin_1}} &     \gate{Z}  & \ctrl{1} &  \qw  & \measure{\ket{\pm}} & \cw \, \, \,  \, \, \,  \, \, \,  s_2
 \\
    & \lstick{\ket{\quin_2}} &                \qw      & \gate{Z}  & \qw & \qw & \qw \, \, \,  \, \, \,  \, \, \, \, \, \, \, \, \, \,  \ket{\quout}
}
\]

\caption{Merge Gadget}
\label{fig:gadget}
\end{figure}

\begin{theorem}
In an honest run, the Output state of the Malicious 8-states QFactory Protocol is of the form $\Ket{+_{L \cdot \frac{\pi}{4}}}$, where $L = L_1L_2L_3 \in \{0, 1\}^3$.

\label{thm:correct_qfac_2_1}
\end{theorem}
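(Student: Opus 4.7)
The plan is to verify by direct computation that applying the Merge Gadget to the honest inputs $\ket{+_{\pi/4}}_1 \otimes \ket{\quin_1}_2 \otimes \ket{\quin_2}_3$ leaves qubit $3$ in a state of the form $\ket{+_{L \cdot \pi/4}}$ for some three-bit string $L$. The first step is to fix the honest form of the two QFactory inputs. By \autoref{thm:correctness} we have $\ket{\quin_1} = H^{B_1}X^{B_2}\ket{0}$. Repeating the same derivation for the rotated QFactory --- where the only difference is that the final $X$-basis measurements in \autoref{fig:circuitserver} are replaced by $\ket{\pm_{\pi/2}}$-basis measurements, which injects an extra $(-i)$-phase per $\ket{1}$-component of the measured registers --- shows that in an honest run $\ket{\quin_2}$ can be written as $\ket{\quin_2} = Z^{B_2'}R(\pi/2)^{B_1'}\ket{+}$, i.e.\ one of the four states $\{\ket{+},\ket{-},\ket{+_y},\ket{-_y}\}$, for bits $B_1', B_2'$ determined by the Client's records.

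Next I trace the state through the gadget. Writing $\ket{\quin_1} = a\ket{0} + b\ket{1}$ and applying $\wedge Z_{1,2}$ to the first two qubits yields $a\ket{+_{\pi/4}}_1\ket{0}_2 + b\ket{+_{5\pi/4}}_1\ket{1}_2$ (the $\ket{1}$-branch of the target picks up a $\pi$-phase on the ancilla because of the $Z$ action). Subsequently applying $\wedge Z_{2,3}$ and then measuring qubits $1$ and $2$ in the $\ket{\pm}$ basis with outcomes $s_1, s_2$, using the identity $\langle s|_X\ket{+_\alpha} = \tfrac12(1+(-1)^{s}e^{i\alpha})$, gives that the state of qubit $3$ (up to a normalisation constant) equals
\EQ{
\tfrac{1}{2}\bigl(\alpha_+\,a\,I \;+\; (-1)^{s_2}\,\alpha_-\,b\,Z\bigr)\,\ket{\quin_2}, \qquad \alpha_\pm := \tfrac12\bigl(1\pm(-1)^{s_1}e^{i\pi/4}\bigr).\nonumber
}

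A short case analysis on $B_1$ then finishes the proof. When $B_1 = 0$, so $(a,b) = (1-B_2,\, B_2)$, the prefactor collapses to a scalar times $Z^{B_2}$, and the output is $Z^{B_2}\ket{\quin_2}$, which still lies in the $8$-state set. When $B_1 = 1$, using $\alpha_+ + \alpha_- = 1$ and $\alpha_+ - \alpha_- = (-1)^{s_1}e^{i\pi/4}$, one rewrites the prefactor (up to a global phase) as a Pauli byproduct times $R(\pi/4)\,Z^{s_1}$; combining this with the canonical form of $\ket{\quin_2}$ and the identities $R(\alpha)Z = ZR(\alpha)$, $XR(\alpha)X = e^{i\alpha}R(-\alpha)$, $X\ket{+_\theta} = e^{i\theta}\ket{+_{-\theta}}$, and $Z\ket{+_\theta} = \ket{+_{\theta+\pi}}$ re-expresses the output as $\ket{+_{L \cdot \pi/4}}$ for some $\mathbb{Z}_2$-linear function $L = L_1L_2L_3$ of the bits $(B_1, B_2, B_1', B_2', s_1, s_2)$.

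The main technical obstacle is the careful bookkeeping of global phases, both the $(-i)$-phases arising in the rotated QFactory analysis and the ones that appear when commuting $R(\pi/4)$ past Pauli byproducts. The cleanest way to organise the calculation is to view the Merge Gadget as a small three-qubit measurement-based quantum computing pattern and to apply the standard gate-teleportation identities in sequence; this automatically presents the output as a Clifford-plus-$T$ unitary applied to $\ket{+}$, guaranteeing that it lies in $\{\ket{+_{L \cdot \pi/4}}\}_{L\in\{0,1\}^3}$.
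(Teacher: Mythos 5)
Your proposal is correct and takes essentially the same approach as the paper: plug the honest forms $\ket{\quin_1}=H^{B_1}X^{B_2}\ket{0}$ and $\ket{\quin_2}=R(\pi/2)^{B_1'}Z^{B_2'}\ket{+}$ into the Merge Gadget and compute directly that the post-measurement output of qubit~$3$ is $\ket{+_{L\pi/4}}$. The only difference is one of exposition --- the paper compresses the entire gate-teleportation calculation into the single closed-form expression $\ket{\quout}=R\bigl[\pi(B_2'+B_2+B_1(s_1+s_2))+\tfrac{\pi}{2}(B_1'+(B_2+s_2)B_1)+\tfrac{\pi}{4}B_1\bigr]\ket{+}$ and then reads off $L_1,L_2,L_3$, whereas you trace the intermediate state $\tfrac{1}{\sqrt{2}}\bigl(\alpha_+ a\, I+(-1)^{s_2}\alpha_- b\, Z\bigr)\ket{\quin_2}$ explicitly and finish by a case split on $B_1$; your version is more detailed but arrives at the same place.
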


\begin{proof}
In an honest run, using the Merge Gadget (\autoref{fig:gadget}) we get:
\EQ{
\mathcal{M}(\ket{\quin_1}, \ket{\quin_2}) = M_{X, 2} M_{X, 1}{\wedge Z}_{2,3} {\wedge Z}_{1,2} \left[ \Ket{+_{\frac{\pi}{4}}} \otimes \ket{\quin_1} \otimes \ket{\quin_2} \right]
}

Using the correctness of Malicious 4-states QFactory (Thereom \ref{thm:correctness}), we have that:
\EQ{
\ket{\quin_1} = H^{B_1}X^{B_2}\Ket{0} \\
\ket{\quin_2} = R(\frac{\pi}{2})^{B_1'}Z^{B_2'}\Ket{+}
}
Thus:
\EQ{
\ket{\quout} = M_{X, 2} M_{X, 1}{\wedge Z}_{2,3} {\wedge Z}_{1,2} \left[ \Ket{+_{\frac{\pi}{4}}} \otimes H^{B_1}Z^{B_2}\Ket{+} \otimes R(\frac{\pi}{2})^{B_1'}Z^{B_2'}\Ket{+} \right]
}
Which is then equivalent to:

\EQ{
\ket{\quout} = R[\pi(B_2' + B_2 + B_1 \cdot (s_1 + s_2)) + \frac{\pi}{2}(B_1' + (B_2 + s_2) \cdot B_1) + \frac{\pi}{4} B_1 ] \Ket{+}
}

As a result, we obtain:

\EQ{
L_1 &=& B_2' \oplus B_2 \oplus [B_1 \cdot (s_1 \oplus s_2)] \nonumber \\
L_2 &=&  B_1' \oplus [(B_2 \oplus s_2) \cdot B_1] \label{eq:expressions_l_2}\\
L_3 &=&  B_1
\label{eq:expressions_l_3}
}

\end{proof}

It can also be noticed that, in an honest run of Malicious 8-states QFactory, the client can efficiently determine $L$: using $b^1, b^2, y^{1}, y^{2}$ and the trapdoors $t_k^{1}, t_k^{2}$, he first obtains $(B_1, B_2)$ and $(B_1', B_2')$, and after receiving $s_1, s_2$, he determines the description of the state prepared by the server.

\subsection{Security against Malicious Adversaries of Malicious 8-states QFactory \label{sec:security_qfac_2_1}}

In any run of the protocol, honest or malicious, the state that the client believes that the server has, is given by \autoref{thm:correct_qfac_2_1}. \\
Therefore, as in the case of Malicious 4-states QFactory, the task that a malicious server wants to achieve, is to be able to guess, as good as he can, the index of the output state that the client thinks the server has produced. In particular, in our case, the server needs to guess the bits $L_2$ and $L_3$ (corresponding to the basis) of the (honest) output state.

\begin{definition}[8 states basis blindness]\label{def:8basisblind}
  Similarly, we say that a protocol $(\pi_A, \pi_B)$ achieves \textbf{basis-blindness} with respect to an ideal set of 8 states $S$ if:
  \begin{itemize}
  \item $S$ is the set of states that the protocol outputs, i.e.:
    \[\Pr[\ket{\phi} \in S \mid (\_,\ket{\phi}) \leftarrow (\pi_A \| \pi_B) ]=1\]
  \item and if no information is leaked about the index bit $B_1B_2$ of the output state of the protocol, i.e:
  \[\forall \cA, \Pr[ B_1 = \tilde{B_1} \text{ and } B_2 = \tilde{B_2} \mid (B_1B_2B_3, \tilde{B_1}\tilde{B_2}) \leftarrow (\pi_A \| \cA)] \leq 1/4 + \negl \]
  \end{itemize}
\end{definition}

\begin{theorem}\label{thm:security_qfac_2_1}
Malicious 8-states QFactory satisfies $8$-state basis blindness with respect to the ideal set of states $\{\Ket{+}, \Ket{+_{\frac{\pi}{4}}}, .., \Ket{+_{\frac{7\pi}{4}}}\}$.
\end{theorem}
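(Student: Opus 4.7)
The plan is to reduce $8$-state basis blindness to the $4$-state basis blindness of the two underlying QFactory runs, which is guaranteed (for both the standard and the rotated $4$-state QFactory, since the security proof of Theorem~\ref{thm:security_qfac_2_0} is insensitive to the final measurement basis) by Theorem~\ref{thm:security_qfac_2_0}. The central observation is the pair of closed-form identities from Theorem~\ref{thm:correct_qfac_2_1},
\[
  L_3 = B_1, \qquad L_2 = B_1' \oplus \bigl[(B_2 \oplus s_2)\cdot B_1\bigr],
\]
which shows that the output basis bits $(L_2,L_3)$ are essentially $(B_1',B_1)$ up to an invertible linear mask built from quantities that the Client (or any reducer holding the corresponding trapdoor) can reconstruct.

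I would suppose for contradiction that a QPT adversary $\mathcal{A}$ wins the $8$-state basis blindness game with non-negligible advantage $\nu$, i.e.\ $\Pr[(\tilde L_2,\tilde L_3)=(L_2,L_3)] \geq 1/4 + \nu$, and set $p := \Pr[\tilde L_3 = L_3]$. First I would build a reduction $\mathcal{B}_1$ attacking basis blindness of the first input: $\mathcal{B}_1$ receives a challenge key $k^1$ from its $4$-state challenger, samples $(k^2,t^2_k)$ itself, and simulates the $8$-state Client toward $\mathcal{A}$, relaying $k^1$ to $\mathcal{A}$ and the Server response $(y^1,b^1)$ back to its own challenger, while honestly executing the second $4$-state module and the merge gadget using $t^2_k$. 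Upon receiving $\mathcal{A}$'s output $(\tilde L_2,\tilde L_3)$, $\mathcal{B}_1$ outputs $\tilde L_3$ as its guess for $B_1$. Because $\mathcal{A}$'s view is identically distributed to a real $8$-state execution and $L_3 = B_1$, $\mathcal{B}_1$ wins with probability exactly $p$, so Theorem~\ref{thm:security_qfac_2_0} forces $p \leq 1/2 + \negl$.

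Symmetrically, I would build a reduction $\mathcal{B}_2$ attacking basis blindness of the (rotated) second input: $\mathcal{B}_2$ samples $(k^1,t^1_k)$ itself, runs the first $4$-state module with $\mathcal{A}$ and recovers $(B_1,B_2)$ via $t^1_k$, then relays its challenge $k^2$ and response $(y^2,b^2)$ to its own challenger, also sending the merge-gadget instructions to $\mathcal{A}$ and recording $(s_1,s_2)$. Given $\mathcal{A}$'s output $(\tilde L_2,\tilde L_3)$, it outputs the guess $\tilde B_1' := \tilde L_2 \oplus [(B_2\oplus s_2)\cdot B_1]$ whenever $\tilde L_3 = B_1$, and a fresh uniform bit otherwise. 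Using $B_1 = L_3$ together with the identity for $L_2$, in the first branch the guess equals $B_1'$ exactly when $\tilde L_2 = L_2$, so
\[
  \Pr[\tilde B_1' = B_1'] \;\geq\; \Pr[(\tilde L_3,\tilde L_2)=(L_3,L_2)] + \tfrac{1}{2}\Pr[\tilde L_3 \neq L_3] \;\geq\; (1/4+\nu) + \tfrac{1-p}{2}.
\]
Combining with the bound $p \leq 1/2 + \negl$ obtained above yields $\Pr[\tilde B_1' = B_1'] \geq 1/2 + \nu - \negl/2$, contradicting basis blindness of the rotated $4$-state QFactory and closing the contradiction.

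The main subtlety I expect to have to verify carefully is that in both reductions $\mathcal{A}$'s view is \emph{statistically identical} to its view in a real $8$-state execution; this holds because in each reduction one $4$-state module is executed by the external challenger with honest freshly sampled secrets and the other by the reducer with its own honestly sampled secrets, while the merge gadget is nothing more than public classical instructions applied to publicly prepared ancillas. Everything else in the argument is just linear bookkeeping on top of the expressions in Theorem~\ref{thm:correct_qfac_2_1}.
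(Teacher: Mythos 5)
Your proof is correct, and it takes a genuinely different route from the paper's. The paper introduces three probabilities $P_2$, $P_3$, $P_\oplus$ (of $\mathcal{A}$ guessing $L_2$, $L_3$, and $L_2\oplus L_3$ respectively), invokes a separate helper lemma (\autoref{lem:know_ab_a_b_or_xor} in the appendix) to argue that at least one of them exceeds $\tfrac12 + \tfrac1\poly$, and then builds a \emph{non-uniform} reduction $\mathcal{A}'$ that case-splits according to which of the three is largest (targeting the first run when $P_3$ dominates, the second when $P_2$ or $P_\oplus$ dominates). You instead give a two-stage uniform argument: $\mathcal{B}_1$ unconditionally bounds $p=\Pr[\tilde L_3=L_3]\le\tfrac12+\negl$, and $\mathcal{B}_2$ hedges with a fresh coin whenever $\tilde L_3\neq B_1$, which yields the clean identity $\Pr[\tilde B_1'=B_1']=\Pr[(\tilde L_2,\tilde L_3)=(L_2,L_3)]+\tfrac12\Pr[\tilde L_3\neq L_3]$ and thus advantage at least $\tfrac12+\nu-\tfrac{\negl}{2}$. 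Both arguments hit the same two targets (the plain module for $B_1$, the rotated module for $B_1'$). What your version buys is the elimination of both the helper lemma and the case-split on unknown probabilities, and it implicitly absorbs the paper's $P_\oplus$ case into the hedging step of $\mathcal{B}_2$. Your remark that basis blindness of the rotated $4$-state QFactory follows by the same argument as \autoref{thm:security_qfac_2_0} (because that proof never touches the choice of final measurement basis, which only enters $B_2$) is accurate and is also silently assumed by the paper when it invokes \autoref{thm:security_qfac_2_0} for $B_1'$.
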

\begin{proof}

Firstly, we use \autoref{thm:security_qfac_2_0} to obtain that $B_1$ and $B_1'$ are completely hidden from any adversary. \\

We prove the following: \\
If there exists a QPT adversary $\mathcal{A}$ that is able to break the 8-states basis blindness property of Malicious 8-states QFactory (determine the indices $L_2$ and $L_3$ with probability $\frac{1}{4} + \frac{1}{poly_1(n)}$ for some polynomial function $poly_1$), then we can construct a QPT adversary $\mathcal{A}'$ that can break the 4-states basis blindness of the Malicious 4-states QFactory protocol (determine the basis bit with probability $\frac{1}{2} + \frac{1}{poly_2(n)}$, for some polynomial $poly_2(\cdot)$). \\

The input to $\mathcal{A'}$ should be consisting only of the $\mathcal{F}$ family index, $k$, and the description of $h$. Next we show how to construct $\mathcal{A'}$ to determine the corresponding index $B_1$ or $B_1'$ of the output state (of one of the 2 Malicious 4-states QFactory runs), by using as a subroutine $\mathcal{A}$ that acts as follows: receives as input 2 function indices $k^{(1)}$ and $k^{(2)}$, runs Malicious 8-states QFactory and then $\mathcal{A}$ is able to output the correct basis bits $L_2$ and $L_3$, with probability $\frac{1}{4} + \frac{1}{poly_1(n)}$.

Before we describe $\mathcal{A}'$, we need to define the following 3 values: \\
$P_2$ = probability that $\mathcal{A}$ guesses correctly $L_2$; \\
$P_3$ = probability that $\mathcal{A}$ guesses correctly $L_3$; \\
$P_{\oplus}$ = probability that $\mathcal{A}$ guesses correctly $L_2 \oplus L_3$; \\

Now, given that $\mathcal{A}$ is able to produce both $L_2$ and $L_3$ with probability $\frac{1}{4} + \frac{1}{poly_1(n)}$, this implies that $max(P_2, P_3, P_{\oplus}) \geq \frac{1}{2} + \frac{1}{poly_2(n)}$ for some polynomial $poly_2(\cdot)$ (see proof in \autoref{app:simple_lemmata}). \\
We will construct $\mathcal{A}'$ such that if $P_3$ is the maximum, then $\mathcal{A}'$ can determine $B_1$ (break the basis blidness of the first Malicious 4-states QFactory run) and if  $P_2$ or $P_{\oplus}$ is the maximum, then $\mathcal{A}'$ can determine $B_1'$ (break the basis blidness of the the second ''rotated'' Malicious 4-states QFactory run). \\

\procedure [linenumbering]{$\mathcal{A'}(k, h, 1^n)$} {
\pccomment{$k^{(1)}$ corresponds to the input for the first run of Malicious 4-states QFactory - with the output index $(B_1, B_2)$, } \\
\pccomment{while $k$ corresponds to the input for the second run - with the output index $(B_1', B_2')$} \\
(k^{(1)}, t_k^{(1)}) \sample {Gen}_{\mathcal{F}}(1^n) \\
\pccomment{As the probability P of successfully guessing $L_2$ and $L_3$ is $\frac{1}{4} + \frac{1}{poly_1(n)}$ } \\
\pccomment{We know that $max(P_2, P_3, P_{\oplus}) \geq \frac{1}{2} + \frac{1}{poly_2(n)}$}\\
\pcif P_3 = max(P_2, P_3, P_{\oplus}) \\
\pccomment{we break the basis-blindness of the first Malicious 4-states QFactory by determining $B_1$} \\
\pcind (y^{(1)}, y^{(2)}, b^{(1)}, b^{(2)}, s_1, s_2), (\tilde{L_2}, \tilde{L_3}) \gets \mathcal{A}(k, k^{(1)}, h) \\
\pcind \pccomment{$(y^{(1)}, y^{(2)}, b^{(1)}, b^{(2)}, s_1, s_2)$ represents the classical communication received from $\mathcal{A}$} \\
\pcind \pccomment{during the run of Malicious 8-states QFactory, and $(\tilde{L_2}, \tilde{L_3})$ - are the guesses of $\mathcal{A}$ for the indices of the outcome} \\
\pcind \tilde{B_1} \gets \tilde{L_3} \\
\pcind \pcreturn \tilde{B_1} \pccomment{as $B_1 = 1 \oplus L_3$ as seen in Eq. \ref{eq:expressions_l_3} and we have success probability $\geq \frac{1}{2} + \frac{1}{poly_2(n)}$}\\
\pcelse \\
\pccomment{we break the basis-blindness of the second Malicious 4-states QFactory by determining $B_1'$} \\
\pcind (y^{(1)}, y^{(2)}, b^{(1)}, b^{(2)}, s_1, s_2), (\tilde{L_2}, \tilde{L_3}) \gets \mathcal{A}(k^{(1)}, k, h) \\
\pcind \pccomment{$(y^{(1)}, y^{(2)}, b^{(1)}, b^{(2)}, s_1, s_2)$ represents the classical communication received from $\mathcal{A}$} \\
\pcind \pccomment{during the run of Malicious 8-states QFactory, and $(\tilde{L_2}, \tilde{L_3})$ - are the guesses of $\mathcal{A}$ for the indices of the outcome} \\
\pcind (z^{(1)}, z'^{(1)}) \gets Inv_{\mathcal{F}}(y^{(1)}, t_k^{(1)}) \\
\pcind B_1 \gets h(z^{(1)}) \oplus h(z'^{(1)}) \\
\pcind B_2 \gets [b^{(2)}_n + \sum (z^{(2)}_i \oplus z'^{(2)}_i) \cdot b^{(2)}_i] B_1 + h(z^{(2)})(1 - B_1) \bmod 2  \\
\pcind \pcif P_2 = max(P_2, P_3, P_{\oplus}) \\
\pcind \pcind \pccomment{Then $B_1' =  L_2 \oplus B_1 \cdot (B_2 \oplus s_2)$ as seen in Eq. \ref{eq:expressions_l_2}} \\
\pcind \pcind \tilde{B_1'} \gets \tilde{L_2} \oplus [B_1 \cdot (B_2 \oplus s_2)] \\
\pcind \pcind \pcreturn \tilde{B_1'} \\
\pcind \pcif P_{\oplus} = max(P_2, P_3, P_{\oplus}) \\
\pcind \pcind \tilde{B_1'} \gets \tilde{L_2} \oplus \tilde{L_3} \oplus B_1 \oplus [B_1 \cdot (B_2 \oplus s_2)] \\
\pcind \pcind \pcreturn \tilde{B_1'} \\
}

\end{proof}

\section{Malicious-abort 4-states QFactory: treating the abort case}\label{sec:qfactory_abort}

In this section, we will discuss an extension of Malicious 4-states QFactory, whose aim is to achieve basis blindness even against adversaries that could try to exploit the fact that Malicious 4-states QFactory can abort when there is only one preimage associated to the $y$ returned by the server. One may think that we could just send back this $\accept/\abort$ bit to the server, but unfortunately it could leak additional information on the hardcore bit $d_0$ (which corresponds to the basis $B_1$ of the produced qubit) to the server, and from an information theory point of view, as soon as the probability of acceptance is small enough, we cannot guarantee that this bit remains secret. On the other hand, for honest servers, the probability of aborting is usually non-negligible, so we cannot neglect this case.

We stress out that it is also possible to guarantee that for honest servers this probability goes negligibly close to 1 by making an appropriate choice of parameters for the function, and in that case the initial protocol of Malicious QFactory defined \autoref{sec:noabortprotocol} that outputs a random state on Alice's side when $y$ does not have two preimages is secure, but this comes (as far as we know), at the cost of using a function with is ``less'' secure. More specifically, instead of having a reduction to $\GapSVP$ with a polynomial $\gamma$, the reduction usually goes to $\GapSVP$ with a super-polynomial $\gamma$. Such function parameters has been used implicitly in other works \cite{urmila_qfhe} (\cite{brakerski_qfhe} later removed this assumption), and for now they are believed to be secure (the best known polynomial algorithm cannot break $\GapSVP$ with a $\gamma$ smaller than exponential), but these assumptions are usually not widely accepted in the cryptography community, and that's why we aim to remove this non-standard assumption.

The solution we propose in this section makes the assumption that the classical Yao's XOR Lemma also applies for one-round protocols (with classical messages) against quantum adversary. This lemma roughly states that if you cannot guess the output bit of one round with probability better than $\eta$, then it's hard to guess the output bit of $t$ independents rounds with probability much better than $1/2 + \eta^t$. As far as we know, this lemma as been proven only in the classical case (see \cite{Goldreich2011} for a review of this theorem as well as the main proof methods), and some works \cite{VW08} even extend this lemma to protocols, and also to the quantum setting \cite{Sherstov2010, Klauck2004}. Unfortunately these last works focus mostly on communication and query complexity, and are not really usable in our case.

Note also that we are also working on some other proof methods to get rid of this last assumption and to improve the efficiency of the protocol by avoiding repetition (see \autoref{sec:discussion_qfactory_abort} for more details).

In the following, we will call ``accepted run'' a run of Malicious 4-states QFactory such that the received $y$ from the server has 2 preimages (``probability of success'' also refers to the probability of this event when the server is honest), and otherwise we call it an ``aborted run''.

\subsection{The Malicious-Abort 4-state QFactory Protocol}

In a nutshell, the solution we propose is to run several instances of Malicious 4-states QFactory, by remarking that we do not need to discard the aborted runs. Indeed, it is easy to see that in these cases, the produced qubits will always be in the same basis (denoted by 0). The idea is then to implement on the server side a circuit that will output a qubit having as basis the XOR of all the basis of the accepted runs (without even leaking which runs are accepted or not), and check on client's side that the number of accepted runs is high enough (this will happen with probability exponentially close to 1 for honest servers). If it is the case, the client will just output the XOR of the basis of the accepted run, and otherwise (i.e. if the server is malicious), he will just pick a random bit value.

Unfortunately, in practice things are a bit more complicated in order to be able to write the proof of security, and we need to divide all the $t$ runs into $n_c$ ``chunks'' of size $t_c$, and test them individually. Here is a more precise (but still high level) description of the protocol and proof's ideas:
\begin{itemize}
\item firstly, we run $t = n_c \times t_c$ parallel instances of Malicious 4-states QFactory, without revealing the abort bit for any of these instances;
\item then the key point to note is that for honest servers, if $y_i$ has only one preimage then the output qubit produced by the server at the end of the protocol will be either $\ket{0}$ or $\ket{1}$, but cannot be $\ket{+}$ or $\ket{-}$ (with one preimage we do not have a superposition). In other words, the basis is always the $\{\ket{0},\ket{1}\}$ basis (denoted as $0$) so we do not really need to abort. Therefore, at the end, (for honest runs) the basis of the output qubits will be equal for all $i \in [\![0,t]\!]$ to $\beta_i = d_{0,i} \cdot a_i$, where $a_i = 1$ iff $y_i$ has two preimages, and $a_i = 0$ otherwise. Of course, this distribution will be biased against $0$, but it is not a problem. See \autoref{lem:aborted_useful} for proof.
\item then, it also appears that from $t$ qubits in the basis $\beta_1,\dots, \beta_t$, we have a way to produce a single qubit belonging to the set $\{\ket{0}, \ket{1}, \ket{+}, \ket{-}\}$ whose basis $B_1$ is the XOR of the basis of the $t$ qubits, i.e. $B_1 = \oplus_{i=1}^t \beta_i$ (see \autoref{lem:circuit_gadget_xor}).
\item Then, the client will test every chunk, by checking if the proportion of accepted runs in every chunk is greater than a given value $p_c$. If all chunks have enough accepted runs, then the client just computes and outputs the good value for the basis (which is the XOR of the hardcore bit of all the accepted runs) and value bits. However, if at least one chunk does not have enough accepted runs (which shouldn't happen if the server is honest), then the client just outputs random values for the basis and value bit, not correlated with server's qubit (it's equivalent to say that a malicious server can always through away the qubit and pick a new qubit not correlated with client's one).
\item Correctness: if the probability to have two preimages for an honest server is at least a constant $p_a$ greater than $1/2$ (the parameters we proposed in \cite{qfactory_old} have this property), and if $t$ is chosen high enough, the fraction of accepted runs will be close to $p_a$, and we can show that the probability to have a fraction of accepted runs smaller than a given constant $p_b < p_a$ is exponentially (in $t$) close to 0 (cf \autoref{lem:proba_success_honest_factory2.2_chunk}). So with overwhelming probability, all the chunks will have enough accepted runs, i.e. honest servers will have a qubit corresponding to the output of the client.
\item Soundness: to prove the security of this scheme, we first prove \autoref{lem:one_chunk_pretty_diff} that it is impossible for any adversary to guess the output of one chunk with a probability bigger than a constant $\eta < 1$ (otherwise we have a direct reduction that breaks the hardcore bit property of $g_K$). Now, using the quantum version of Yao's XOR Lemma that we conjecture at \autoref{conj:quantumyaoxorlemma}, we can deduce that no malicious server is able to guess the XOR of the $t_c$ chunks/instances with probability better than $1/2 + \eta^{t_c}+\negl$, which goes negligibly close to $1/2$ when $t_c = \Omega(n)$.
\end{itemize}
So putting everything together, the parties will just run $t = n_c \cdot t_c$ Malicious 4-states QFactory in parallel, the client will then check if $\sum_i a_i$ is higher than $p_c \cdot t_c$ for all the $n_c$ chunks, and if so he will set $B_1 = \oplus_{i = 1}^{t} d_i \cdot a_i$ (server has a circuit to produce a qubit in this basis as well). Otherwise $B_1$ will be set to a uniformly chosen random bit (it is equivalent to say that a malicious server can destroy the qubit, and this is also unavoidable even with a real quantum communication), and we still have correctness with overwhelming probability for honest clients.

For the exact algorithm see \autoref{protocol:qfactory_abort_real}, and go to \autoref{thm:malicious_abort_secure_correct} for the theorem/proof of security (some of the proofs have been moved to appendix \autoref{sec:proof_malicious_appendix}).

\subsection{Correctness and security of Malicious-Abort 4-state QFactory}
Now, we will formalize and prove the previous statements.

\begin{conjecture}[Yao's XOR Lemma for one-round protocols (classical messages) against quantum adversary]\label{conj:quantumyaoxorlemma}~\\
  Let $n$ be the security parameter, let $f_n: \cX_n \times \cY_n \rightarrow \{0,1\}$ be a (possibly non-deterministic) family of functions (usually not computable in polynomial time), and let $\chi_n$ be a distribution on $\cX_n$ efficiently samplable. If there exists $\delta(n)$ such that $|\delta(n)| \geq \frac{1}{\poly}$ and such that for all polynomial (in $n$) quantum adversary $\cA_n: \cX_n \rightarrow \cY_n \times \{0,1\}$,
  \[\Pr\left[ \tilde{\beta} = f_n(x,y) \mid (y,\tilde{\beta}) \leftarrow \cA_n(x), x \leftarrow \chi_n\right] \leq 1 - \delta(n)\]
  then, for all $t \in \N^*$, there is no polynomial quantum adversaries $\cA_n': \cX_n^{t} \rightarrow \cY_n^{t} \times \{0,1\}$ such that
  \[\Pr\left[ \tilde{\beta} = \bigoplus_{i=1}^t f_n(x_i,y_i) \mid (y_1,\dots,y_t,\tilde{\beta}) \leftarrow \cA_n'(x_1,\dots,x_t), x_1 \leftarrow \chi_n, \dots, x_t \leftarrow \chi_n\right] \geq \frac{1}{2} + (1 - \delta(n))^t + \negl\]
\end{conjecture}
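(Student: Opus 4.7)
My plan is to proceed by induction on $t$, following the template of the classical Yao's XOR Lemma while being careful about the quantum aspects of the adversary. The base case $t=1$ is essentially the hypothesis. For the inductive step the overall target is to lift the classical hardcore-lemma route: if the single-instance hypothesis says no QPT adversary predicts $f_n(x,y)$ with probability above $1-\delta(n)$, then there should exist a ``hardcore'' distribution $H_n$ on $\cX_n$ (of density roughly proportional to $\delta(n)$ inside $\chi_n$) on which every QPT adversary does essentially no better than random guessing. Given such a hardcore distribution, one draws $t$ independent samples from $\chi_n$: with probability at least $1-(1-\delta(n))^t$ at least one coordinate lies in the hardcore support, and conditioned on this event the corresponding summand of the XOR is essentially unpredictable even given the other $t-1$ samples as side information, so the XOR is essentially uniform; the complementary event contributes at most $(1-\delta(n))^t$ to the distinguishing advantage, giving the claimed $1/2 + (1-\delta(n))^t + \negl$ bound.

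The hard part will be proving Impagliazzo's hardcore lemma for uniform QPT adversaries producing classical outputs from classical inputs. Impagliazzo's original boosting argument constructs a ``weighted'' super-adversary that calls a hypothetical distinguisher over the hardcore distribution as a subroutine and multiplicatively reweights it; in the quantum setting this super-adversary must be coherent, and one has to check that the invocations of the sub-adversary do not require measure-and-rewind primitives or non-black-box access to its internal state. The Nisan--style min-max proof avoids constructivity but its linear-programming duality assumes a cleanly parametrised adversary class, which is subtle for uniform QPT algorithms. My first attempt would be to adapt the quantum boosting techniques underlying the XOR theorems of Klauck and Sherstov cited in the paper, exploiting the fact that with purely classical messages the external behaviour of the adversary is a classical probabilistic mixture, so the quantumness only needs to be tracked inside the circuit whose single-shot success probability is being bounded.

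As a fallback I would try a direct hybrid reduction. Given a QPT $\cA_n'$ with $1/\poly$ advantage on the $t$-fold XOR, define hybrids $H_i$ replacing the last $t-i$ summands by fresh uniform bits; averaging produces an index where the advantage drops by at least $1/(t\cdot \poly)$, and one plants the challenge $x^*$ at the transition coordinate and simulates the other coordinates with $x_j \leftarrow \chi_n$. The unavoidable obstacle here is exactly the one that blocks a naive classical proof: the reduction must supply $f_n(x_j,y_j)$ at the simulated positions without being able to compute $f_n$ itself; the classical workarounds (sampling-conditional arguments, or Goldreich--Levin / inner-product style extraction when $f_n$ has appropriate structure) need to be re-verified in the quantum setting with particular care that the auxiliary quantum state of $\cA_n'$ cannot in general be rewound to explore multiple completions of the simulated inputs. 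I expect the inner-product specialization to be the most likely to go through, since it reduces the problem to a form where the missing $f_n$-values can be recovered from a single measurement of a well-chosen observable rather than repeated invocations.
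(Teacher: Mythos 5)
The statement you were asked to prove is explicitly labelled a \emph{conjecture} in the paper: the authors do not supply a proof, and they explicitly remark that ``as far as we know, this lemma has been proven only in the classical case,'' citing surveys and partial results but leaving the quantum one-round version open. So there is no paper proof against which your attempt can be aligned; the relevant question is whether your sketch constitutes a complete proof, and it does not.

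Your proposal is a thoughtful survey of the plausible attack routes and, to your credit, it accurately identifies the genuine obstructions in each. The hardcore-lemma route is the standard classical path, and the issue you flag --- whether Impagliazzo's reweighting/boosting argument or the Nisan min--max argument survive when the adversary class is uniform QPT and the subroutine may carry quantum auxiliary state that cannot be rewound or cloned --- is exactly why the quantum hardcore lemma for this setting is not a known theorem. Likewise, your fallback direct hybrid is the argument that is well known to \emph{fail} for the XOR lemma even classically (the reduction must supply $f_n$-values at the non-challenge coordinates, which is precisely what makes Yao's lemma non-trivial), and you correctly acknowledge this. What is missing, in short, is any actual proof of the central lemma on which your inductive step rests: you defer the quantum hardcore lemma as ``the hard part,'' propose to ``adapt'' cited techniques without carrying out the adaptation, and close with an expectation that an inner-product specialization ``is the most likely to go through.'' None of these steps is executed, and in the absence of a quantum hardcore lemma for uniform QPT adversaries (or an alternative argument that bypasses it), the proof does not close. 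The honest conclusion is that you have reproduced, at a slightly higher resolution, the same open-problem status the authors themselves announce by calling the statement a conjecture.
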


\begin{lemma}[Aborted runs are useful]\label{lem:aborted_useful}
  If $\piAFourWeak$ and $\piBFourWeak$ are following the Malicious 4-states QFactory protocol honestly, and if $y$ has not 2 preimages, then the output qubit produced by $\piBFourWeak$ is in the basis $\{\ket{0}, \ket{1}\}$.
\end{lemma}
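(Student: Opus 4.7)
The plan is to trace the Malicious 4-states QFactory circuit step by step under the assumption that the image $y$ measured on the second register has a unique preimage $x = (z,c) \in \mathcal{D}\times\{0,1\}$, and to show that the remaining measurements in the $\ket{\pm}$ basis can only introduce a global phase, leaving the target qubit in a computational basis state.

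Concretely, I would start from the state obtained after applying $U_{f_k}$, namely $\sum_{x \in \mathcal{D}\times\{0,1\}}\ket{x}\ket{f_k(x)}$, and perform the measurement of the image register first. Under the hypothesis that $y$ has exactly one preimage $x=(z,c)$, the post-measurement state of the preimage register is the single computational basis vector $\ket{z}\ket{c}$ (and not a two-term superposition as in the correctness proof of Theorem~\ref{thm:correctness}). One then applies $\widetilde{U}_h$, which copies $h(z)$ into the fresh ancilla initially in $\ket{0}$: the joint state of (output qubit, control register) becomes $\ket{h(z)} \otimes \ket{z}\ket{c}$, a product of computational basis states.

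The last step is to measure the $n$ qubits of the preimage register in the $\{\ket{+},\ket{-}\}$ basis. Because each of these qubits is in a computational basis eigenstate $\ket{z_i}$ (or $\ket{c}$ for the last qubit), each measurement yields a uniformly random outcome $b_i\in\{0,1\}$, and its only effect on the remainder of the state is to multiply it by the global phase $(-1)^{z_i b_i}$ (respectively $(-1)^{c b_n}$). Since $\ket{h(z)}$ is untouched by these measurements, the final state of the output qubit is exactly $\ket{h(z)}\in\{\ket{0},\ket{1}\}$, which is by definition in the basis $\{\ket{0},\ket{1}\}$. I do not expect any real obstacle here: the argument is purely a bookkeeping exercise using the fact that the $\widetilde{U}_h$ gate acts on a non-superposed input, and that $X$-basis measurements on eigenstates of $Z$ produce only global phases. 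The only mild subtlety is that one must match this with the parametrization of \autoref{thm:correctness} and observe that, in the single-preimage case, the exponents of $H$ vanish so that the formula degenerates to a pure computational basis state, consistent with the explicit trace above.
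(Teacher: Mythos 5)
Your proof is correct and follows essentially the same route as the paper's: both trace the circuit under the single-preimage hypothesis, observe that after $\widetilde{U}_h$ the output ancilla holds the unentangled computational-basis state $\ket{h(z)}$, and conclude that the subsequent $X$-basis measurements cannot move it out of the $\{\ket{0},\ket{1}\}$ basis. The only (harmless) difference is that the paper also makes explicit why ``not 2 preimages'' forces ``exactly one preimage'' (since $f_k$ has at most two preimages and $y\in\Im f_k$), whereas you assume uniqueness directly; you are in turn slightly more explicit about the global-phase effect of the $\ket{\pm}$-basis measurements.
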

\begin{proofatend}
  The function $f_k$ cannot have more than two preimages by assumption, and in the Malicious 4-states QFactory protocol the output $y$ is in the image of $f_k$. So it means that $y$ has exactly one preimage $x$. So after measuring the last register, the states will be in the state $\ket{0} \otimes \ket{x} \otimes \ket{y}$. Then, we apply $U_h$, so the states becomes $\ket{d} \otimes \ket{x} \otimes \ket{y}$ with $d \in \{0,1\}$. We remark that the first qubit is not entangled with the measured qubits, so the output qubit will be $\ket{d}$, which is indeed in the basis $\{\ket{0}, \ket{1}\}$.
\end{proofatend}

\begin{figure}[ht!]
  \[
    \Qcircuit{
 & \lstick{\ket{+_{\pi/2}}} & \ctrl{1} & \qw      & \qw      & \qw      & \qw      & \qw & \qw                 & \measure{\ket{\pm}} & \cw \, \, \,  \, \, \,  \, \, \,  s_{1,1} \\
 & \lstick{\ket{\quin_1}}   & \gate{Z} & \qw      & \qw      & \ctrl{6} & \qw      & \qw & \qw                 & \measure{\ket{\pm}} & \cw \, \, \,  \, \, \,  \, \, \,  s_{1,2} \\
 & \lstick{\ket{+_{\pi/2}}} & \qw      & \ctrl{1} & \qw      & \qw      & \qw      & \qw & \qw                 & \measure{\ket{\pm}} & \cw \, \, \,  \, \, \,  \, \, \,  s_{2,1}                       \\
 & \lstick{\ket{\quin_2}}   & \qw      & \gate{Z} & \qw      & \qw      & \ctrl{4} & \qw & \qw                 & \measure{\ket{\pm}} & \cw \, \, \,  \, \, \,  \, \, \,  s_{2,2}                       \\
 & \lstick{\vdots\quad}     &          &          & \vdots   &          &          & \vdots                                                                                  \\
 & \lstick{\ket{+_{\pi/2}}} &          &          & \ctrl{1} & \qw      & \qw      & \qw & \qw                 & \measure{\ket{\pm}} & \cw \, \, \,  \, \, \,  \, \, \,  s_{t,1} \\
 & \lstick{\ket{\quin_t}}   &          &          & \gate{Z} & \qw      & \qw      & \qw & \ctrl{1}            & \measure{\ket{\pm}} & \cw \, \, \,  \, \, \,  \, \, \,  s_{t,2 }\\
 & \lstick{\ket{+}}         & \qw      & \qw      & \qw      & \gate{Z} & \gate{Z} & \qw & \gate{Z}            & \qw \, \, \,  \, \, \,  \, \, \, \, \, \, \, \, \, \,  \ket{\quout}
    }
  \]
  \caption{The XOR gadget circuit $\gadxor$ (run on server side)}
  \label{fig:circuit_gadget_xor}
\end{figure}

\begin{lemma}[Gadget circuit $\gadxor$ computes XOR]\label{lem:circuit_gadget_xor}
  If we denote by $b_i$ the basis of $\ket{\quin_i}$ (equal to $0$ if the basis is $0/1$, and 1 if the basis is $+/-$), and if we run the circuit $\gadxor$ (inspired by measurement based quantum computing) represented \autoref{fig:circuit_gadget_xor} on these inputs, then basis of $\ket{\quout}$ is equal to $\oplus_{i=1}^t b_i$.
\end{lemma}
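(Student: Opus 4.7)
My plan is to prove the lemma with the stabilizer formalism, which cleanly tracks Pauli observables through Clifford circuits and Pauli-basis measurements. First, I set up the initial stabilizers of the product input state. Each auxiliary $\ket{+_{\pi/2}}_{a_i}$ is the $+1$ eigenstate of $Y$ and so is stabilized by $Y_{a_i}$; each $\ket{\quin_i}$, being a BB84 state with basis bit $b_i$ and value bit $v_i$, is stabilized by $(-1)^{v_i} X_{\quin_i}^{b_i} Z_{\quin_i}^{1 - b_i}$; and $\ket{+}_{\quout}$ is stabilized by $X_{\quout}$.

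Next, I propagate every generator through the CZ layer using the Heisenberg-picture rules $CZ\,(X\!\otimes\! I)\,CZ = X\!\otimes\! Z$, $CZ\,(Y\!\otimes\! I)\,CZ = Y\!\otimes\! Z$, and $CZ\,(Z\!\otimes\! I)\,CZ = Z\!\otimes\! I$. After all the $CZ_{a_i,\quin_i}$ and $CZ_{\quin_i,\quout}$ gates, the generators are $X_{\quout}\prod_i Z_{\quin_i}$ (from the output ancilla), one $Y_{a_i}Z_{\quin_i}$ per auxiliary, and per input either $(-1)^{v_i} Z_{\quin_i}$ (when $b_i=0$) or $(-1)^{v_i} X_{\quin_i} Z_{a_i} Z_{\quout}$ (when $b_i=1$).

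Then I perform the $2t$ measurements of $X$ on all $a_i$'s and $\quin_i$'s one at a time, using the standard stabilizer update rule: pick one anticommuting generator, absorb it into each other anticommuting generator to restore commutation, then replace the chosen one with $(-1)^s$ times the measured Pauli. The decisive per-index computation is
\[ (Y_{a_i}Z_{\quin_i})(X_{\quin_i}Z_{a_i}Z_{\quout}) \;=\; (Y_{a_i}Z_{a_i})(Z_{\quin_i}X_{\quin_i})\,Z_{\quout} \;=\; (iX_{a_i})(iY_{\quin_i})\,Z_{\quout} \;=\; -X_{a_i} Y_{\quin_i} Z_{\quout}, \]
obtained by combining the auxiliary and input generators when $b_i=1$ to produce something commuting with $X_{a_i}$. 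The consequence is that every basis-$1$ input leaves behind exactly one factor of $Z_{\quout}$ in the residual stabilizer on $\ket{\quout}$, whereas every basis-$0$ input contributes none: its own generator $(-1)^{v_i}Z_{\quin_i}$ cancels the corresponding factor in $X_{\quout}\prod_j Z_{\quin_j}$ without ever involving the $\quout$ register.

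Putting everything together, the surviving generator of $\ket{\quout}$ is, up to a global sign depending only on the $v_i$'s and on the measurement outcomes $s_{i,1},s_{i,2}$, equal to $X_{\quout} Z_{\quout}^{\#\{i\,:\,b_i=1\}}$. Since $Z^2=I$, this reduces to $\pm X_{\quout}$ when $\bigoplus_i b_i = 0$ and to $\pm X_{\quout} Z_{\quout} \propto \pm Y_{\quout}$ when $\bigoplus_i b_i = 1$. Reading off the Pauli type of the surviving generator yields the basis of $\ket{\quout}$ exactly as $\bigoplus_{i=1}^t b_i$. The main obstacle I expect is the sign bookkeeping: every $YZ$, $ZX$, $XZ$ contraction contributes an $\pm i$ and every measurement contributes a $(-1)^s$, and these must be collected carefully to show that they merely fix the value bit of $\ket{\quout}$ without affecting the Pauli type that encodes the basis.
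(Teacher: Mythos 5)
Your proof is correct, and it takes a genuinely different route from the paper's. The paper works in the Schr\"odinger picture: it first collapses each ancilla--input pair into a single intermediate state $V_i$ (by applying $CZ_{a_i,\quin_i}$ and measuring $a_i$, noting $V_i$ lands in $\{\ket{+_{\pi/2}},\ket{-_{\pi/2}}\}$ when $b_i=0$ and in $\{\ket{0},\ket{1}\}$ when $b_i=1$), and then proves by induction on $i$ that entangling $V_i$ with the running output qubit via $CZ$ and measuring $V_i$ flips the output basis exactly when $b_i=1$. You instead work in the Heisenberg/stabilizer picture: you propagate the stabilizer generators through the $CZ$ layers and the $X$-measurements, and read the basis of $\ket{\quout}$ off the Pauli type of the single surviving generator on that wire. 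Both rely on the same implicit commutation of the non-overlapping $CZ$s and measurements to reorder operations at will. The advantage of your approach is that it makes it transparent \emph{at a glance} why the sign bookkeeping ($\pm i$ from $YZ$, $ZX$ contractions and $(-1)^{s}$ from measurements) cannot affect the Pauli type and hence cannot affect the basis bit of $\ket{\quout}$ --- the signs are quarantined in the global phase of the generator, which only determines the value bit. The paper's approach is more explicit about the actual output state (which matters elsewhere in \autoref{protocol:qfactory_abort_real}, where the client must also compute $B_2$), but for the specific claim of this lemma --- which is only about the basis --- your argument is cleaner and closes at the point where the Pauli type of $X_{\quout}Z_{\quout}^{\#\{i:b_i=1\}}$ is read off, without needing to fully collect the signs. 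One small stylistic point: after the $CZ$ propagation but before measuring $X_{a_i}$, the generator $(-1)^{v_i}X_{\quin_i}Z_{a_i}Z_{\quout}$ anticommutes with $X_{a_i}$, and you correctly multiply it by the pivot $Y_{a_i}Z_{\quin_i}$; the resulting $\pm X_{a_i}Y_{\quin_i}Z_{\quout}$ still carries an $X_{a_i}$ factor, which you can only drop by multiplying out the freshly-installed generator $(-1)^{s_{i,1}}X_{a_i}$ --- you do this implicitly, but it is worth saying, as it is one more source of $(-1)^{s}$ that you bundle into the "junk phase."
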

\begin{proofatend}
The entire analysis of the circuit will be performed only with respect to the basis of the states of the circuit. Let us first examine the first part of the circuit, where we apply $\wedge Z$ between $\ket{+_{\frac{\pi}{2}}}$ and $\ket{\quin_1} = H^{{B_1}^{(1)}}Z^{{B_2}^{(1)}}$ (with ${B_1}^{(1)}$ the basis of $\ket{\quin_1}$) and then measure the first qubit in the $\ket{\pm}$ basis, and we denote the result state $V_1$. \\
The result of this operation is: \\
- if ${B_1}^{(1)} = 0$, $V_1 = R(\pi(B_2^{(1)} + s_{1, 1} + 1)) \ket{+_{\frac{\pi}{2}}} \in \{ \ket{+_{\frac{\pi}{2}}}, \ket{-_{\frac{\pi}{2}}} \}$ \\
- if ${B_1}^{(1)} = 1$, $V_1 = X^{B_2^{(1)}}\ket{0} \in \{ \ket{0}, \ket{1} \}$ \\
In other words, the state $V_1$ belongs to the basis $\mathcal{B}_0 = \{ \ket{+_{\frac{\pi}{2}}}, \ket{-_{\frac{\pi}{2}}} \}$, if ${B_1}^{(1)} = 0$ and to the basis $\mathcal{B}_1 = \{ \ket{0}, \ket{1} \}$ if ${B_1}^{(1)} = 1$. \\
Now, we can think of the circuit as having $t$ states $V_i \in \{\ket{0}, \ket{1}, \ket{+_{\frac{\pi}{2}}}, \ket{-_{\frac{\pi}{2}}}\}$, where every $V_i$ has the basis ${B_1}^{(i)}$. Then, to compute the output state $\ket{\quout}$ of $\gadxor$, for every $i \in \{1,...,t\}$ we have to apply $CZ$ between $V_i$ and $\ket{+}$ and then measure the first qubit in the $\ket{\pm}$ basis. \\
So let us do this step first for $V_1$. The result is a state $W_1 = X^{s_{1,2}}HV_1$, thus we obtain that: \\
$W_1$ belongs to the basis $\mathcal{B}_0 = \{ \ket{+_{\frac{\pi}{2}}}, \ket{-_{\frac{\pi}{2}}} \}$, if ${B_1}^{(1)} = 0$ and to the basis $\mathcal{B}_2 = \{ \ket{+}, \ket{-} \}$ if ${B_1}^{(1)} = 1$. \\
Next we do the same operations between $V_2$ and $W_1$, the result being a state $W_2$, then between $V_3$ and $W_2$ and so on, therefore, the outcome state is $\ket{\quout} = W_{t}$.\\
We will prove by induction that the state $W_{t} \in \{ \ket{+}, \ket{-}, \ket{+_{\frac{\pi}{2}}}, \ket{-_{\frac{\pi}{2}}} \}$, where the basis of $W_{t}$ is given by $B_1 = {B_1}^{(1)} \oplus ... \oplus {B_1}^{(t)} $. \\
As we have proved already for the basis case $t = 1$, we now prove the induction step.
Suppose that $W_{n} \in \{ \ket{+}, \ket{-}, \ket{+_{\frac{\pi}{2}}}, \ket{-_{\frac{\pi}{2}}} \}$ with basis $B_1 = {B_1}^{(1)} \oplus ... \oplus {B_1}^{(n)}$. \\
To obtain $W_{n+1}$ we have to apply $\wedge Z$ between $V_{n + 1}$ and $W_n$ and then measure the first qubit. Then after computing this, we obtain that the basis of $W_{n+1}$ is $B_1$ if the basis of $V_{n + 1}$ is ${B_1}^{(n + 1)} = 0$ and the basis of $W_{n+1}$ is $1 \oplus B_1$ if the basis of $V_{n + 1}$ is ${B_1}^{(n + 1)} = 1$. In other words, the basis of $W_{n+1}$ is given by $B_1 = {B_1}^{(1)} \oplus ... \oplus {B_1}^{(n)} \oplus {B_1}^{(n + 1)}$, which concludes the proof.
\end{proofatend}

We will now describe here the protocol of Malicious-Abort 4-states QFactory:
\begin{algorithm}[H]
  \caption{Malicious-Abort 4-states QFactory Protocol}\label{protocol:qfactory_abort_real}
  \textbf{Requirements:} \\
  Public: The family of functions $\cF: \cD \rightarrow \cR$ and $h$ described above, such that the probability of having two pre-images for a random image is greater than a constant $p_a > 1/2$.\\
  This protocol is based on some constants $t_c \in \N$ (number of repetitions per chunk), $n_c \in \N$ (number of chunks), $p_a \in (1/2,1]$ (lower bound on the probability to have an accepted run in the honest protocol), $p_c \in (1/2,1] < p_a$ (fractions of the runs per chunk that must be accepted). These constants can be chosen to have overwhelming probability of success for honest players, and negligible advantage for a malicious adversaries trying to guess the basis.\\
\textbf{Stage 1: Run multiple QFactories} \\
-- Client: prepares $t = n_c \times t_c$ public keys/trapdoors: $\left((k^{(i,j)},t_{k^{(i,j)}}) \leftarrow \text{Gen}_{\mathcal{F}}(1^n)\right)_{i \in [\![1,n_c]\!], j \in [\![1,t_c]\!]}$. The Client then sends the public keys $k^{(i,j)}$ to the Server, together with $h$.\\
-- Server and Client: follow \autoref{protocol:qfactory_real} $t$ times, with the keys sent at the step before. Client receives $((y^{(i,j)},b^{(i,j)}))_{i,j}$, and sets for all $i,j$: $a^{(i,j)} = 1$ iff $|f^{-1}(y^{(i,j)})|=2$, otherwise $a^{(i,j)} = 0$, and $B^{(i,j)}_1$ and $B^{(i,j)}_2$ like in \autoref{protocol:qfactory_real} when $a^{(i,j)} = 1$ (otherwise $B^{(i,j)}_1=0$ and $B^{(i,j)}_2=h(f^{-1}(y))$. Server will get $t$ outputs $\ket{\quin_{(i,j)}}$.\\
\textbf{Stage 2: Combine runs and output}\\
-- Server: applies circuit \autoref{fig:circuit_gadget_xor} on the $t$ outputs $\ket{\quin_t}$, and outputs $\ket{\quout}$.\\
-- Client: checks that for all chunks $i \in [\![1,n_c]\!]$ the number of accepted runs is high enough, i.e. $\sum_j a^{(i,j)} \geq p_c t_c$.
\begin{itemize}
\item If at least one chunk does not respect this condition, then picks two random bits $B_1$ (the basis bit) and $B_2$ (the value bit) and outputs $(B_1, B_2)$, corresponding to the description of the BB84 state $H^{B_1}X^{B_2}\ket{0}$.
\item If all chunks respect this condition, then sets $B_1 := \bigoplus_{i,j} B^{(i,j)}_1$ (the final basis is the XOR of all the basis), and $B_2$ will be chosen to match the output of \autoref{fig:circuit_gadget_xor}.
\end{itemize}
\end{algorithm}

\begin{lemma}[Probability of correctness of Malicious-Abort 4-states QFactory for one chunk]\label{lem:proba_success_honest_factory2.2_chunk}
  If the probability to have an accepted run in Malicious 4-states QFactory with honest parties is greater than a constant $p_a > 1/2$, i.e.
  \[\Pr[|f^{-1}_k(y)| = 2 \mid (\piAFourWeak \| \piBFourWeak) ] \geq p_a\]
  (where $\piAFourWeak$ and $\piBFourWeak$ are the honest protocols of Malicious 4-states QFactory) then the probability to have at least $p_b t_c$ accepted runs (with $p_b < p_a$, $p_b$ considered as a constant) is exponentially (in $t_c$) close to 1:
  \[\Pr\left[ \sum_i a_i \geq p_b t_c \mid (\piAFourXORchunk^{t_c} \| \piBFourXORchunk^{t_c})\right] \geq 1 - \frac{1}{e^{2 (p_a - p_b)^2 t_c }}= 1 - \negl[t_c]\]
  (where $\piAFourXORchunk^{t_c}$ and $\piBFourXORchunk^{t_c}$ are the (honest) parties of the \autoref{protocol:qfactory_abort_real} restricted on one chunk of size $t_c$, or, equivalently $t_c$ parallels repetition of the protocol \autoref{protocol:qfactory_real})
\end{lemma}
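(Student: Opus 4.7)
The plan is to apply a standard Hoeffding-style concentration bound to the indicator variables $a_i$. The key observation is that the $t_c$ runs of Malicious 4-states QFactory inside a single chunk are performed in parallel with \emph{independent} keys $(k^{(i,j)}, t_{k^{(i,j)}})$ drawn from independent calls to $\text{Gen}_{\cF}$ and independent honest server randomness, so the random variables $a_1,\ldots,a_{t_c}$ are i.i.d.\ Bernoulli trials with success probability $p_i := \Pr[a_i = 1] \geq p_a$ by hypothesis. In particular $\E\!\left[\sum_i a_i\right] = \sum_i p_i \geq p_a t_c$.

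First, I would explicitly invoke independence: since each run uses freshly sampled $k^{(i,j)}$ and fresh coins, and the event $|f_{k^{(i,j)}}^{-1}(y^{(i,j)})|=2$ depends only on the key and on the server's outcome $y^{(i,j)}$ (which in the honest case is produced by independent measurement outcomes), the family $(a_i)_{i \in [\![1,t_c]\!]}$ is mutually independent. Second, I would bound the probability of the bad event $\sum_i a_i < p_b t_c$ using Hoeffding's inequality for the sum of independent random variables taking values in $[0,1]$:
\[
\Pr\!\left[\sum_i a_i - \sum_i p_i \leq -s\right] \leq \exp\!\left(-\frac{2 s^2}{t_c}\right).
\]
Setting $s = (p_a - p_b) t_c > 0$ and using $\sum_i p_i \geq p_a t_c$, the event $\sum_i a_i < p_b t_c$ implies $\sum_i a_i - \sum_i p_i \leq -(p_a - p_b) t_c$, so
\[
\Pr\!\left[\sum_i a_i < p_b t_c\right] \leq \exp\!\left(-2 (p_a - p_b)^2 t_c\right).
\]
Taking the complement yields the claimed bound, and since $p_a - p_b > 0$ is a constant, the error is negligible in $t_c$.

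I do not expect any real obstacle here: the statement is a textbook Chernoff/Hoeffding application. The only mildly delicate point to articulate cleanly is the independence of the $a_i$'s, which follows from the honest protocol using freshly generated trapdoor functions and independent quantum registers per run; one could alternatively argue it by noting that each $a_i$ is a deterministic function of the independent randomness of the $i$-th run (key generation plus measurement outcomes). No quantum subtleties arise because, conditioned on the classical message $y^{(i,j)}$, the event $a^{(i,j)} = 1$ is entirely classical.
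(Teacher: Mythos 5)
Your proposal is correct and matches the paper's proof essentially step for step: both invoke independence of the $a_i$'s in the honest setting (the paper simply asserts it, you spell out why it holds) and then apply a Hoeffding/Chernoff bound with deviation $(p_a - p_b)t_c$ to obtain $\exp(-2(p_a-p_b)^2 t_c)$. No meaningful difference in approach.
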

\begin{proofatend}
  In the honest case, all runs are independents, so let us define $\{A_i\}_{i=1}^t$ as the (binary) random variables whose values are 1 iff the $i$-th run has two preimages associated with $y_i$. We know that for all $i$, $\E(A_i) \geq p_a > p_b$. So let us define $\eps = \E(A_i) - p_b > p_a - p_b$. Using Chernoff inequality we have
  \[
    \Pr\left[\frac{1}{t} \sum_{i=1}^t A_i < \E(A_i) - \eps\right]
    \leq e^{-2\eps^2 t}
    \leq e^{-2 (p_a - p_b)^2 t}
    =\negl[t]
  \]
  (because $p_a-p_b$ is constant)
\end{proofatend}

\begin{lemma}[Correctness of \autoref{protocol:qfactory_abort_real}]\label{lem:correctness_qfactory_abort}
  The protocol \autoref{protocol:qfactory_abort_real} is correct with overwhelming probability as soon as $t = \poly[n]$ and $t_c = \Omega(n)$, i.e.
  \[\Pr\left[  \ket{\quout} = H^{B_1} Z^{B_2} \mid ((B_1,B_2), \ket{\quout}) \leftarrow (\pi_A \| \pi_B) \right] \geq 1 - \negl\]
\end{lemma}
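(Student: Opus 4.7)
The plan is to combine the three preceding lemmas with a union bound over chunks. First I would argue that in an honest execution of \autoref{protocol:qfactory_abort_real}, each individual call of Malicious 4-states QFactory produces, on the server's side, a qubit whose basis bit $\beta_{i,j}$ equals $d_{i,j} \cdot a_{i,j}$: by \autoref{lem:aborted_useful}, whenever $a_{i,j} = 0$ the output qubit lies in $\{\ket{0}, \ket{1}\}$ (basis $0$), while for $a_{i,j} = 1$ the basis is $h(z^{(i,j)}) \oplus h(z'^{(i,j)}) = d_{i,j}$ as given by \autoref{thm:correctness}. The server then applies the circuit of \autoref{fig:circuit_gadget_xor} to these $t$ qubits, so by \autoref{lem:circuit_gadget_xor} the output $\ket{\quout}$ has basis exactly $\bigoplus_{i,j} \beta_{i,j} = \bigoplus_{i,j} B_1^{(i,j)}$, which is precisely what the client stores as $B_1$ when all chunk checks pass.

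Next I would show that the client passes every chunk check with overwhelming probability. Fix a chunk $i$; \autoref{lem:proba_success_honest_factory2.2_chunk} applied with $p_b = p_c < p_a$ gives that $\sum_j a^{(i,j)} < p_c t_c$ occurs with probability at most $e^{-2(p_a - p_c)^2 t_c}$, which is $\negl[t_c]$, hence $\negl[n]$ under $t_c = \Omega(n)$. A union bound over the $n_c = t/t_c \leq \poly[n]$ chunks bounds the probability that at least one chunk fails by $n_c \cdot e^{-2(p_a-p_c)^2 t_c}$, which is still negligible in $n$.

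Conditioned on all chunk checks passing (an event of overwhelming probability), the client sets $B_1 := \bigoplus_{i,j} B_1^{(i,j)}$, which agrees with the basis of $\ket{\quout}$ by the first paragraph, and sets $B_2$ via a fixed classical formula in the $b^{(i,j)}$, the measurement outcomes $s_{i,j,\cdot}$ of the gadget, and the preimages $(z^{(i,j)}, z'^{(i,j)})$ recovered with the trapdoors, chosen so that $H^{B_1} X^{B_2} \ket{0}$ matches the explicit honest-run expression obtained by propagating \autoref{thm:correctness} through the $\wedge Z$/measurement layers of \autoref{fig:circuit_gadget_xor}. Adding the negligible failure contribution from the chunk check to the (at most zero, in the ideal circuit analysis) correctness deviation conditioned on success yields the claimed overwhelming correctness probability.

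The main obstacle is not the basis bit, which follows cleanly from the two structural lemmas already available, but writing out carefully the honest-run expression for the value bit $B_2$ through the cascade of controlled-$Z$'s and $X$-basis measurements in \autoref{fig:circuit_gadget_xor}, and verifying that the classical formula used by the client indeed reproduces it for every pattern of accepted/aborted runs. This is the same style of bookkeeping as in the proof of \autoref{thm:correct_qfac_2_1} (in particular one needs to track the Pauli byproducts picked up from measuring the intermediate qubits in \autoref{fig:circuit_gadget_xor} conditioned on each $a^{(i,j)}$), but it is the only step where a genuine calculation, rather than a citation of a prior lemma, is required.
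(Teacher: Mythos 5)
Your proof follows the same route as the paper's: invoke \autoref{lem:proba_success_honest_factory2.2_chunk} (with $p_b=p_c$), take a union bound over the polynomially many chunks, and then cite \autoref{lem:circuit_gadget_xor} (together with \autoref{lem:aborted_useful} and \autoref{thm:correctness}) for correctness of the output conditioned on all chunks passing. The only real difference is that you are more explicit than the paper about the value bit $B_2$; the paper sidesteps the bookkeeping you flag by simply \emph{defining} $B_2$ in \autoref{protocol:qfactory_abort_real} to be ``chosen to match the output of \autoref{fig:circuit_gadget_xor}'', so no separate calculation is required there and the concern you raise is dissolved by the protocol's wording rather than by an additional argument.
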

\begin{proofatend}
  The \autoref{lem:proba_success_honest_factory2.2_chunk} gives that the probability to have more than $p_c t_c$ accepted runs for a given chunk is $1-\negl[t_c]$, i.e. if $t_c = \Omega(n)$, this probability is $\negl[n]$. So for $n_c$ chunks, the probability to have one fail is $(1-\negl[n])^{n_c} = 1 - \negl[n]$ as soon as $n_c = \poly[n]$, which is the case because $t = t_c \times n_c = \poly[n]$. Then, when all the chunks are accepted, the correctness of the output values is assured by \autoref{lem:circuit_gadget_xor}.
\end{proofatend}

\begin{definition}
  For all public key $k$ and image $y$, we define $a(k,y) = 1$ iff $|f_k^{-1}(y)| = 2$, and $a(k,y) = 0$ otherwise.\\
  Then, for all $t_c \in \N$ and $p_c \in [0,1]$, we define $\beta_{t_c,p_c}(k^{(1)},\dots,k^{(t_c)}, y^{(1)},\dots,y^{(t_c)})$ as the (randomized) function that outputs a random bit if $\sum_i a(k^{(i)}, y^{(i)}) < p_c \cdot t_c$, and outputs otherwise $\xor_i (a(k^{(i)}, y^{(i)}) \cdot d^{(i)}_{0})$, where $d^{(i)}_0$ is the hardcore bit corresponding to $k^{(i)} := (K^{(i)},g_{K^{(i)}}(z^{(i)}_0))$, i.e. $d^{(i)}_0 = h(z^{(i)}_0)$.
\end{definition}

\begin{lemma}[Solving one chunk is pretty difficult]\label{lem:one_chunk_pretty_diff}
  Let $p_c \in (\frac{1}{2}, 1]$. Then, there exists no polynomial adversary $\cA$ such that
  \[\Pr\left[ \tilde{B}_1 = \beta_{t_c, p_c}(k^{(1)},\dots,k^{(t_c)}, y^{(1)},\dots,y^{(t_c)}) \mid (y^{(1)},\dots,y^{(t_c)},\tilde{B}_1) \leftarrow \cA(k^{(1)},\dots,k^{(t_c)}) \right] > \eta\]
  with $\eta = \frac{1}{2} \left( 1 + \frac{1}{2p_c} \right)$, and the randomness being on the randomness of $\beta$, $\cA$, and on the choice of $(k^{(i)})_i$ and $(y^{(i)})_i$.
\end{lemma}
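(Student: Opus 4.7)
The plan is to derive a contradiction with the hardcore predicate property of $h$ with respect to $g_K$ (\autoref{thm:h_hom_hp}). Assume, for contradiction, a polynomial-time quantum $\cA$ whose success probability at guessing $\beta_{t_c, p_c}$ is strictly greater than $\eta$. I will build a reduction $\cB$ against the hardcore game as follows. Given a challenge $k^* = (K^*, g_{K^*}(z_0^*))$ with secret hardcore bit $d_0^* = h(z_0^*)$, $\cB$ picks $i^* \in \{1, \dots, t_c\}$ uniformly, generates the remaining $(k^{(j)}, t_{k^{(j)}}) \leftarrow \text{Gen}_{\cF}(1^n)$ for $j \neq i^*$ and records their hardcore bits $d_0^{(j)}$, plants $k^{(i^*)} := k^*$, and runs $\cA$ on $(k^{(1)}, \dots, k^{(t_c)})$ to obtain $(y^{(1)}, \dots, y^{(t_c)}, \tilde{B}_1)$. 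Using the known trapdoors, $\cB$ computes $a^{(j)} := a(k^{(j)}, y^{(j)})$ for $j \neq i^*$ and outputs
\[\tilde{d}_0^* := \tilde{B}_1 \oplus \bigoplus_{j \neq i^*} a^{(j)} d_0^{(j)}.\]

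The analysis hinges on case-splitting on the event C2 (i.e., $\sum_j a^{(j)} \geq p_c t_c$) and on $a^{(i^*)} \in \{0,1\}$. The key algebraic identity is that conditional on $\text{C2}$ together with $a^{(i^*)} = 1$ we have $\beta_{t_c, p_c} = d_0^* \oplus \bigoplus_{j \neq i^*} a^{(j)} d_0^{(j)}$, so $\tilde{d}_0^* = d_0^*$ holds exactly on the event $E := \{\tilde{B}_1 = \beta_{t_c, p_c}\}$. On the complementary bad subevents (C1, or C2 with $a^{(i^*)} = 0$), the value $\beta_{t_c, p_c}$ does not involve $d_0^*$ at all, and a standard distinguisher argument exploiting the LWE-indistinguishability derived in the proof of \autoref{thm:h_hom_hp} shows that $\tilde{d}_0^*$ agrees with $d_0^*$ on each of these subevents only with probability $\tfrac{1}{2} \pm \negl$. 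Finally, because $i^*$ is uniform and on C2 the accepted fraction is at least $p_c$, symmetry gives $\Pr[a^{(i^*)} = 1 \mid \text{C2}] \geq p_c$, hence $p_0 := \Pr[\text{C2}, a^{(i^*)} = 0] \leq 1 - p_c$.

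Writing $p_1 := \Pr[\text{C2}, a^{(i^*)} = 1]$ and $\mathrm{bias} := \Pr[E \mid \text{C2}, a^{(i^*)} = 1] - \tfrac{1}{2}$, the law of total probability with the trivial bounds $\Pr[E \mid \text{C1}] = \tfrac{1}{2}$ and $\Pr[E \mid \text{C2}, a^{(i^*)} = 0] \leq 1$ yields
\[\Pr[E] \;\leq\; \tfrac{1}{2} + \tfrac{p_0}{2} + p_1 \cdot \mathrm{bias}, \qquad \Pr[\tilde{d}_0^* = d_0^*] \;=\; \tfrac{1}{2} + p_1 \cdot \mathrm{bias} \;\pm\; \negl.\]
Plugging in the hypothesis $\Pr[E] > \eta = \tfrac{1}{2} + \tfrac{1}{4 p_c}$ together with $p_0 \leq 1 - p_c$ forces
\[p_1 \cdot \mathrm{bias} \;>\; \tfrac{1}{4 p_c} - \tfrac{1 - p_c}{2} \;=\; \tfrac{1 - 2 p_c(1 - p_c)}{4 p_c},\]
which is a strictly positive constant for every $p_c \in (\tfrac{1}{2}, 1]$. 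Hence $\cB$ achieves non-negligible advantage against the hardcore predicate, contradicting \autoref{thm:h_hom_hp}.

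The main obstacle I anticipate is the $\pm \negl$ step: the bit $\tilde{B}_1$ is produced by $\cA$ with access to $k^*$, so in principle $\tilde{d}_0^*$ can be correlated with $d_0^*$ on the bad subevents as well, even though $\beta_{t_c, p_c}$ is not. The cleanest handling is to argue that for every polynomial-time predicate $A$ on the joint transcript and every polynomial-time computable $V \in \{0,1\}$, the correlation $\Pr[d_0^* = V, A] - \tfrac{1}{2}\Pr[A]$ is itself negligible, by a direct distinguisher on the hardcore challenge (simulate $\cB$ internally, compute $V$ and $A$, and accept iff $V$ matches the input bit and $A$ holds). Applying this with $V$ equal to the deterministic value taken by $\tilde{d}_0^*$ on each bad subevent, and $A$ the indicator of that subevent, rigorously introduces the $\pm \negl$ slack above and completes the argument.
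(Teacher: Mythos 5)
Your reduction is the same one the paper uses: plant the hardcore challenge $k^*$ at a uniformly random position $i^*$ among the $t_c$ keys, run $\cA$, and output $\tilde{B}_1 \oplus \bigoplus_{j \neq i^*} a^{(j)} d_0^{(j)}$, exploiting the symmetry argument $\Pr[a^{(i^*)} = 1 \mid \text{C2}] \geq p_c$. The algebraic identity on the event $\{\text{C2}, a^{(i^*)}=1\}$ is identical. So the core idea is the paper's.

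Where you differ — and where your write-up is actually sharper — is the bookkeeping. The paper's proof writes $\Pr[E] = \alpha/2 + (1-\alpha)\Pr[E \mid \text{C2}]$, bounds $(1-\alpha)\Pr[E\mid\text{C2}] \leq \tfrac{1}{2p_c} + \negl$ directly from the hardcore bound applied to the reduction, and then substitutes the crude $\alpha \leq 2(1-\eta)$ to derive $\eta < \tfrac{1}{2}\bigl(1 + \tfrac{1}{2p_c}\bigr) + \negl$. With $\eta$ defined to be exactly $\tfrac{1}{2}\bigl(1 + \tfrac{1}{2p_c}\bigr)$, that chain only yields $\eta < \eta + \negl$, which is not actually a contradiction — the paper's footnote half-acknowledges this by suggesting $\eta$ should be given an $n$-dependent slack. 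Your proof avoids this entirely by doing the clean three-way decomposition into $\{\text{C1}\}$, $\{\text{C2}, a^{(i^*)}=0\}$, $\{\text{C2}, a^{(i^*)}=1\}$, noting $p_0 \leq 1 - p_c$, and isolating the bias term $p_1 \cdot \text{bias}$. The hypothesis $\Pr[E] > \eta$ then forces $p_1 \cdot \text{bias} > \tfrac{1 - 2p_c(1-p_c)}{4p_c}$, a fixed positive constant, and $\Pr[\tilde{d}_0^* = d_0^*] = \tfrac12 + p_1 \cdot \text{bias} \pm \negl$ gives a constant-advantage hardcore attacker — a genuine contradiction. Your last paragraph, showing $|\Pr[d_0^* = V, A] - \tfrac12\Pr[A]| \leq \negl$ for poly-time $V, A$ via the standard ``output $V$ on $A$, else random / flip'' distinguisher pair, is the right way to justify the $\pm\negl$ on the bad subevents, and is a point the paper glosses over (it simply drops those subevents from the lower bound on $\Pr[\tilde{d}_0 = d_0]$, which is valid but loses the factor that would let the contradiction close). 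So: correct proof, same reduction, tighter analysis that in fact repairs a small gap in the paper's final step.
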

\begin{proofatend}
  By contradiction, let us assume that there is an adversary $\cA$ such that (we omit the parameters for readability)
  \[\Pr\left[ \tilde{B}_1 = \beta \right] > \eta\]
  Then, if we define $a_i := a(k^{(i)}, y^{(i)})$,
  \begin{align*}
    \eta
    &< \pr{\tilde{B}_1 = \beta}\\
    &= \underbrace{\pr{ \sum_i a_i < p_c t_c}}_\alpha \times \frac{1}{2} + \pr{ \sum_i a_i \geq p_c t_c} \times \pr{\tilde{B}_1 = \beta \mid \sum_i a_i \geq p_c t_c}\\
    &= \alpha \times \frac{1}{2} + (1-\alpha) \times \pr{\tilde{B}_1 = \beta \mid \sum_i a_i \geq p_c t_c}\\
    &\leq \alpha \times \frac{1}{2} + (1-\alpha) = 1 - \frac{\alpha}{2} \\
  \end{align*}
  so $\alpha \leq 2(1-\eta)$.

  Now, we remark that we can bound also $(1-a) \times \pr{\tilde{B}_1 = \beta \mid \sum_i a_i \geq p_c t_c}$. Indeed, if this value is too big then we can construct an adversary that could break the hardcore bit property of $g_K$. To do that, we define an adversary $\cA'$ taking as input a $k$, and whose goal is to define the hardcore bit $d_0$ associated with $k$. This adversary will pick $t_c - 1$ public keys/trapdoors $(k^{(i)},t_{k^{(i)}})$, and hide $k$ in the middle of these trapdoors. Then, $\cA'$ calls $\cA$ with these $t_c$ keys, and outputs $\tilde{d_0} := \tilde{B}_1 \xor_i a^{(i)}d_0^{(i)}$, with $\tilde{B}_1$ the output of $\cA$, and $a^{(i)}$ computed by using the $y^{(i)}$ provided by $\cA$. We know that $\tilde{d_0} = d_0$ when the guess of $\cA'$ was right, when $\sum_i a_i \geq p_c t_c$, and when the $y$ corresponding to the function $k$ has two preimages. But this even occurs with probability greater than $(1-\alpha) \times \pr{\tilde{B}_1 = \beta \mid \sum_i a_i \geq p_c t_c} \times p_c$, and because $d_0$ is a hardcore bit, this probability is bounded by $1/2 + \negl$, or equivalently:
  \begin{align*}
    (1-\alpha) \times \pr{\tilde{B}_1 = \beta \mid \sum_i a_i \geq p_c t_c}&\leq \frac{1}{2 p_c} + \negl
  \end{align*}
  Now, let's come back to our probability to guess $\beta$:
  \begin{align*}
    \pr{\tilde{B}_1 = \beta}
    &= \alpha \times \frac{1}{2} + (1-\alpha)\times \pr{\tilde{B}_1 = \beta \mid \sum_i a_i \geq p_c t_c}\\
    &\leq \alpha \times \frac{1}{2} + \frac{1}{2 p_c} + \negl\\
    &\leq 1-\eta + \frac{1}{2p_c} + \negl
  \end{align*}
  But on the other side, $\pr{\tilde{B}_1 = \beta} > \eta$, so
  \begin{align*}
    \eta &< 1-\eta + \frac{1}{2p_c} + \negl\\
    \eta &< \frac{1}{2} \left( 1 + \frac{1}{2p_c} \right) + \negl
  \end{align*}
  Because $\eta$ and $p_c$ are constants\footnote{note that if we give them a dependence on $n$, we can make sure that $\eta - \frac{1}{2} \left( 1 + \frac{1}{2p_c} \right)$ is non negligible, but for simplicity we will keep them constant} that do not depend on $n$, this equality is also true without the $\negl$:
  \begin{align*}
    \eta &< \frac{1}{2} \left( 1 + \frac{1}{2p_c} \right)
  \end{align*}
  which is absurd because $\eta = \frac{1}{2} \left( 1 + \frac{1}{2p_c} \right)$.
\end{proofatend}

\begin{theorem}[Malicious-Abort QFactory is correct and secure]\label{thm:malicious_abort_secure_correct}
  Assuming \autoref{conj:quantumyaoxorlemma}, and by making sure that the probability for the family $\cF$ to have two preimages for a random image is bigger than a constant $p_a > 1/2$, then there exists a set of parameters $p_c$, $t_c$ and $n_c$ such that  \autoref{protocol:qfactory_abort_real} is correct with probability exponentially close to $1$ and basis-blind, i.e. such that for all polynomial adversaries $\cA$:
  \[\pr{\tilde{B}_1 = B_1 \mid ((B_1,B_2),\tilde{B}_1) \leftarrow (\piAFourXOR \| \cA) } \leq \frac{1}{2} + \negl\]
  More precisely, we need $t_c \in (1/2, p_c)$ to be a constant, and both $t_c$ and $n_c$ need to be polynomial in $n$ and $\Omega(n)$.
\end{theorem}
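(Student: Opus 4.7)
The plan is to establish correctness by a direct appeal to \autoref{lem:correctness_qfactory_abort}, and to reduce basis-blindness to a single application of \autoref{conj:quantumyaoxorlemma} on top of \autoref{lem:one_chunk_pretty_diff}. First I would fix the parameters: pick any constant $p_c \in (1/2, p_a)$, and set $t_c, n_c = \Omega(n)$ with $t = t_c \cdot n_c = \poly[n]$. Correctness is then immediate from \autoref{lem:correctness_qfactory_abort}, since under these parameters every chunk has at least $p_c t_c$ accepted runs with probability $1-\negl$ and, conditioned on this event, the gadget $\gadxor$ produces the announced BB84 state.

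The heart of the argument is basis-blindness, and the key step is a distributional rewriting of the client's output. For each chunk $i \in \{1,\dots,n_c\}$, let
\[\beta^{(i)} \;:=\; \beta_{t_c,p_c}\bigl(k^{(i,1)},\dots,k^{(i,t_c)},\,y^{(i,1)},\dots,y^{(i,t_c)}\bigr).\]
I claim that the client's $B_1$ coincides in distribution (jointly with the adversary's view) with $\bigoplus_{i=1}^{n_c}\beta^{(i)}$. Indeed, if every chunk satisfies $\sum_j a^{(i,j)} \geq p_c t_c$, then each $\beta^{(i)} = \bigoplus_j a^{(i,j)} d_0^{(i,j)}$, and summing over $i$ recovers the rule $B_1 = \bigoplus_{i,j} B_1^{(i,j)}$; if at least one chunk fails, the corresponding $\beta^{(i)}$ is an independent fresh uniform bit, so $\bigoplus_i \beta^{(i)}$ is uniform and matches the client's random fallback. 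Hence for every adversary $\cA$, $\Pr[\tilde{B}_1 = B_1 \mid (\piAFourXOR \| \cA)] = \Pr[\tilde{B}_1 = \bigoplus_i \beta^{(i)}]$.

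Then I would invoke \autoref{lem:one_chunk_pretty_diff} to bound the one-chunk guessing probability by $\eta := \tfrac{1}{2}\bigl(1 + \tfrac{1}{2p_c}\bigr)$, which, because $p_c > 1/2$, is a constant strictly less than $1$. Setting $\delta := 1-\eta > 0$ (a constant, hence $\geq 1/\poly[n]$), \autoref{conj:quantumyaoxorlemma} applied with $t := n_c$, function $f := \beta_{t_c,p_c}$, and sampling distribution $\chi_n$ given by $t_c$ independent draws from $\text{Gen}_{\cF}(1^n)$, yields
\[\Pr\!\left[\tilde{B}_1 = \bigoplus_{i=1}^{n_c}\beta^{(i)}\right] \;\leq\; \frac{1}{2} + (1-\delta)^{n_c} + \negl \;=\; \frac{1}{2} + \eta^{n_c} + \negl.\]
Since $\eta \in (0,1)$ is a constant and $n_c = \Omega(n)$, we have $\eta^{n_c} = 2^{-\Omega(n)} = \negl$, completing the bound.

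The main obstacle is conceptual rather than computational: one must carefully check that the ``random fallback'' branch of the client matches the single uniform formula $\bigoplus_i \beta^{(i)}$, i.e.\ that a single failing chunk indeed injects an independent uniform bit that is absorbed by the XOR with overwhelming probability and is uncorrelated with the adversary's transcript, so that \autoref{conj:quantumyaoxorlemma} applies verbatim with $\chi_n$ being the product distribution of per-chunk Gen draws. Apart from this bookkeeping---and of course the conjecture itself, which is the genuine load-bearing assumption---the derivation is a short chain from \autoref{lem:one_chunk_pretty_diff} through \autoref{conj:quantumyaoxorlemma}.
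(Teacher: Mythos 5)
Your proposal is correct and follows essentially the same route as the paper: correctness from \autoref{lem:correctness_qfactory_abort}, then basis-blindness by chaining \autoref{lem:one_chunk_pretty_diff} (giving a constant per-chunk bound $\eta<1$) into \autoref{conj:quantumyaoxorlemma} with $t=n_c$ to get $\frac{1}{2}+\eta^{n_c}+\negl$. Your explicit check that the client's output $B_1$ coincides in distribution with $\bigoplus_i \beta^{(i)}$ (both in the all-chunks-pass branch and in the random-fallback branch) is a useful bit of bookkeeping the paper's one-line proof leaves implicit, and you also correctly use $n_c=\Omega(n)$ where the paper's proof text mistakenly writes $t_c=\Omega(n)$ as the quantity making $\eta^{n_c}$ negligible.
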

\begin{proofatend}
  The proof of correctness is made \autoref{lem:correctness_qfactory_abort}, and the security is a direct application of \autoref{conj:quantumyaoxorlemma}: after using \autoref{lem:one_chunk_pretty_diff}: this theorem provides a $\eta$ such that it's not possible to solve one chunk with probability better than $\eta < 1$, so $\delta(n) := 1 - \eta$ is a constant (and $\delta(n) \geq \frac{1}{\poly}$). Therefore \autoref{conj:quantumyaoxorlemma} tells us that no adversary can get the XOR of $n_c$ chunks with probability better than $\frac{1}{2} + \eta^{n_c} + \negl$. But $t_c = \Omega(n)$ and $\eta$ is a constant, so no adversary can get the XOR of $n_c$ chunks with probability better than $\frac{1}{2} + \negl$, i.e. no adversary can find $B_1$ with probability better then $\frac{1}{2} + \negl$.
\end{proofatend}

\section{Verifiable QFactory \label{sec:verif}}

In the preceding protocols, Malicious 4-states QFactory and Malicious 8-states QFactory, the produced qubits came with the guarantee of \textit{basis-blindness} (\autoref{def:4basisblind} and \autoref{def:8basisblind}). While this property refers to the ability of a malicious adversary to guess the honest basis bit(s), it tells nothing about the actual state that a deviating server might produce. For a number of applications, and most notably for \textit{verifiable blind quantum computation} \cite{fk}, the basis-blindness property is not sufficient. What is required is a stronger property, \textit{verification}, that ensures that the produced state was essentially prepared correctly, even in a malicious run.

\subsection{Verifiable QFactory Functionality}

There are two issues with trying to define a verification property for QFactory. The first is that the adversarial server can always abort, therefore the verification property can only ensure that the probability of non-abort \emph{and} cheat is negligible. The second issue is that, since the final state is in the hands of the server, the server can always apply a final deviation on the state\footnote{However that deviation needs to be independent of anything that is secret.}. This is not different with what happens in protocols that do have quantum communication. In that case, the adversarial receiver (server) can also apply a deviation on the state received before using that state in any subsequent protocol. This deviation could even be that the server replaces the received state with a totally different state. 

Here, we define the strongest notion of verifiable QFactory possible, which exactly captures the idea of being able to recover the ideal state from the real state without any knowledge of the  
(secret) index of the ideal state. In \autoref{app:vQFactory_channel} we show that this notion is sufficient for any protocol that includes communication of random secret qubits of the form $\Ket{+_{L\pi/4}}$ which includes a verifiable quantum computation protocol. Furthermore, in \autoref{app:strong_blindness} we show that it is possible to relax slightly the definition of verifiable QFactory. 

\begin{definition}[Verifiable QFactory\label{def:verifiable_8_state}]
Consider a party that is given a state uniformly chosen from a set of eight states $S = \{\rho_L \, | \, L \in \{0,1, ..., 7\}\}$ or an abort bit, where $S$ is basis-blind i.e. given a state sampled uniformly at random from $S$, it is impossible to guess the last two bits of the index $L$ of the state within the set $S$ with non-negligible advantage. We say that this party has a Verifiable QFactory if, it aborts with small probability  and when he does not abort, there exists an isometry $\Phi$, that is independent of the index $L$, such that:

\EQ{\label{eq:isometry_vQFactory}
\Phi(\rho_L)\overset{\epsilon}{\approx}\ket{+_{L\pi/4}}\bra{+_{L\pi/4}}\otimes \sigma_{\textrm{junk}}
}
where the state $\sigma_{\textrm{junk}}$ is independent of the index $L$.
\end{definition}

It is worth stressing here, that if the security setting is computational (as in this work), the basis-blindness and the approximate equality above involve a QPT distinguisher, while the isometry $\Phi$ needs to be computable in polynomial time.

\subsection{Blind Self-Testing \label{sec:blind_self_testing}} 

Before giving a verifiable QFactory protocol, we define a new concept of blind self-testing, that will be essential in proving the security of the former. Self-testing is a technique developed \cite{self_testing_yao, self_testing_magniez, self_testing_dam, robust_self_testing_mckague, mckague_self_test_graph} that ensures that given some measurement statistics, classical parties can be certain that some untrusted quantum state (and operations), that two or more quantum parties share, is essentially (up to some isometry) the state that the parties believe they have. In high-level, we are going to use a test of this kind in order to certify that the output of Verifiable QFactory is indeed the desired one.

Existing results, that we will call \emph{non-local self-testing}, only deal with how to exploit the non-locality (the fact that the quantum state tested is shared between non-communicating parties) to test the state and operations. Naturally, the correctness is up to a local isometry (something that the servers can apply, while preserving the non-communication condition). 

Here, instead of testing a single non-local state, we test a family of states, where the \textit{non-locality} property is replaced by the \textit{blindness} property (the fact that the server is not aware - he is blind - of which state from the (possibly known) family of states he is actually given in each run of the protocol). To see how this is closely related, one can imagine the usual non-local self-testing of the singlet state, where one quantum side (Alice) actually performs a measurement (as instructed). From the point of view of the other quantum side (Bob), he has a single state that (in the honest run) is one of the BB84 states, while he is totally oblivious about the basis of the state he has (if that was not the case it would lead to signalling the basis choice of Alice's measurement). However, this is, by no means the most general case. Here we introduce the concept of \emph{blind self-testing} formalising the above intuition.

We give here the most general case of blind self-testing and we conjecture that it holds. In \autoref{app:blind_self_testing_general} we list three simpler scenarios (of increasing complication) that lead to the most general case given here, following similar steps with the extension of simple i.i.d. self-testing to fully robust and rigid self-testing in existing literature  \cite{self_testing_magniez, robust_self_testing_mckague, robust_self_testing_yang, rigidity_ruv, verifier_leash}. In \autoref{app:blind_self-testing_iid} we provide a proof for the security of the first (simpler) case, while the full analysis of the most general blind self-testing goes beyond the scope of this work (as indicated by the large volume of high profile papers in the non-local self-testing cases) and will be the topic of a future publication.

\begin{algorithm}[H]
\caption{Blind self-testing: The general case}\label{protocol:general_blind_self-testing}
-- The server prepares a single state $\rho_{tot}$ (consisting of $N$ qubits in the honest run). This state has a corresponding index consisting of $N$ 3-bit indices $L_i$ ($L_i \in \{0,...,7\} \, \forall i \in \{1,...,N\}$), and the server is basis blind with respect to each of these $N$ indices, i.e. (being computationally bounded) he cannot determine with non-negligible advantage the two basis bits of any index $L_i$. On the other hand, the client does know all the indices $L_i$'s.\\
-- The client, randomly chooses a fraction $f$ of the qubits to be used as tests and announces the set of corresponding indices $T=\{i_1,\cdots,i_{fN}\} \subset \{1,2,...,N\}$ to the server. \\
-- For each test qubit $i_j \in T$, the client chooses a random measurement index $M_{i_j}\in\{000,\cdots,111\}$ and instructs the server to measure the corresponding qubit in the $\left\{\ket{+_{M_{i_j}\pi/4}},\ket{-_{M_{i_j}\pi/4}}\right\}$ basis. \\
-- The server returns the test measurement results $\{c_{(i_j)}\}$. \\
--- For each fixed pair $(L,M)$, the client gathers all the test positions that correspond to that pair and 
from the relative frequencies, the client obtains an estimate for the probability $p_{L,M}$ (where by convention we have that $p_{L,M}$ corresponds to the $+1$ outcome, while $1-p_{L,M}$ to the $-1$). 
\\
-- If $|p_{L,M}-\cos^2((L-M)\pi/8)|\geq \epsilon_2$ for any pair $(L,M)$ the client aborts.\\
\textbf{Output:} If the client does not abort (and this happens with non-negligible probability), then there exists an index-independent polynomial isometry $\Phi=\Phi_{k_1}\otimes\cdots\Phi_{k_l}$, given by products of the isometries in Fig. \ref{fig:isometry_circuit}, that is applied to a random subset of non-tested qubits $i$, such that:

\EQ{\label{eq:general_self_testing_isometry}
\Phi(Tr_{\textrm{all but } k_1,\cdots,k_l \textrm{ qubits}}\rho_{tot})\overset{\epsilon(\epsilon_1,\epsilon_2)}{\approx} \left(\ket{+_{L_{k_1}\pi/4}}_{k_1}\otimes\cdots\otimes \ket{+_{L_{k_l}\pi/4}}_{k_l}\right)\otimes\sigma_{\textrm{junk}}
}
\end{algorithm}

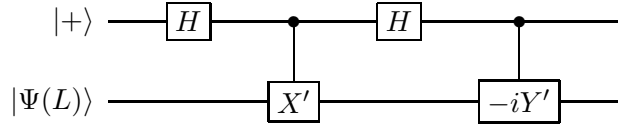
\begin{figure}[H]
\[
\Qcircuit @C=2em @R=1.4em {
   & \lstick{\ket{+}} & \gate{H} & \ctrl{1}  & \gate{H} & \ctrl{1} \qw & \qw & \\
   & \lstick{\ket{\Psi(L)}}  & \qw & \gate{X'}  & \qw & \gate{-iY'} & \qw  &
}
\]
  \caption{The isometry of the blind self-testing. Note that the controlled gates are controlled in the $X$-basis, i.e. $\wedge U_{12} (a\ket{+}+b\ket{-})_1\otimes (\ket{\psi})_2=a\ket{+}_1\otimes\ket{\psi}_2+b\ket{-}_1\otimes U\ket{\psi}_2$.}\label{fig:isometry_circuit}
\end{figure}

In this most general setting, we make no assumption on the state $\rho_{tot}$ produced by the server and moreover, we want to recover the full tensor product structure of the resulting states as given by Eq. (\ref{eq:general_self_testing_isometry}). In the non-local self-testing literature,  Azuma-Hoeffding, quantum de-Finetti theorems and rigidity results \cite{Hoeffding, azuma, de_finetti_caves, de_finetti_brandao} were used to uplift the simple i.i.d. case and prove the security in the most general setting.

\subsection{The Verifiable QFactory Protocol}

In this section we introduce a protocol for the final version of our functionality, Verifiable QFactory. 
Here, we give the protocol, show the correctness and the security, namely that the protocol achieves the verification property from \autoref{def:verifiable_8_state}, based on the conjectured security of the most general  \textit{blind self-testing} given in \autoref{protocol:general_blind_self-testing}. \\
The basic idea is the following: repeat the Malicious 8-states QFactory multiple times, then the client chooses a (random) fraction of the output qubits and uses them for a test and next instructs the server to measure the test qubits in random angles and, finally, the client checks their statistics. Since the server does not know the states (or to be more precise, the basis bits), he is unlikely to succeed in guessing the correct statistics unless he is honest (up to some trivial relabelling). Note that the output qubits and the measurement angles, need to be from all the set of the 8-states, which is one of the reasons we wanted to give the 8-states extension of our basic Malicious 4-states QFactory protocol\footnote{This is actually related with Bell's theorem as one can see later from the similarity with self-testing results.}. 

\begin{algorithm}[H]
\caption{Verifiable QFactory}\label{protocol:vQFactory}
\textbf{Requirements:} Same as in Protocol \autoref{protocol:qfactory_real}\\
\textbf{Input:} Client runs $N$ times the algorithm $(k^{(i)},t^{(i)}_k) \leftarrow \text{Gen}_{\mathcal{F}}(1^n)$, where $i\in\{1,\cdots, N\}$ denotes the $i$th run. He keeps the $t_k^{(i)}$'s private.\\
\textbf{Protocol:}\\
-- Client: runs $N$ times the Malicious 8-states QFactory \autoref{protocol:QFactory4to8}. 

-- Client: records measurement outcomes $y^{(i)},b^{(i)}$ and computes and stores the corresponding index of the output state $L^{(i)}$.

-- Client: instructs the server to measure a random fraction $rf$ of the output states, each in a randomly chosen basis of the form $\{\ket{+_{M^{(i)}\pi/4}},\ket{-_{M^{(i)}\pi/4}}\}$. Here $M^{(i)}$ is the index of the measurement instructed.

-- Server: returns the measurement outcomes $c^{(i)}$.

-- Client: for each pair $(L,M)$ collects the results $c^{(j)}$ for all $j$'s that have the specific pair and with the relative frequency obtains an estimate for the probability $p(L,M)$.

-- Client: aborts unless all the estimates of the probabilities $p(L,M)$ are $\epsilon$-close to the ideal one i.e. $p(L,M)\overset{\epsilon}{\approx}|\bra{+_{M\pi/4}}+_{L\pi/4}\rangle|^2$.

\textbf{Output:} The probability of non-aborting and being far from the ideal state\footnotemark \, is negligible $\epsilon'$

\EQ{
p(\textrm{non-abort}\wedge \Delta(\rho_{L^{(i_1)}\cdots L^{(i_{N(1-f)})}}, \rho_{\textrm{ideal}})\geq t(n))\leq\epsilon'
}
where $i_1, ..., i_{N(1-f)}$ refer to the unmeasured qubits 
 and where
\EQ{\label{eq:vQFactory_isometry_protocol}
\Phi(\rho_{\textrm{ideal}})=\otimes_{k=1}^{N(1-f)}\ket{+_{L^{(i_k)}\pi/4}}\bra{+_{L^{(i_k)}\pi/4}}\otimes \sigma_{\textrm{junk}}
}
and $\sigma_{\textrm{junk}}$ is a constant density matrix, while $\epsilon,\epsilon'$ are all negligible functions and $t(\cdot)$ is a non-negligible function.

Moreover, in an honest run, the probability of abort is negligible and the output is:

\EQ{
\rho_{\textrm{honest}}=\otimes_{k=1}^{N(1-f)}\ket{+_{L^{(i_k)}\pi/4}}\bra{+_{L^{(i_k)}\pi/4}}
} 
\end{algorithm}

\footnotetext{The distance $\Delta$ used here depends on the setting. In our case it is understood as a QPT distinguisher.}

\begin{theorem}[correctness]
If Protocol \ref{protocol:vQFactory} is run honestly, it aborts with negligible probability and the output (non-measured) qubits are exactly in a product state of the form $\ket{+_{L^{(i_k)}\pi/4}}\bra{+_{L^{(i_k)}\pi/4}}$. Therefore, the trivial isometry (the identity) suffices to recover the state of Eq. (\ref{eq:isometry_vQFactory}), and where there is no junk state.
\end{theorem}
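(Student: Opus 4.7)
The plan is to verify the two assertions of the theorem in turn: that the unmeasured qubits are \emph{exactly} in the stated product state (so the recovering isometry is the identity and there is no junk register), and that in an honest run the client aborts with at most negligible probability.

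For the first assertion, I would invoke \autoref{thm:correct_qfac_2_1} (correctness of Malicious 8-states QFactory) applied independently to each of the $N$ rounds. This yields that, immediately after the $N$ QFactory invocations, the server's joint register is in the pure product state $\bigotimes_{i=1}^N \ket{+_{L^{(i)}\pi/4}}$, while the client has recorded each index $L^{(i)}$ by inverting with the trapdoors. The client's subsequent instructions only act on the test positions, so the $N(1-f)$ unmeasured qubits are left untouched and remain in the claimed tensor product. Taking $\Phi$ to be the identity and omitting the junk factor then recovers \autoref{eq:vQFactory_isometry_protocol}.

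For the abort probability I would use a standard concentration argument. In an honest run, measuring $\ket{+_{L\pi/4}}$ in the basis $\{\ket{+_{M\pi/4}},\ket{-_{M\pi/4}}\}$ yields the $+1$ outcome with probability exactly $|\braket{+_{M\pi/4}|+_{L\pi/4}}|^2 = \cos^2((L-M)\pi/8)$. Grouping the test positions by the pair $(L,M)$, the empirical estimate $\hat p(L,M)$ is an average of $n_{L,M}$ i.i.d.\ Bernoulli samples with precisely this mean. Hoeffding's inequality then gives $\Pr\!\left[|\hat p(L,M) - \cos^2((L-M)\pi/8)| \geq \epsilon\right] \leq 2 e^{-2\epsilon^2 n_{L,M}}$, and a union bound over the at most $64$ pairs controls the total abort probability.

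The only delicate step, and the place where one must be careful, is calibrating the parameters so that every pair $(L,M)$ receives enough samples. Since $L^{(i)}$ is drawn uniformly by the client's QFactory run and $M^{(i)}$ is chosen uniformly, each pair appears at a test position with probability $1/64$; a Chernoff bound then guarantees $n_{L,M} = \Omega(fN)$ except with negligible probability whenever $fN$ grows sufficiently with $n$. Choosing $N$ polynomial in $n$, $f$ a constant in $(0,1)$, and $\epsilon$ an inverse polynomial makes each Hoeffding term, and hence the union-bounded abort probability, negligible. This completes the correctness claim; the genuine content of \autoref{protocol:vQFactory}, namely the verification (soundness) property of \autoref{def:verifiable_8_state}, is a separate statement handled via the blind self-testing machinery of \autoref{sec:blind_self_testing}.
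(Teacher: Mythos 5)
Your proof is correct and follows essentially the same route as the paper's, but the paper's own proof is a two-sentence sketch (``the outputs are of the correct form, therefore the statistics are correct with high probability, hence no abort''), whereas you actually carry out the concentration argument via Hoeffding and a union bound over the $64$ pairs $(L,M)$. That added rigor is welcome and is the genuine mathematical content hidden behind the paper's ``with high probability.'' One small point worth flagging: you choose the tolerance $\epsilon$ to be inverse-polynomial so that the Hoeffding term $e^{-2\epsilon^2 n_{L,M}}$ is negligible with $N=\poly(n)$ samples; the paper's statement of \autoref{protocol:vQFactory} instead declares $\epsilon$ to be a negligible function, which would actually make the honest-abort probability close to $1$ for polynomial $N$ --- so your calibration is the sensible one and the paper's parameter wording appears to be a slip. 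Also note that the uniformity of $L^{(i)}$ and the resulting $n_{L,M}=\Omega(fN)$ lower bound are not strictly needed for the correctness claim (if a pair $(L,M)$ never appears among the test positions, the client simply has no estimate to check for it); that sample-count guarantee is really a soundness ingredient, so you could drop it here without loss.
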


\begin{proof}
In an honest run, each of the outputs of different Malicious 8-states QFactory runs, are of the correct form, therefore measuring any of those outputs in the $\{\ket{\pm_{M\pi/4}}\}$ basis returns the correct statistics with high probability. Hence, the protocol does not abort, while the remaining states are also prepared correctly.
\end{proof}

\begin{theorem}[security]
Protocol \ref{protocol:vQFactory} is a Verifiable QFactory (\autoref{def:verifiable_8_state}), i.e. the probability of accepting the tests and having a state far from the ideal is negligible irrespective of the deviation of the adversary, assuming that the self-testing \autoref{protocol:general_blind_self-testing} is correct.
\end{theorem}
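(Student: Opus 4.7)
The plan is to cast the protocol as a direct instantiation of the general blind self-testing framework (\autoref{protocol:general_blind_self-testing}) and then invoke its assumed correctness. First, I would identify the required ingredients inside Verifiable QFactory: the total state $\rho_{tot}$ is the joint state produced by the $N$ parallel runs of Malicious 8-states QFactory, each qubit carries a $3$-bit index $L^{(i)}$ that the client knows via the trapdoors $t_k^{(i)}$, and by \autoref{thm:security_qfac_2_1} the server is basis-blind with respect to each individual $L^{(i)}$. The testing phase of \autoref{protocol:vQFactory} (random choice of the fraction $rf$, random measurement angle $M^{(i)}\pi/4$, and the statistical check against $|\langle +_{M\pi/4} | +_{L\pi/4} \rangle|^2 = \cos^2((L-M)\pi/8)$) is exactly the testing phase of \autoref{protocol:general_blind_self-testing}.

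Second, I would upgrade single-instance basis blindness to joint basis blindness across the $N$ runs. Because each run is instantiated with an independently sampled trapdoor $(k^{(i)}, t_k^{(i)}) \leftarrow \text{Gen}_{\mathcal{F}}(1^n)$, a standard hybrid argument shows that if any QPT adversary could guess the last two bits of some $L^{(i)}$ given the full transcript of all $N$ runs, then one could break the basis blindness of a single Malicious 8-states QFactory by simulating the other $N-1$ runs honestly, forwarding the challenge key into the $i$-th slot, and emulating the (publicly randomized) test-set announcement $T$, which is independent of the indices.

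Third, with joint basis blindness established, I would apply the conclusion of \autoref{protocol:general_blind_self-testing} (in the form of Equation \eqref{eq:general_self_testing_isometry}) to the transcript of Verifiable QFactory. Conditioned on non-abort, this yields an index-independent, polynomial-time isometry $\Phi$ mapping the $N(1-f)$ unmeasured qubits to $\otimes_{k=1}^{N(1-f)} \ket{+_{L^{(i_k)}\pi/4}}\bra{+_{L^{(i_k)}\pi/4}} \otimes \sigma_{\textrm{junk}}$ up to negligible error. Together with the fact that $\sigma_{\textrm{junk}}$ is index-independent, this is exactly Equation \eqref{eq:vQFactory_isometry_protocol} and hence the verification property of \autoref{def:verifiable_8_state}, finishing the reduction modulo the self-testing assumption.

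The main obstacle will be the hybrid step: the basis blindness of \autoref{thm:security_qfac_2_1} is a one-shot computational guarantee, and one must verify that it can be invoked in the presence of the side information provided by the other $N-1$ QFactory runs and by the test measurement outcomes returned by a possibly deviating server. Concretely, the reducing adversary must be able to simulate the auxiliary transcripts and the test responses without knowing the challenge index, which requires care since the server's announced measurement outcomes depend on the full joint quantum state. A clean way to handle this is a composability-style argument in which the reducer embeds the challenge into one slot and only needs to honestly run the client for the remaining slots; the internal randomness of the client suffices because $T$ and the measurement angles $M^{(i)}$ are sampled independently of the $L^{(j)}$'s. Beyond this, the only other non-routine point is checking that the statistical tolerance $\epsilon$ in \autoref{protocol:vQFactory} maps to the tolerances $\epsilon_1,\epsilon_2$ assumed in \autoref{protocol:general_blind_self-testing}, which is a matter of choosing the number of repetitions per $(L,M)$ pair large enough via a standard Hoeffding estimate.
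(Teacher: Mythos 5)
Your proposal takes essentially the same route as the paper's (very terse) sketch: identify the $N$-fold Malicious 8-states QFactory output as the $\rho_{tot}$ of the general blind self-testing framework, observe that the test phase of \autoref{protocol:vQFactory} is identical to that of \autoref{protocol:general_blind_self-testing}, and invoke the conjectured correctness of the latter to obtain the isometry of \autoref{def:verifiable_8_state}. What you add that the paper leaves implicit is the hybrid argument upgrading the single-instance basis blindness of \autoref{thm:security_qfac_2_1} to joint basis blindness across all $N$ slots in the presence of the other transcripts and the test responses; this is a genuine gap in the paper's two-sentence proof that you are right to flag and to close, and your observation that the reducer can embed the challenge key into one slot and honestly simulate the remaining $N-1$ clients (since $T$ and the $M^{(i)}$ are sampled independently of the indices) is the correct way to do it. The remaining caveat you note about matching the statistical tolerance $\epsilon$ to $(\epsilon_1,\epsilon_2)$ is likewise a legitimate bookkeeping step the paper glosses over. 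Overall your writeup is a more careful version of the same argument rather than a different one.
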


\begin{proof}[Sketch of Proof.]

The outputs of the 8-states QFactory are basis blind, and satisfying the measurement statistics too, leading  exactly to the requirements of the definition of general blind self-testing in \autoref{protocol:general_blind_self-testing}. It follows that there exists an isometry such as that requested by Eq. (\ref{eq:vQFactory_isometry_protocol}).

\end{proof}
 
See also in \autoref{app:blind_self_testing_general} and \autoref{app:blind_self-testing_iid}, where
we explain how this task is very similar with self-testing results, and provide the first step for our self-testing result.

\section{Acknowledgements}

LC is very grateful to C\'{e}line Chevalier for all the discussions he had with her. He would also like to give a special thanks to Geoffroy Couteau, Omar Fawzi and Alain Passelègue who gave him great advices concerning security proof methods.
AC and PW are very grateful to Atul Mantri, Thomas Zacharias and Yiannis Tselekounis, Vedran Dunjko for very helpful and interesting discussions.
The work was supported by the following grants FA9550-17-1-0055 and EPSRC grants: EP/N003829/1 and EP/M013243/1. \\ \\
After completion of this work we became aware of the independent related work ``Computationally-secure and composable remote state preparation'' by Gheorghiu and Vidick.

\newpage

\begin{appendices} 
\section{Probability of guessing two predicates}\label{app:simple_lemmata}

\begin{lemma}[Implication of guessing two predicates]\label{lem:know_ab_a_b_or_xor}~\\
  Let $(a,b) \in \{0,1\}^2$ be two bits sampled uniformly at random. Let $f$ be any function of $(a,b)$ (eventually randomized). Then if $\cA$ is an adversary such that $\Pr[\cA(f(a,b)) = (a,b)] \geq 1/4 + \frac{1}{\poly}$ (where the probability is taken over the choice of $a$ and $b$, the randomness of $f$ and $\cA$), then either:
  \begin{itemize}
  \item $\cA$ is good to guess $a$, i.e. $P_1 = \Pr[ \tilde{a} = a \mid (\tilde{a}, \tilde{b}) \leftarrow \cA(f(a,b))] \geq 1/2 + 1/\poly$
  \item $\cA$ is good to guess $b$, i.e. $P_2 = \Pr[ \tilde{b} = b \mid (\tilde{a}, \tilde{b}) \leftarrow \cA(f(a,b))] \geq 1/2 + 1/\poly$
  \item $\cA$ is good to guess the XOR of $a$ and $b$, i.e. $P_{\oplus} = \Pr[ \tilde{a} \xor \tilde{b} = a \xor b \mid (\tilde{a}, \tilde{b}) \leftarrow \cA(f(a,b))]  \geq 1/2 + 1/\poly$
  \end{itemize}
\end{lemma}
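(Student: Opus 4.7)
The plan is to establish the exact identity
\[\Pr[\cA(f(a,b)) = (a,b)] \;=\; \frac{P_1 + P_2 + P_\oplus - 1}{2},\]
and then conclude by a trivial pigeonhole argument: if the left-hand side is at least $1/4 + 1/\poly$, then $P_1 + P_2 + P_\oplus \geq 3/2 + 2/\poly$, so at least one of the three probabilities is at least $1/2 + \frac{2}{3\poly} \geq 1/2 + 1/\poly'$ for some polynomial $\poly'$.

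To prove the identity, I would use the standard character decomposition of the indicator function over $\{0,1\}^2$. For any pair of bits $(u,v)$, one checks the elementary identity
\[\mathbb{1}[u = 0 \wedge v = 0] \;=\; \tfrac{1}{4}\bigl(1 + (-1)^u\bigr)\bigl(1 + (-1)^v\bigr) \;=\; \tfrac{1}{4}\bigl(1 + (-1)^u + (-1)^v + (-1)^{u \oplus v}\bigr).\]
Applying this with $u := a \oplus \tilde a$ and $v := b \oplus \tilde b$ yields
\[\mathbb{1}[\tilde a = a \wedge \tilde b = b] \;=\; \tfrac{1}{4}\bigl(1 + (-1)^{a \oplus \tilde a} + (-1)^{b \oplus \tilde b} + (-1)^{(a \oplus \tilde a) \oplus (b \oplus \tilde b)}\bigr).\]
Taking expectations over the randomness of $a$, $b$, $f$ and $\cA$, and using that for any bit-valued event $E$ one has $\mathbb{E}[(-1)^{\mathbb{1}[E]}] = 2 \Pr[\lnot E] - 1 = 1 - 2\Pr[E]$ (so $\mathbb{E}[(-1)^{a \oplus \tilde a}] = 2 P_1 - 1$, and analogously for $P_2$ and $P_\oplus$), the identity above follows directly.

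There is no real obstacle here: the whole content is the Fourier/character identity on $\mathbb{F}_2^2$. The only minor point to be careful about is that $P_\oplus$ is defined in terms of $\tilde a \oplus \tilde b$ versus $a \oplus b$, which matches exactly the exponent $(a \oplus \tilde a) \oplus (b \oplus \tilde b) = (a \oplus b) \oplus (\tilde a \oplus \tilde b)$ appearing in the expansion, so no sign mismatch arises. Once the identity is in place, the pigeonhole step is immediate and preserves the $1/\poly$ advantage (up to a factor of $3$), which is what the lemma asserts.
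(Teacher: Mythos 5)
Your proof is correct, but it follows a genuinely different route than the paper's. The paper decomposes the joint distribution into the four atomic events $e_1,\dots,e_4$ (where $e_4 = \Pr[\tilde a = a \wedge \tilde b = b]$), notes $P_1 = e_2 + e_4$, $P_2 = e_3 + e_4$, $P_\oplus = e_1 + e_4$, and argues by contradiction: assuming $P_1$ and $P_2$ are both close to $1/2$ while $e_4 \geq 1/4 + 1/\poly$, it bounds $e_2 + e_3$ from above and uses $\sum_i e_i = 1$ to force $e_1 + e_4 = P_\oplus \geq 1/2 + 1/\poly$. Your approach instead packages the same underlying combinatorics into the single exact character identity $e_4 = \frac{1}{2}(P_1 + P_2 + P_\oplus - 1)$, after which a direct pigeonhole argument on the sum $P_1 + P_2 + P_\oplus \geq 3/2 + 2/\poly$ finishes. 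The two are equivalent in content, since your Fourier expansion is precisely what makes the relation $2e_4 = P_1 + P_2 + P_\oplus - 1$ (which also follows from summing the three $e_i + e_4$ expressions and using $\sum_i e_i = 1$) visible in one line; the paper proves the contrapositive by case analysis while you extract the quantitative dependence directly. What your route buys is a cleaner, non-contradictory argument and an explicit constant $2/3$ in the advantage loss; what the paper's route buys is that it needs no algebraic identity, only bookkeeping with the four $e_i$. Both are complete and correct.
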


\begin{proof}
  Let's denote by:
  \begin{itemize}
  \item $e_1 = \pr{\tilde{a} \neq a \text{ and } \tilde{b} \neq b \mid (\tilde{a},\tilde{b}) \leftarrow \cA(f(a,b))}$
  \item $e_2 = \pr{\tilde{a} = a \text{ and } \tilde{b} \neq b \mid (\tilde{a},\tilde{b}) \leftarrow \cA(f(a,b))}$
  \item $e_3 = \pr{\tilde{a} \neq a \text{ and } \tilde{b} = b \mid (\tilde{a},\tilde{b}) \leftarrow \cA(f(a,b))}$
  \item $e_4 = \pr{\tilde{a} = a \text{ and } \tilde{b} = b \mid (\tilde{a},\tilde{b}) \leftarrow \cA(f(a,b))}$
  \end{itemize}
  Now, let us assume that the probability to do a correct guess is good, i.e. $e_4 \geq \frac{1}{4} + \frac{1}{\poly}$. Because the probability of guessing correctly $a$ (resp $b$) is bad, we have $e_2 + e_4 \leq \frac{1}{2} + \negl$ (resp. $e_3 + e_4 \leq \frac{1}{2} + \negl$), so $e_2 \leq \frac{1}{4} + \negl$ (resp. $e_3 \leq \frac{1}{4} + \negl$). So $e_2 + e_3 \leq \frac{1}{2} - \frac{1}{\poly}$, and because $e_1 + e_2 + e_3 + e_4 = 1$, we get $e_1 + e_4 \geq \frac{1}{2} + \frac{1}{\poly}$. But $e_1 + e_4$ is exactly the probability to guess the XOR, i.e.
  \[\pr{\tilde{a} \xor \tilde{b} = a \xor b \mid (\tilde{a},\tilde{b}) \leftarrow \cA(f(a,b))} \geq \frac{1}{2} + \frac{1}{\poly}\]

\end{proof}

\section{Replacing a quantum channel with verifiable QFactory and Verifiable Quantum Computation}\label{app:vQFactory_channel}

Here we prove that the verifiable QFactory can be used to replace \emph{any} protocol that has a quantum channel where the honest parties send random secret qubits of the form $\Ket{+_{L\pi/4}}$. We show this, with a simple reduction: if there exist a QPT adversary $\mathcal{A}$ that can break the protocol with the verifiable QFactory states $\rho_L$ with non-negligible probability $\frac{1}{p(n)}$, then there exist a QPT adversary $\mathcal{A}'$ that can break the security of the initial protocol that has quantum communication of the form $\Ket{+_{L\pi/4}}$, with the same probability of success. \\ \\
\procedure [linenumbering]{$\mathcal{A'}(\Ket{+_{L\pi/4}})$} {
 Prepare \, \, \sigma_{\textrm{junk}} \pccomment{junk does not depend on $\theta$} \\
 \gamma \gets  \Ket{+_{L\pi/4}}\Bra{+_{L\pi/4}} \otimes \sigma_{\textrm{junk}} \\
 \gamma' \gets \Phi^{-1}\left(\Ket{+_{L\pi/4}}\Bra{+_{L\pi/4}} \otimes \sigma_{\textrm{junk}}\right) =  
 \rho_L \pccomment{$\Phi$ is a QPT isometry}\\
 \pcreturn \mathcal{A}(\rho_L)\pccomment{succeeds with prob $\frac{1}{p(n)}$}
}

One of the most important possible applications of Verifiable QFactory is the classical client verifiable blind quantum computation. In particular, the verifiable blind quantum computation protocol of \cite{FKD2018} requires quantum communication in the beginning of the protocol and consists of strings of states of the form $\ket{+_{L\pi/4}}$ sent from the client to the server. According to our proof above, this means that the protocol of \cite{FKD2018} is as secure as a classical client protocol that replaces the quantum channel with verifiable QFactory.

\section{Strong Blindness}\label{app:strong_blindness}

We prove here that for our purposes, it is possible to slightly relax \autoref{def:verifiable_8_state}. In particular, given that the set of states $S = \{\rho_L \, | \, L \in \{0,1,...,7\}  \}$ is \textit{basis-blind}, and that there exists an index-independent isometry $\Phi$ that maps a state $\rho_L$ to a state $\epsilon$-close to $\ket{+_{L\pi/4}}\bra{+_{L\pi/4}}\otimes \sigma(L)$, then we can \emph{prove} that the junk state $\sigma(L)$ has to be independent of $L$ and thus satisfies Definition \ref{def:verifiable_8_state}. 

This property is (trivially) related with what we call ``\textit{strong blindness}'', since it essentially means that basis blindness (along with the existence of some isometry) guarantees that the only information that the server can learn from $L$ is exactly the information he can learn in an honest run. 

\begin{definition}[8-states Strong Blindness\label{def:strong_blindness}]
Consider a party that is given a state uniformly chosen from a set of eight states $S = \{\rho_L \, | \, L \in \{0,1, ..., 7\}\}$. We  say that $S$ is strongly blind if the information the party can learn about the index $L$ is bounded by the information that one can obtain from the state $\Ket{+_{L\pi/4}}$. 
\end{definition}

\begin{lemma}[8-states strong blindness from isometry and weak blindness\label{lemma:independent_junk}] Consider a party that is given a state uniformly chosen from a set of eight states $\{\rho_L\}$, where the set is basis blind. Assume also that there exists an isometry $\Phi$, that is independent of the index $L$, such that

\EQ{
\Phi(\rho_L)\overset{\epsilon}{\approx}\ket{+_{L\pi/4}}\bra{+_{L\pi/4}}\otimes \sigma(L)
}
where the state $\sigma(L)$ is a general junk state. Then we can show that $\sigma(L)$ should have negligible dependence on $L$, and therefore, by applying the inverse of the isometry, it satisfies 8-states strong blindness.
\label{lemma:strong_blind}
\end{lemma}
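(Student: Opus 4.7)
The strategy is to argue the contrapositive: if $\sigma(L)$ has non-negligible dependence on $L$, then the set $\{\rho_L\}$ cannot be basis blind, contradicting the hypothesis. I split the possible non-trivial dependence of $\sigma(L)$ on $L$ into two cases: (A) the dependence distinguishes different values of the basis bits $L_2 L_3$, and (B) the dependence is only through the value bit $L_1$. Any observable dependence of $\sigma(L)$ on $L$ reduces (via the standard classical decomposition over the three bits of $L$) to at least one of these two cases.

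In Case (A), suppose there is a QPT distinguisher $D$ that, given $\sigma(L)$ for uniformly random $L$, outputs the correct basis bits $L_2 L_3$ with probability at least $\frac{1}{4} + \frac{1}{\poly}$. Then an adversary against basis blindness, on input $\rho_L$, applies the efficient isometry $\Phi$ to obtain a state that is $\eps$-close to $\ket{+_{L\pi/4}}\bra{+_{L\pi/4}} \otimes \sigma(L)$, traces out the first register, and runs $D$ on the junk register, extracting $L_2 L_3$ with probability at least $\frac{1}{4} + \frac{1}{\poly} - O(\eps)$, violating basis blindness.

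Case (B) is the delicate one. Here $\sigma(L)$ depends only on $L_1$, so after averaging over the uniform $L_1$ the marginal on the junk register is $\tfrac{1}{2}(\sigma(0)+\sigma(1))$, independent of $L_2 L_3$, and no direct attack via tracing out the first qubit works. Instead I would exploit the cross-register correlations. The $L_1$-averaged marginal is
\begin{equation*}
\tau_{L_2 L_3} = \tfrac{1}{2}\left(\ket{+_{(2L_2+L_3)\pi/4}}\bra{+_{(2L_2+L_3)\pi/4}} \otimes \sigma(0) + \ket{-_{(2L_2+L_3)\pi/4}}\bra{-_{(2L_2+L_3)\pi/4}} \otimes \sigma(1)\right),
\end{equation*}
a classical-quantum state whose ``classical basis'' on the first qubit rotates with $L_2 L_3$. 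Concretely, measuring the first qubit in the Pauli-$X$ basis and conditioning on the outcome yields $\{\sigma(0), \sigma(1)\}$ (perfectly correlated with the outcome) when $L_2 L_3 = 00$, but only the outcome-insensitive mixture $\tfrac{1}{2}(\sigma(0)+\sigma(1))$ when $L_2 L_3 = 10$. Composing this measurement with any QPT distinguisher between $\sigma(0)$ and $\sigma(1)$ --- which exists by the Case (B) hypothesis --- gives a polynomial distinguishing advantage between $\tau_{00}$ and $\tau_{10}$, and a standard averaging argument converts this pairwise gap into a non-negligible bias on guessing $L_2 L_3$, again contradicting basis blindness.

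The main obstacle is Case (B): the information about $L$ in $\sigma$ is not directly a guess for the basis bits, but leaks only through the correlations between the two registers of the post-isometry state. Identifying the right measurement on the first qubit --- one whose post-measurement conditional on the junk register reveals an $L_2 L_3$-dependent signal --- is where the geometry of the eight XY-plane angles matters, and one should check that at least one pair $(L_2 L_3, L_2' L_3')$ is separated by a measurement that couples non-trivially to the $\sigma(0)$ vs $\sigma(1)$ distinction. Once this is set up, the reduction to distinguishing $\sigma(0)$ from $\sigma(1)$ is immediate, and the final bookkeeping absorbs the $O(\eps)$ slack from the $\approx$ in the isometry into a negligible additive loss, completing the contradiction with basis blindness.
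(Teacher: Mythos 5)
Your core idea---exploit the correlation between the first-register qubit and the junk register when the $L_1$-averaged junk alone does not leak the basis---matches the paper's. But the case split has a genuine gap: ``(A) the junk's $L_1$-marginal depends on $L_2L_3$'' and ``(B) the junk depends only on $L_1$'' are not complementary. Consider $\sigma(L)$ a function of $L_1\oplus L_2$: its $L_1$-average is $L_2L_3$-independent (so (A) fails), yet it is not a function of $L_1$ alone (so (B) as stated fails too). Your claim that any dependence ``reduces to at least one of these two cases'' is not justified. The correct complement of (A) is the weaker statement ``for some basis $L_2L_3$ we have $\sigma(0L_2L_3)\not\approx\sigma(1L_2L_3)$, while all $L_1$-averages agree''; your Case (B) argument does appear to generalize to this, but it is not carried out, and the decomposition is a real hole as written.

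For comparison, the paper avoids the case split entirely. It supposes a measurement that guesses $L$ from $\sigma(L)$ with probability $\frac{1}{8}+\frac{1}{\poly}$, uses the basis-blindness constraint $\sigma(L)+\sigma(L\oplus100)\approx\sigma(L')+\sigma(L'\oplus100)$ to infer $\Pr_{L\oplus100}(\sigma(L))=\frac{1}{8}-\frac{1}{\poly}$, and then measures the first qubit in the basis $\{L\oplus011,L\oplus111\}$ conditioned on that prior. The resulting bias $\frac{2+\sqrt{2}}{4}\left(\frac{1}{2}+\frac{1}{\poly}\right)+\left(1-\frac{2+\sqrt{2}}{4}\right)\left(\frac{1}{2}-\frac{1}{\poly}\right)=\frac{1}{2}+\frac{1}{\poly}$ is computed directly---the $\cos^2(\pi/8)=\frac{2+\sqrt{2}}{4}$ overlap is exactly the ``geometry of the eight XY-plane angles'' you flagged as the crux. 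This unified argument subsumes your Case (A) automatically (marginal $L_2L_3$-dependence also raises the $L$-guessing probability above $\frac{1}{8}$) and sidesteps the incomplete decomposition.
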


\begin{proof}
We prove this by contradiction. Assume that the state $\sigma(L)$ does not have negligible dependence on $L$. This means that one can guess $L$ with non-negligible advantage from a random guess, i.e. there exists some measurement such that when applied to $\sigma(L)$ the measurement outcome is $L$ with probability $Pr_{L}(\sigma(L)) = \frac{1}{8} + \frac{1}{poly(n)}$, for some polynomial $poly$. \\
From the 8 state basis blindness condition, we have that:
\EQ{
 \sigma(L) + \sigma(L \oplus 100) \overset{\epsilon}{\approx} \sigma(L') + \sigma(L' \oplus 100) \, \, \, \forall \  L, L' \in \{0,1,...,7\},
}
where $\oplus$ refers to bitwise xor. \\
Then, from this condition, we also deduce that: $Pr_{L \oplus 100}(\sigma(L)) = \frac{1}{8} - \frac{1}{poly(n)}$. \\
Now, consider the 2 bases: $\{L, L \oplus 100\}$ and $\{L \oplus 010, L \oplus 110\}$. We will show next how to construct a distinguisher between these 2 basis. The idea is to use the information about the index $L$ obtained from $\sigma(L)$ and then perform an optimal quantum measurement on the $\ket{+_{L\pi/4}}$ state given the prior knowledge that we obtained from $\sigma(L)$. This will lead to guessing information about the basis bit with non-negligible probability.

Without loss of generality, we suppose $Pr_{L \oplus 010} \geq Pr_{L \oplus 110}$. \\
Then we get: 
\EQ{
 Pr \left(L \,  or \,  L \oplus 010 \right) = Pr_{L} +  Pr_{L \oplus 010} = Pr_{L \oplus 100} + \frac{2}{poly(n)} +  Pr_{L \oplus 010} \geq \nonumber \\
 Pr_{L \oplus 100} + Pr_{L \oplus 110} + \frac{2}{poly(n)} = Pr \left(L \oplus 100 \,  or \,  L \oplus 110 \right)
}
Then, to distinguish between the 2 bases, we will make a measurement on the $\ket{+_{L\pi/4}}$ state (first register), in the basis $\{L \oplus 011, L \oplus 111\}$. The resulting distinguishing probability is:
\EQ{
Pr_{success} &=& \frac{2 + \sqrt{2}}{4} \cdot \left( \frac{1}{2} + \frac{1}{poly(n)} \right) + \left(1 - \frac{2 + \sqrt{2}}{4} \right) \cdot \left(\frac{1}{2} - \frac{1}{poly(n)} \right) \nonumber \\
&=& \frac{1}{2} + \frac{1}{poly'(n)}
}
which, given the basis blindness assumption, reaches a contradiction.
\end{proof}

\section{Blind self-testing intermediate scenarios}\label{app:blind_self_testing_general}

Here we introduce a number of scenarios that takes us from a very simple setting for blind self-testing to the general case of \autoref{protocol:general_blind_self-testing}.

\begin{algorithm}[H]
\caption{Blind Self-Testing: The independent identically distributed case (Scenario 1)}\label{protocol:iid_blind_self-testing}

-- The server chooses eight states $\{\ket{\Psi(000)},\ket{\Psi(001)},\cdots, \ket{\Psi(111)}\}$, such that they are basis blind\footnotemark , i.e.  

\EQ{\label{eq:blindness_self_testing}
\Delta(\rho_{L}+\rho_{L \oplus 100},\rho_{L'}+\rho_{L' \oplus 100})\leq\epsilon_1 \ \ \forall \ L,L' \in \{0,1,...,7\}
}
where $\rho_L:=\ket{\Psi(L)}\bra{\Psi(L)} \, \, \forall  L \in \{0,1,...,7\}$. \\
-- The client chooses randomly $N$ indices, $\{ L_1,\cdots,L_N \}$, where each $L_i \leftarrow \{000,\cdots, 111\}$ and sends the set of states $\{\ket{\Psi(L_1)},\ket{\Psi(L_2)},\cdots, \ket{\Psi(L_N)}\}$ to the server, while keeping the indices $L_i$ secret.\\
-- The client, randomly chooses a fraction $f$ of the qubits to be used as tests and announces the set of corresponding indices $T=\{i_1,\cdots,i_{fN}\} \subset \{1,2,...,N\}$ to the server.\\
-- For each test qubit $i_j \in T$, the client chooses a random measurement index $M_{i_j}\in\{000,\cdots,111\}$ and instructs the server to measure the corresponding qubit in the $\left\{\ket{+_{M_{i_j}\pi/4}},\ket{-_{M_{i_j}\pi/4}}\right\}$ basis. \\
-- The server returns the test measurement results $\{c_{(i_j)}\}$. \\
-- For each fixed pair $(L,M)$, the client gathers all the test positions that correspond to that pair and 
from the relative frequencies, the client obtains an estimate for the probability $p_{L,M}$ (where by convention we have that $p_{L,M}$ corresponds to the $+1$ outcome, while $1-p_{L,M}$ to the $-1$).\\
-- If, for any pair $(L,M)$ :

\EQ{\label{eq:statistics_iid_self_testing}
|p_{L,M}-\cos^2((L-M)\pi/8)|\geq \epsilon_2
}
the client aborts.\\
\textbf{Output:} If the client does not abort (and this happens with non-negligible probability), then there exist an index-independent isometry $\Phi$, given below in Fig. \ref{fig:isometry_circuit}, such that

\EQ{\label{eq:isometry_iid_self_testing}
\Phi(\ket{\Psi(L)})\overset{\epsilon(\epsilon_1,\epsilon_2)}{\approx} \ket{+_{L\pi/4}}\otimes\ket{\Psi(000)}
}
\end{algorithm}

\footnotetext{The distance represents either trace-distance in the information theoretic security setting or QPT distinguisher in the computational security setting.}

In Scenario 1, we make a number of assumptions. Firstly, the server actually knows the classical description of all eight possible states. This is done to simplify things (see scenario 2b), but has one disadvantage. The server can compute $\rho_L+\rho_{L \oplus 100}$ and unless $\Delta(\rho_{L}+\rho_{L \oplus 100},\rho_{L'}+\rho_{L' \oplus 100})\leq\epsilon_1$ where $\Delta$ is the trace distance, the server could  guess the basis (with non-negligible advantage) by performing a minimum error measurement between these two mixed states. The measurement itself is likely to be of polynomial complexity, therefore it seems impossible to guarantee computational basis blindness in this setting (unless, of course, the states are also information theoretically close). For this reason, and to simplify the first exposition to \textit{blind self-testing}, we give the proof by assuming that Eq. \ref{eq:blindness_self_testing} is close in trace-distance in \autoref{app:blind_self-testing_iid}. Now we proceed with the other scenarios of blind self-testing, so that the relevance of this notion for verifiable QFactory becomes apparent.

\begin{algorithm}[H]
\caption{Blind self-testing: The independent non-identically distributed case (Scenario 2a)}\label{protocol:inid_blind_self-testing}

-- The server chooses $N$ eight-plets of states $\{\ket{\Psi(000)}_i,\ket{\Psi(001)}_i,\cdots, \ket{\Psi(111)}_i\}_{i\in\{1,\cdots,N\}}$, such that they are basis blind, i.e.  

\EQ{
\Delta(\rho_{i,L}+\rho_{i,L \oplus 100},\rho_{i,L'}+\rho_{i,L' \oplus 100})\leq\epsilon_1 \ \ \forall \ i \in \{1, ..., N\} \, \, \forall \ L,L' \in \{0,...,7\}
}
where $\rho_{i,L}:=\ket{\Psi(L)}_i\bra{\Psi(L)}_i$.\\
-- The client chooses randomly $N$ indices, $\{ L_1,\cdots,L_N \}$, where each $L_i \leftarrow \{000,\cdots, 111\}$ and sends the set of states $\{\ket{\Psi(L_1)}_1,\ket{\Psi(L_2)}_2,\cdots, \ket{\Psi(L_N)}_N\}$ to the server, while keeping the indices $L_i$ secret.\\
-- The client, randomly chooses a fraction $f$ of the qubits to be used as tests and announces the set of corresponding indices $T=\{i_1,\cdots,i_{fN}\} \subset \{1,2,...,N\}$ to the server. \\
-- For each test qubit $i_j \in T$, the client chooses a random measurement index $M_{i_j}\in\{000,\cdots,111\}$ and instructs the server to measure the corresponding qubit in the $\left\{\ket{+_{M_{i_j}\pi/4}},\ket{-_{M_{i_j}\pi/4}}\right\}$ basis. \\
-- The server returns the test measurement results $\{c_{(i_j)}\}$. \\
-- For each fixed pair $(L,M)$, the client gathers all the test positions that correspond to that pair and 
from the relative frequencies, the client obtains an estimate for the probability $p_{L,M}$ (where by convention we have that $p_{L,M}$ corresponds to the $+1$ outcome, while $1-p_{L,M}$ to the $-1$). Note, that each $p_{L,M}$ involves (in general) the statistics from different states.\\
-- If $|p_{L,M}-\cos^2((L-M)\pi/8)|\geq \epsilon_2$ for any pair $(L,M)$ the client aborts.\\
\textbf{Output:} If the client does not abort (and this happens with non-negligible probability), then there exists an index-independent isometry $\Phi$, given below in Fig. \ref{fig:isometry_circuit}, that if applied to a random non-tested qubit $i$, is acting in the following way:

\EQ{
\Phi(\ket{\Psi(L)}_i)\overset{\epsilon(\epsilon_1,\epsilon_2)}{\approx} \ket{+_{L\pi/4}}\otimes\ket{\Psi(000)}_i
}
\end{algorithm}

We note that Scenario 2a is similar with scenario 1 with the crucial difference that different sets of eight states are used for each of the $N$ qubits. It is not hard to see that very similar analysis with the one of Scenario 1 will apply, if instead of $N$ different sets, one has $N$ copies of the same state, but replaces that state with the average state defined to be $\rho_L(average):=\sum_{i=1}^N\rho_{i,L}$. The result will then hold with high probability using Hoeffding inequalities \cite{Hoeffding}.

\begin{algorithm}[H]
\caption{Blind self-testing: The independent non-identically distributed case (Scenario 2b)}\label{protocol:inid_blind_self-testing2}
-- The server prepares $N$ states $\ket{\Psi(L)}_{i} , i \in\{1,\cdots,N\}, L \in \{0, ..., 7\}$. For each of these states, the server is basis blind, i.e. (being computationally bounded) he cannot determine the two basis bits of the index $L$. On the other hand, the client does know the index $L$ for each of the $N$ states.\footnotemark \\
-- The client, randomly chooses a fraction $f$ of the qubits to be used as tests and announces the set of corresponding indices $T=\{i_1,\cdots,i_{fN}\} \subset \{1,2,...,N\}$ to the server. \\
-- For each test qubit $i_j \in T$, the client chooses a random measurement index $M_{i_j}\in\{000,\cdots,111\}$ and instructs the server to measure the corresponding qubit in the $\left\{\ket{+_{M_{i_j}\pi/4}},\ket{-_{M_{i_j}\pi/4}}\right\}$ basis. \\
-- The server returns the test measurement results $\{c_{(i_j)}\}$. \\
--- For each fixed pair $(L,M)$, the client gathers all the test positions that correspond to that pair and 
from the relative frequencies, the client obtains an estimate for the probability $p_{L,M}$ (where by convention we have that $p_{L,M}$ corresponds to the $+1$ outcome, while $1-p_{L,M}$ to the $-1$). Note, that each $p_{L,M}$ involves (in general) the statistics from different states.\\
-- If $|p_{L,M}-\cos^2((L-M)\pi/8)|\geq \epsilon_2$ for any pair $(L,M)$ the client aborts.\\
\textbf{Output:} If the client does not abort (and this happens with non-negligible probability), then there exist an index-independent polynomial isometry $\Phi$, given below in Fig. \ref{fig:isometry_circuit}, that if applied to a random non-tested qubit $i$ is acting in the following way:
\EQ{
\Phi(\ket{\Psi(L)}_i)\overset{\epsilon(\epsilon_1,\epsilon_2)}{\approx} \ket{+_{L\pi/4}}\otimes\ket{\Psi(000)}_i
}
\end{algorithm}
\footnotetext{This can be achieved either by having the client have exponential capacity, or by having the server prepare the states using some choice made by the client that has kept some side trapdoor information too.}

The crucial difference in scenario 2b, is that the server prepares only one state (not eight). In the previous scenarios, it was the client choosing which index $L$ is used, and thus, unless the states $\ket{\Psi(L)}$ leaked information about the (basis bits of the) index, the server was blind. Here we impose this by requiring explicitly that the server prepares a state that he is basis-blind with respect to its index. There are two consequences of these differences.\\
First, now that the state is prepared on the server side, the client does not need to have any quantum ability, and his part in the protocol is purely classical. Second, since the state is prepared in the server's side, it is clear that we can no longer be in the information-theoretic setting, since, with unbounded computation power, he would be able to recover the exact label $L$. On the positive side, the issue we had in scenario 1, that the server could perform minimum error measurements is no longer valid, since the server does not know the classical description of the two states ($\rho_{i,L}+\rho_{i,L \oplus 100};\rho_{i,L'}+\rho_{i,L' \oplus 100}$) that he needs to distinguish and thus cannot find the corresponding minimum error measurement. Finally, some care is needed to specify how the client can possibly know the index $L$ while the server (that prepares the state) he does not. One way to achieve this is given in Malicious 8-states QFactory. Actually, scenario 2b corresponds exactly to the setting of Verifiable QFactory, provided that whatever deviation the server does in one round of Malicious 8-states QFactory is restricted (and independent) to other rounds. 

Finally, we can generalise further (\autoref{protocol:general_blind_self-testing} removing the assumption of tensor produce states.

\section{Proof of Scenario 1: i.i.d. blind self-testing}\label{app:blind_self-testing_iid}

Given the protocol \ref{protocol:iid_blind_self-testing} (Scenario 1), we can assume that there exist eight untrusted, binary observables $O'_{M}$, with $\pm1$ eigenvalues, where we define $O'_0:=X',O'_{010}:=Y'$. Each of these observable are of polynomial size, i.e. can be performed by a QPT party. The corresponding ideal observables are denoted without prime and we have $O_M=\ket{+_{M\pi/4}}\bra{+_{M\pi/4}}-\ket{-_{M\pi/4}}\bra{-_{M\pi/4}}$. We denote $\rho_L$ the set of the eight (untrusted) states, and as defined in scenario 1 we consider the pure states (a purification in general) $\ket{\Psi(L)}$.

We can see that index independent isometry in \autoref{fig:isometry_circuit} gives us:

\EQ{\label{eq:isometry0}\Phi(\ket{+}\ket{\Psi(L)}):=\ket{+}\frac{(I+X')}{2}\ket{\Psi(L)}+\ket{-}(-iY')\frac{(I-X')}{2}\ket{\Psi(L)}
}
and we want to show that the state in Eq. (\ref{eq:isometry0}) is $\overset{\epsilon}{\approx} \ket{+_{L\pi/4}}\otimes \ket{\Psi(000)}$ using the constrains coming from the basis blindness property and the measurement statistics of the test qubits. In this first exposition to the blind self-testing, we are going to prove the correctness up to $\epsilon$, while the details of the robustness of this result (the explicit dependence of the final $\epsilon$ on $\epsilon_1,\epsilon_2$ appearing in the constraints) is left for future work. 

\begin{theorem}\label{thm:iid_self_testing}
If Protocol \ref{protocol:iid_blind_self-testing} does not abort (with non-negligible probability) then the isometry given by Fig. \ref{fig:isometry_circuit} satisfies the condition of Eq. (\ref{eq:isometry_iid_self_testing}).
\end{theorem}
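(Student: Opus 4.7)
The plan is to adapt a Mayers--Yao style self-testing argument to the blind (non-non-local) setting of \autoref{protocol:iid_blind_self-testing}, and then use basis-blindness to identify the resulting junk state. First I would convert the statistical bounds $|p_{L,M}-\cos^2((L-M)\pi/8)|\leq\epsilon_2$ into vector-level identities. Because each $O'_M$ has $\pm 1$ spectrum, the standard projection argument turns $p_{L,L}\geq 1-\epsilon_2$ and $p_{L,L+4}\leq \epsilon_2$ into $O'_L\ket{\Psi(L)}\overset{O(\sqrt{\epsilon_2})}{\approx}\ket{\Psi(L)}$ and $O'_{L+4}\ket{\Psi(L)}\overset{O(\sqrt{\epsilon_2})}{\approx}-\ket{\Psi(L)}$. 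Specialising to $L\in\{0,2,4,6\}$ pins down $X'=O'_{000}$ and $Y'=O'_{010}$ on their principal eigenstates, and using the full $8\times 8$ table of inner products $\bra{\Psi(L)}O'_M\ket{\Psi(L)}\approx\cos((L-M)\pi/4)$ one concludes (matching first and second moments together with $O'_M{}^2=I$) that $O'_M\approx\cos(M\pi/4)X'+\sin(M\pi/4)Y'$ on the support of each $\ket{\Psi(L)}$.

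The key algebraic ingredient is the approximate anti-commutation $\{X',Y'\}\ket{\Psi(L)}\approx 0$. To extract it I would use the identity $(O'_1)^2=I$ together with $O'_1\approx\frac{1}{\sqrt{2}}(X'+Y')$ on $\ket{\Psi(L)}$: expanding $\tfrac{1}{2}(X'+Y')^2$ and invoking $X'^2=Y'^2=I$ yields $I+\tfrac{1}{2}\{X',Y'\}\approx I$ on $\ket{\Psi(L)}$, forcing the anti-commutator to annihilate $\ket{\Psi(L)}$ up to the accumulated error. With the anti-commutation and $X'\ket{\Psi(0)}\approx\ket{\Psi(0)}$ in hand, a direct calculation mirroring the ideal one -- substitute into \autoref{eq:isometry0}, group terms by the first register, and use $\{X',Y'\}\ket{\Psi(L)}\approx 0$ to commute $Y'$ past $X'$ -- shows that $\Phi(\ket{+}\ket{\Psi(L)})$ is within $\epsilon(\epsilon_1,\epsilon_2)$ of $\ket{+_{L\pi/4}}\otimes\ket{\chi(L)}$ for some index-dependent junk $\ket{\chi(L)}$, and moreover $\ket{\chi(000)}\approx\ket{\Psi(000)}$ by direct substitution in the $L=0$ case.

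Finally, I would remove the residual $L$-dependence of $\ket{\chi(L)}$ by invoking \autoref{lemma:independent_junk}: since the family $\{\ket{\Psi(L)}\}$ is basis-blind by \autoref{eq:blindness_self_testing} and the polynomial isometry $\Phi$ just constructed is independent of $L$, that lemma forces the junk to be $L$-independent up to a negligible correction, and combining this with $\ket{\chi(000)}\approx\ket{\Psi(000)}$ gives \autoref{eq:isometry_iid_self_testing}. The principal obstacle is not the conceptual plan, which is standard, but the careful propagation of the errors $\epsilon_1,\epsilon_2$ through the anti-commutation extraction and the isometry calculation, uniformly over all eight indices: the linear constraints on $O'_M$ extracted from a single $\ket{\Psi(L)}$ are under-determined, and only by combining the constraints coming from all eight states simultaneously can one pin down the operator structure strongly enough to drive the isometry argument without the error blowing up. This is precisely the robustness analysis the excerpt explicitly defers, and it is where the bulk of the technical work will lie.
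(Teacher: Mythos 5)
Your plan substitutes a Mayers--Yao-style anti-commutation argument for what the paper actually does, which is to first pin down a fixed $2$-dimensional subspace containing all eight states (Lemma~\ref{lemma:2d_subspace}) and then explicitly parametrize the states and the restriction of $Y'$ to that subspace (Lemma~\ref{lemma:state_iid}). The paper uses basis blindness \emph{early and structurally}: from the $\pm 1$ expectations one only learns that each pair $\{\ket{\Psi(L)},\ket{\Psi(L\oplus 100)}\}$ spans \emph{some} $2$-dimensional subspace, and it is precisely $\tfrac{1}{2}(\rho_L+\rho_{L\oplus 100})\overset{\epsilon_1}{\approx}\tfrac{1}{2}(\rho_{000}+\rho_{100})$ that forces all these subspaces to coincide. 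You invoke basis blindness only at the very end (through Lemma~\ref{lemma:independent_junk}) to clean up the junk, and nothing in your argument replaces this structural role. Without it, the eight states need not be coplanar, and an $L$-independent isometry on a fixed ancilla cannot simultaneously rotate them all onto $\ket{+_{L\pi/4}}$.

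More concretely, the step that fails is the extraction of the operator identity $O'_1\ket{\Psi(L)}\approx\tfrac{1}{\sqrt2}(X'+Y')\ket{\Psi(L)}$. Expanding
\[
\bigl\|\bigl(O'_1 - \tfrac{1}{\sqrt2}(X'+Y')\bigr)\ket{\Psi(L)}\bigr\|^2
 = 1 - \sqrt2\,\mathrm{Re}\bra{\Psi(L)}O'_1(X'+Y')\ket{\Psi(L)}
 + 1 + \tfrac12\bra{\Psi(L)}\{X',Y'\}\ket{\Psi(L)},
\]
both the cross term $\bra{\Psi(L)}O'_1(X'+Y')\ket{\Psi(L)}$ and the $\{X',Y'\}$ term are exactly the quantities you are trying to control; matching the first moments $\bra{\Psi(L)}O'_M\ket{\Psi(L)}\approx\cos((L-M)\pi/4)$ and the trivial second moment $\bra{\Psi(L)}(O'_M)^2\ket{\Psi(L)}=1$ does not bound them. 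In bipartite self-testing those cross-operator terms are pinned down by \emph{cross-correlations} between the two parties (or by an SOS decomposition of a Bell operator), neither of which exists in this blind single-party setting where one only sees marginal statistics. Your subsequent deduction $\{X',Y'\}\ket{\Psi(L)}\approx 0$ therefore presupposes what it needs to prove. The paper's route avoids this circularity: once all states live in a fixed $2$-dimensional subspace, the $\pm 1$ expectations give genuine vector identities for $X'$ there, and the remaining constraints are read off from overlaps $\bra{\Psi(L)}\Psi(M)\rangle$ inside that subspace (Lemma~\ref{lemma:state_iid}), from which anti-commutation of the \emph{restricted} operators is automatic rather than an input. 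If you want to retain the anti-commutation flavour, you would need to first reproduce the $2$-dimensional confinement using $\epsilon_1$, at which point the explicit parametrization is already in hand and the anti-commutation argument becomes redundant.
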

To prove this theorem we first need two lemmas.

\begin{lemma}\label{lemma:2d_subspace}
For any set of (eight) states $\{\ket{\Psi(L)}\}$ as from scenario 1, that is basis blind, i.e. $|(\rho_{L}+\rho_{L \oplus 100})-(\rho_{L'}+\rho_{L' \oplus 100})|\leq \epsilon_1$, all eight states belong (approximately) in a 2-dimensional subspace spanned by the vectors $\{\ket{\Psi(000)},\ket{\Psi(100)}\}$.
\end{lemma}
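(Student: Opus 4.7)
The plan is to use the basis blindness constraint to force every $\rho_L$ to be almost fully supported on the two-dimensional subspace $V := \mathrm{span}\{\ket{\Psi(000)},\ket{\Psi(100)}\}$. First, I would let $P$ denote the orthogonal projector onto $V$ and set $P^\perp = I - P$. By construction $\sigma_{00} := \rho_{000} + \rho_{100}$ has rank at most $2$ with image in $V$, so $\mathrm{Tr}(P^\perp \sigma_{00}) = 0$.

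Next, for an arbitrary index $L$, define $\sigma_L := \rho_L + \rho_{L \oplus 100}$. Applying the basis-blindness hypothesis with $L' = 000$ gives $\lVert \sigma_L - \sigma_{00} \rVert_1 \leq \epsilon_1$ (where the norm is the trace norm, since we are in the information-theoretic version of Scenario~1). Because $P^\perp$ has operator norm $1$, duality between trace and operator norm yields $|\mathrm{Tr}(P^\perp \sigma_L) - \mathrm{Tr}(P^\perp \sigma_{00})| \leq \epsilon_1$, hence $\mathrm{Tr}(P^\perp \sigma_L) \leq \epsilon_1$. Since both $\rho_L$ and $\rho_{L \oplus 100}$ are positive semidefinite, $\rho_L \leq \sigma_L$, and therefore $\mathrm{Tr}(P^\perp \rho_L) \leq \epsilon_1$.

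Finally I would translate this back to a statement about the pure state $\ket{\Psi(L)}$ (or, in the purified picture, about its reduction on the relevant system). Because $\rho_L = \ket{\Psi(L)}\bra{\Psi(L)}$, we have $\mathrm{Tr}(P^\perp \rho_L) = \lVert P^\perp \ket{\Psi(L)} \rVert^2$, so $\lVert P^\perp \ket{\Psi(L)} \rVert \leq \sqrt{\epsilon_1}$. Equivalently,
\[
\bigl\lVert \ket{\Psi(L)} - P\ket{\Psi(L)} \bigr\rVert \leq \sqrt{\epsilon_1},
\]
which is exactly the statement that every $\ket{\Psi(L)}$ is within distance $\sqrt{\epsilon_1}$ of the two-dimensional subspace $V$, as claimed.

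The only delicate points are (i) making sure the inequality $\rho_L \leq \sigma_L$ is used correctly with a positive operator like $P^\perp$, and (ii) tracking whether the stated $\epsilon_1$ should be replaced by a slightly looser constant ($2\epsilon_1$ or $\sqrt{\epsilon_1}$) once we move to the norm-distance conclusion; neither introduces a real difficulty. The main conceptual step is simply recognizing that bounding the $P^\perp$-mass of the mixture forces the same bound on each individual pure component, which is what converts a purely statistical constraint (basis blindness on mixtures) into a geometric one (each pure state lying near $V$). No deeper tool such as a rigidity or de Finetti argument is needed at this level, which is why the full-generality case (Scenario~2b and the general protocol) is expected to be substantially harder.
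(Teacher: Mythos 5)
Your proof is correct, but it takes a genuinely different and in fact cleaner route than the paper's. The paper proceeds in two steps: first it uses the measurement statistics (Eq.~(\ref{eq:statistics_iid_self_testing})) to show that $\ket{\Psi(L)}$ and $\ket{\Psi(L\oplus 100)}$ are approximately orthogonal for each $L$, hence $\rho_L+\rho_{L\oplus 100}$ is approximately twice the projector onto the two-dimensional span $V_L$ of that pair; then it invokes basis blindness to argue that all the $V_L$ coincide with $V_{000}$. You bypass the orthogonality step entirely: since $\sigma_{00}=\rho_{000}+\rho_{100}$ is supported on $V=\mathrm{span}\{\ket{\Psi(000)},\ket{\Psi(100)}\}$ by construction, you combine trace-norm/operator-norm duality with the operator inequality $\rho_L\leq\sigma_L$ (which holds because $\rho_{L\oplus 100}\geq 0$) to bound $\mathrm{Tr}(P^\perp\rho_L)$ directly. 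That positivity step is the move the paper does not make, and it is what lets you drop the measurement statistics from the argument altogether; it also gives you a fixed, $L$-independent subspace $V$ from the outset, which the paper has to work to obtain.

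One point worth flagging if your argument were to replace the paper's: your proof does not establish that the span $V$ is genuinely two-dimensional, i.e.\ that $\ket{\Psi(000)}$ and $\ket{\Psi(100)}$ are not (nearly) collinear, whereas the paper's route yields this orthogonality as a byproduct. That fact is implicitly asserted in the lemma's phrase \textquotedblleft 2-dimensional\textquotedblright\ and is used downstream in Lemma~\ref{lemma:state_iid} where states are expanded in the $\{\ket{\Psi(000)},\ket{\Psi(100)}\}$ coordinates. It does follow from the statistics (take $M=100$: $\bra{\Psi(000)}O'_{100}\ket{\Psi(000)}\approx -1$ while $\bra{\Psi(100)}O'_{100}\ket{\Psi(100)}\approx 1$ forces $|\braket{\Psi(000)|\Psi(100)}|\approx 0$), so the gap is easily closed, but you would want to add that one observation explicitly.
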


\begin{proof}
The binary observables $O'_M$ have $\pm 1$ eigenvalues, and can therefore be written as $O'_M=(+1)P^+_{M}+(-1)P^-_{M}$ where $P_M^{\pm}$ are the projections on the $+1$ and $-1$ eigenspace, respectively. It follows trivially that $(O'_M)^2=I$ and that the corresponding projections are $P^+_M=\frac{I+O_M'}{2}$ and $P^-_M=\frac{I-O_M'}{2}$.

Moreover, if the protocol does not abort (and this happens with non-negligible probability), we also have the constraints on the expectation values of the observables coming from Eq. (\ref{eq:statistics_iid_self_testing}). From $\bra{\Psi(L)}O'_L\ket{\Psi(L)}\overset{\epsilon_2}{\approx}1$ and $\bra{\Psi(L)}O'_{L \oplus 100}\ket{\Psi(L)}\overset{\epsilon_2}{\approx}-1$, we get that $O'_L\ket{\Psi(L \oplus 100)}\approx -\ket{\Psi(L \oplus 100)}$ and thus

\EQ{P^+_L\ket{\Psi(L)}\approx\ket{\Psi(L)} &,& P^-_L\ket{\Psi(L)}\approx 0 \nonumber\\  
P^-_L\ket{\Psi(L \oplus 100)}\approx\ket{\Psi(L \oplus 100)} &,& P^+_L\ket{\Psi(L \oplus 100)}\approx 0\nonumber
}  
which means that $\bra{\Psi(L \oplus 100)}\Psi(L)\rangle\approx 0$. The space spanned by two vectors $\{\ket{\Psi(L)},\ket{\Psi(L \oplus 100)}\}$ is two dimensional and for all $L$ the state $1/2(\rho_L+\rho_{L \oplus 100})$ is the identity in that subspace. Now from the basis blindness condition we have that $1/2(\rho_L+\rho_{L \oplus 100})\overset{\epsilon_1}{\approx}1/2(\rho_{000}+\rho_{100})$, which means that all states $\ket{\Psi(L)}$ belong to that (fixed) 2-dimensional subspace. We will denote the projection on this 2-dimensional subspace as $P_{\Psi}$.
\end{proof}

\begin{lemma}\label{lemma:state_iid}
The states given in Protocol \ref{protocol:iid_blind_self-testing} are approximately of the form 
\EQ{\label{eq:general_state_iid_case}
\ket{\Psi(L)}\overset{\epsilon}{\approx} \cos(\frac{L\pi}{8})\ket{\Psi(000)}+e^{i\phi}\sin(\frac{L\pi}{8})\ket{\Psi(100)}
}
with $\phi$ being a constant (independent of $L$). Furthermore, the untrusted operator $Y'$ acts in the following way:
\EQ{\label{eq:Y_action}
Y'\ket{\Psi(100)}=e^{-i\phi}\ket{\Psi(000)}
}
\end{lemma}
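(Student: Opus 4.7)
The plan is to work entirely within the 2-dimensional subspace $\mathcal{S}$ provided by \autoref{lemma:2d_subspace}, spanned by the approximately orthogonal vectors $\ket{\Psi(000)}$ and $\ket{\Psi(100)}$, which are the $+1$ and $-1$ eigenvectors of $X'|_{\mathcal{S}}$. Since each $\ket{\Psi(L)}$ lives (up to negligible error) in $\mathcal{S}$, I decompose it as $\ket{\Psi(L)} = \alpha_L \ket{\Psi(000)} + \beta_L \ket{\Psi(100)}$. The moduli of the coefficients are then forced by the statistics of $X' = O'_{000}$: the non-abort condition gives $\bra{\Psi(L)} X' \ket{\Psi(L)} \approx 2 \cos^2(L\pi/8) - 1 = \cos(L\pi/4)$, and since $X'$ is diagonal in this basis, we get $|\alpha_L|^2 \approx \cos^2(L\pi/8)$ and $|\beta_L|^2 \approx \sin^2(L\pi/8)$. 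Absorbing a global phase I take $\alpha_L = \cos(L\pi/8)$ and $\beta_L = e^{i\phi_L}\sin(L\pi/8)$, reducing the claim to showing that the relative phase $\phi_L$ is effectively a single constant in $L$.

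For the $Y'$ identity I first determine the action of $Y'$ on $\mathcal{S}$. Exactly as in the proof of \autoref{lemma:2d_subspace}, the statistics for $M = 010$ imply that $\ket{\Psi(010)}$ and $\ket{\Psi(110)}$ are the approximately orthogonal $\pm 1$ eigenvectors of $Y'|_{\mathcal{S}}$, and the previous step already expresses these two vectors as equal-weight superpositions with relative phases $+ e^{i\phi_{010}}$ and $- e^{i\phi_{010}}$ respectively. Computing $\ket{\Psi(010)}\bra{\Psi(010)} - \ket{\Psi(110)}\bra{\Psi(110)}$ directly then yields $Y'|_{\mathcal{S}} \approx e^{-i\phi_{010}}\ket{\Psi(000)}\bra{\Psi(100)} + e^{i\phi_{010}}\ket{\Psi(100)}\bra{\Psi(000)}$, at which point setting $\phi := \phi_{010}$ establishes Eq.~\ref{eq:Y_action}.

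The last step is to pin down $\phi_L$ for every $L$ using the $Y'$ statistics. Plugging the parameterisation above into $\bra{\Psi(L)} Y' \ket{\Psi(L)}$ using the matrix just computed and equating the result with the predicted value $\cos((L - 2)\pi/4) = \sin(L\pi/4)$ gives $\sin(L\pi/4)\cos(\phi_L - \phi) \approx \sin(L\pi/4)$, so $\cos(\phi_L - \phi) \approx 1$ and hence $\phi_L \approx \phi$ whenever $\sin(L\pi/4) \neq 0$. For $L \in \{0, 4\}$ one of $\alpha_L, \beta_L$ is negligible, so $\phi_L$ does not affect the state and can be set equal to $\phi$ without loss of generality.

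The main obstacle I anticipate is not in this algebra but in propagating the approximation errors quantitatively: turning statements of the form $\bra{\Psi(L)} X' \ket{\Psi(L)} \approx \cos(L\pi/4)$ into a genuine norm bound on the coefficients of $\ket{\Psi(L)}$ inside $\mathcal{S}$, and tracking how $\epsilon_1$ (basis blindness) and $\epsilon_2$ (measurement statistics) compose through the $Y'$ stage to yield an explicit $\epsilon(\epsilon_1, \epsilon_2)$. Since the paper explicitly defers this robustness analysis to future work, I would state the result up to $\epsilon$ and leave the quantitative dependence open.
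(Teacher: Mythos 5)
Your proof is correct and reaches the same conclusion, but it organises the argument differently from the paper. The paper, after fixing the moduli $|\alpha_L|,|\beta_L|$ from the $X'$ statistics, pins down the three free phases $f_1(001),f_1(010),f_1(011)$ by expanding $\ket{\Psi(001)}$ and $\ket{\Psi(011)}$ in the $\{\ket{\Psi(010)},\ket{\Psi(110)}\}$ basis and matching the resulting overlap amplitudes against $p_{001,010}$ and $p_{011,010}$; it only extracts the action of $Y'$ at the very end. You instead first identify $P_\Psi Y' P_\Psi = e^{-i\phi}\ket{\Psi(000)}\bra{\Psi(100)} + e^{i\phi}\ket{\Psi(100)}\bra{\Psi(000)}$ (using that $\ket{\Psi(010)},\ket{\Psi(110)}$ are the approximate $\pm1$ eigenvectors of $Y'|_{\mathcal S}$ together with orthogonality from \autoref{lemma:2d_subspace}), and then read off every relative phase at once from the single constraint $\bra{\Psi(L)}Y'\ket{\Psi(L)}\approx\sin(L\pi/4)$, which forces $\cos(\phi_L-\phi)\approx1$ whenever $\sin(L\pi/4)\neq 0$, with $L\in\{0,4\}$ trivial. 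This is a cleaner and more uniform route to the same two constraints the paper derives, and it has the additional merit of establishing Eq.~(\ref{eq:Y_action}) as a structural step rather than as an afterthought. One minor point to tighten if you wrote this out fully: the minus sign in the relative phase of $\ket{\Psi(110)}$ (i.e.\ $\phi_{110}=\phi_{010}+\pi$) comes from the orthogonality $\langle\Psi(010)|\Psi(110)\rangle\approx 0$, not from the $X'$ parameterisation alone, so that step should cite the orthogonality condition explicitly rather than attribute it to "the previous step". You are also right that the quantitative propagation of $\epsilon_1,\epsilon_2$ is the genuinely hard part and that the paper itself defers it.
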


\begin{proof}
From Lemma \ref{lemma:2d_subspace} we know we can express all states in the following form:

\EQ{
\ket{\Psi(L)}\approx e^{if_2(L)}(a_0(L)\ket{\Psi(000)}+e^{if_1(L)}a_1(L)\ket{\Psi(100)})
}
where $a_0(L),a_1(L), f_1(L)$ are functions to be determined and $f_2(L)$ is an overall complex phase that we could ignore, but we keep it here to remind that we can use this to simplify the final expressions. From the statistics of the measurements of the $X'$ observable, we get (directly) that $a_0(L)\approx|\cos\frac{L\pi}{8}|\ , \ a_1(L)\approx|\sin\frac{L\pi}{8}|$.

\EQ{
\ket{\Psi(L)}\approx |\cos\frac{L\pi}{8}|\ket{\Psi(000)}+e^{if_1(L)}|\sin\frac{L\pi}{8}|\ket{\Psi(100)}
}
where we dropped the global phase $e^{if_2(L)}$. Now for $L\neq 000,100$, from $\bra{\Psi(L)}\Psi(L \oplus 100)\rangle\approx 0$ we obtain that:

\EQ{\label{eq:f1_1}
f_1(L \oplus 100)=(f_1(L)+\pi) \ \ \bmod 2\pi
}
and we can express the states grouped in four orthogonal bases:

\EQ{\label{eq:f1_0} \ket{\Psi(000)};\\
\ket{\Psi(001)} &=& |\cos\frac{\pi}{8}|\ket{\Psi(000)}+e^{if_1(001)}|\sin\frac{\pi}{8}|\ket{\Psi(100)}  ;  \nonumber\\
\ket{\Psi(010)} &=& \frac{1}{\sqrt{2}}\left(\ket{\Psi(000)}+e^{if_1(010)}\ket{\Psi(100)}\right);  \nonumber\\
\ket{\Psi(011)} &=& |\sin\frac{\pi}{8}|\ket{\Psi(000)}+e^{if_1(011)}|\cos\frac{\pi}{8}|\ket{\Psi(100)} ; \nonumber \\
 \ket{\Psi(100)}; \nonumber\\
 \ket{\Psi(101)} &=& |\sin\frac{\pi}{8}|\ket{\Psi(000)}-e^{if_1(001)}|\cos\frac{\pi}{8}|\ket{\Psi(100)}; \nonumber \\
 \ket{\Psi(110)} &=& \frac{1}{\sqrt{2}}\left(\ket{\Psi(000)}-e^{if_1(010)}\ket{\Psi(100)}\right) ; \nonumber \\
 \ket{\Psi(111)} &=& |\cos\frac{\pi}{8}|\ket{\Psi(000)}-e^{if_1(011)}|\sin\frac{\pi}{8}|\ket{\Psi(100)} \nonumber \\
}
where we used  some identities such as $|\cos(3\pi/8)|=|\sin(\pi/8)|$, etc. Now, we have three parameters to fix, namely $f_1(001),f_1(010),f_1(011)$. For notational simplicity, we will use $c:=|\cos\pi/8| \ ; \ s:= |\sin \pi/8|$.

Then, we will use the statistics that we have from Eq. (\ref{eq:statistics_iid_self_testing}), when measuring in a different than the $X'$ basis. Expressing $\ket{\Psi(001)}$ in the $Y'$ basis we get:

\EQ{
\ket{\Psi(001)}=\frac{1}{\sqrt{2}}\left(c+s e^{i(f_1(001)-f_1(010))}\right)\ket{\Psi(010)}+ \frac{1}{\sqrt{2}}\left(c-s e^{i(f_1(001)-f_1(010))}\right)\ket{\Psi(110)}
}
From Eq. (\ref{eq:statistics_iid_self_testing}) and the probability of obtaining the result $010$ when having the state $\ket{\Psi(001)}$ we obtain:

\EQ{
\frac{1}{\sqrt{2}}|\left(c+s e^{i(f(001)-f(010))}\right)|\approx c
}
that is possible only if $e^{i(f(001)-f(010))}=1$, i.e. we have:

\EQ{\label{eq:f1_2}
f_1(001)=f_1(010) \ \ \bmod 2\pi
}

Similarly, by expressing $\ket{\Psi(011)}$ in the $Y'$ basis we get:

\EQ{
\ket{\Psi(011)}= \frac{1}{\sqrt{2}}\left(s+c e^{i(f_1(011)-f_1(010))}\right)\ket{\Psi(010)}+ \frac{1}{\sqrt{2}}\left(s-c e^{i(f_1(011)-f_1(010))}\right)\ket{\Psi(110)}
}
From Eq. (\ref{eq:statistics_iid_self_testing}) and the probability of obtaining the result $010$ when having the state $\ket{\Psi(011)}$, we have:

\EQ{
\frac{1}{\sqrt{2}}|\left(s+c e^{i(f(011)-f(010))}\right)|\approx c 
}
that is possible only if $e^{i(f(011)-f(010))}=1$, i.e. we get:

\EQ{\label{eq:f1_3}
f_1(011)=f_1(010) \ \ \bmod 2\pi
}
Setting $f(001):=\phi$, we use Eqs. (\ref{eq:f1_1}, \ref{eq:f1_2}, \ref{eq:f1_3}) and Eq. (\ref{eq:f1_0}) becomes:

\EQ{
\ket{\Psi(000)};  \\
\ket{\Psi(001)} &=& \cos\frac{\pi}{8}\ket{\Psi(000)}+e^{i\phi}\sin\frac{\pi}{8}\ket{\Psi(100)};  \nonumber\\
\ket{\Psi(010)}&=& \cos\frac{2\pi}{8}\ket{\Psi(000)}+e^{i\phi}\sin\frac{2\pi}{8}\ket{\Psi(100)};  \nonumber\\
\ket{\Psi(011)}&=& \cos\frac{3\pi}{8}\ket{\Psi(000)}+e^{i\phi}\sin\frac{3\pi}{8}|\ket{\Psi(100)}; \nonumber\\
\ket{\Psi(100)}; \nonumber\\
\ket{\Psi(101)} &=& -\cos\frac{5\pi}{8}\ket{\Psi(000)}-e^{i\phi}\sin\frac{5\pi}{8}\ket{\Psi(100)}; \nonumber\\
\ket{\Psi(110)}&=&-\cos\frac{6\pi}{8}\ket{\Psi(000)}-e^{i\phi}\sin\frac{6\pi}{8}\ket{\Psi(100)} ; \nonumber\\
 \ket{\Psi(111)} &=& -\cos\frac{7\pi}{8}\ket{\Psi(000)}-e^{i\phi}\sin\frac{7\pi}{8}\ket{\Psi(100)}; \nonumber\\
}
where we used $\cos\frac{5\pi}{8}=-\sin\frac{\pi}{8}$; $\sin\frac{5\pi}{8}=\cos\frac{\pi}{8}$;  $-\cos\frac{6\pi}{8}=\frac{1}{\sqrt{2}}$; $\sin\frac{6\pi}{8}=\frac{1}{\sqrt{2}}$; $\cos\frac{3\pi}{8}=\sin\frac{\pi}{8}$; $\sin\frac{3\pi}{8}=\cos\frac{\pi}{8}$; $-\cos\frac{7\pi}{8}=\cos\frac{\pi}{8}$ and $\sin\frac{7\pi}{8}=\sin\frac{\pi}{8}$.  By noting that each state is invariant if multiplied by a global phase (an overall minus sign in our case) we get the expression:

\EQ{
\ket{\Psi(L)}= \cos(\frac{L\pi}{8})\ket{\Psi(000)}+e^{i\phi}\sin(\frac{L\pi}{8})\ket{\Psi(100)}
}
as required for Lemma \ref{lemma:state_iid}. The action of the observable $Y'$ projected on the subspace that all $\ket{\Psi(L)}$ belong to, is:

\EQ{
P_{\Psi}Y'P_{\Psi}= \ket{\Psi(010)}\bra{\Psi(010)}-\ket{\Psi(110)}\bra{\Psi(110)}
}
and given that 

\EQ{
\ket{\Psi(010)} &=& \frac{1}{\sqrt{2}}\left(\ket{\Psi(000)}+e^{i\phi}\ket{\Psi(100)}\right)\nonumber\\
\ket{\Psi(110)} &=& \frac{1}{\sqrt{2}}\left(-\ket{\Psi(000)}+e^{i\phi}\ket{\Psi(100)}\right)
}
we get:

\EQ{
P_\Psi Y' P_\Psi &=& e^{-i\phi}\ket{\Psi(000)}\bra{\Psi(100)}+e^{i\phi}\ket{\Psi(100)}\bra{\Psi(000)}\nonumber\\
Y'\ket{\Psi(100)}&=& e^{-i\phi}\ket{\Psi(000)}}
p\end{proof}

\begin{proof}[Proof of Theorem \ref{thm:iid_self_testing}]
We can now prove Theorem \ref{thm:iid_self_testing}. If the protocol does not abort, it means that the condition on the test measurement statistics is satisfied, and the above Lemmas hold. We substitute Eq. (\ref{eq:general_state_iid_case}) in the isometry, note that $\frac{(I+X')}{2}\ket{\Psi(000)}=1=\frac{(I-X')}{2}\ket{\Psi(100)}$ and $\frac{(I+X')}{2}\ket{\Psi(100)}=0=\frac{(I-X')}{2}\ket{\Psi(000)}$ and use Eq. (\ref{eq:Y_action}) to get the desired result:

\EQ{\Phi(\ket{+}\ket{\Psi(L)})&=&\ket{+}\frac{(I+X')}{2}\ket{\Psi(L)}+\ket{-}(-iY')\frac{(I-X')}{2}\ket{\Psi(L)}\nonumber\\
&=& \ket{+}\cos(\frac{L\pi}{8})\ket{\Psi(000)}+\ket{-}(-iY')e^{i\phi}\sin(\frac{L\pi}{8})\ket{\Psi(100)}\nonumber\\
&=&\ket{+}\cos(\frac{L\pi}{8})\ket{\Psi(000)}+\ket{-}e^{i\phi}\sin(\frac{L\pi}{8})(-ie^{-i\phi})\ket{\Psi(000)}\nonumber\\
&=&\left(\cos(\frac{L\pi}{8})\ket{+}-i\sin(\frac{L\pi}{8})\ket{-}\right)\ket{\Psi(000)}\nonumber\\
&=&\ket{+_{L\pi/4}}\ket{\Psi(000)}
}
\end{proof}

\section{Generalisation to pseudo-homomorphic functions \label{app:pseudohom}}

\tocless\subsection{Notation and remarks about the generalisation}

Note that later on, we will do a slight abuse of notation, and if $\cK$ is a set of keys, we will denote by $k \gets \cK$ the sampling of a key in $\cK$. Note that this sampling may not be uniform, and depends on a fixed distribution.

Moreover, in order to have a function usable in practice, we would like to be able to create the uniform superposition on all the elements of the input set. However, in practice, it may not be possible to have an exact superposition on all the elements of this input set. Therefore we will consider in the following that the function $g$ has an input set $\cZ$ that is a bit bigger, but so that we can create an exact uniform superposition on all the elements of this set, and in order to deal with the fact that $\cZ$ is not the initial input set, we will say that $g(z) = \bot$ as soon as $z$ does not belong to the initial input set of $g$. Note that we may also abuse notation for simplicity, and also write $g(z) = \bot$ when $z$ does not even belong to $\cZ$ and when $g$ is not defined on a given input $z$.

We will also need to extend some notions to this new notion. For example, we will say that a function $g: \cZ \rightarrow \cY \cup \bot$ is \textbf{injective} when for all $y \in \cY$, $|f^{-1}(y)| \leq 1$.

\tocless\subsection{Definition}

\begin{definition}[($\eta$, $\cZ$, $\cZ_0$, $\cD$)-homomorphic family of functions]
  Let us consider a family of functions $\{g_k\colon \cZ \rightarrow \cY \cup \bot\}_{k \in \cK}$, as well as two symmetric binary group relations $\ast$ and $\star$, with $\ast$ acting on a set containing $\cZ$ and $\cZ_0$, $\star$ acting on $\cY \cup \bot$, and so that $\forall y \in \cY, \bot \star y = \bot$. We say that $\{f_k\}_{k \in \cK}$ is an ($\eta$, $\cZ$, $\cZ_0$, $\cD$)-homomorphic function if $\cD$ is a distribution on $\cZ_0$ and \[\Pr_{\substack{k \gets \cK\\ z_0 \gets^\cD \cZ_0\\ z \sample \cZ}}[z \ast z_0 \in \cZ \text{ and } g_k(z) \star g_k(z_0) = g_k(z \ast z_0) \neq \bot] \geq \eta \] Note that we do require that $z$ is sampled uniformly from $\cZ$, but $z_0$ is sampled from a distribution $\cD$ on $\cZ_0$ that may not be uniform.
\end{definition}

\begin{definition}[$\delta$-2-regular family of functions]
  Let us consider a family of functions $\{f_k\colon \cX \rightarrow \cY \cup \bot\}_{k \in \cK}$. For a fixed $k$, $\cY^{(2)}$ will be the set of $y$ having two preimages: $\cY_{f_k}^{(2)} = \{y \in \cY, |f_k^{-1}(y)| = 2\}$. Then, this family of functions is said to be $\delta$-2-regular if \[\Pr_{\substack{k \sample \cK \\ x \sample \cX}}[f_k(x) \in \cY_{f_k}^{(2)}] \geq \delta\]
\end{definition}

\begin{lemma}[($\eta$, $\cZ$, $\cZ_0$)-homomorphy to $\delta$-2-regularity]
  Given a family of functions $\{g_k\colon \cZ \rightarrow \cY \cup \bot\}_{k \in \cK}$ that is both injective and an ($\eta$, $\cZ$,  $\cZ_0$)-homomorphic family of functions, then it's possible to build a family $\{f_{k'}\colon \cZ \times \{0,1\} \rightarrow \cY \cup \bot\}_{k' \in \cK'}$ that is $\delta$-2-regular, with $\delta = \eta$.
  \label{lemma:approx_two_reg}
\end{lemma}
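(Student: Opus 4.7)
The plan is to reuse the concrete construction $f_k(z,c) = g_k(z) +_{\cR} c \cdot y_0$ from earlier in the paper, rephrased in the abstract $(\ast,\star)$ setting. Concretely, I would take $\cK' := \cK \times \cZ_0$ with the sampling rule: draw $k \sample \cK$, then $z_0 \gets^{\cD} \cZ_0$, set $y_0 := g_k(z_0)$, and let $k' := (k,z_0)$. Then define $f_{k'}(z,0) := g_k(z)$ and $f_{k'}(z,1) := g_k(z) \star y_0$, using the convention $\bot \star y = \bot$ to propagate undefined values.

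Before the main estimate I would establish an ``at most two preimages'' lemma: any $y \in \cY$ admits at most one preimage of the form $(z,0)$ because $g_k$ is injective, and at most one preimage of the form $(z,1)$ because $\star$ is a group operation, so $g_k(z) = y \star y_0^{-1}$ has at most one solution (again by injectivity). Hence $|f_{k'}^{-1}(y)| \leq 2$ automatically.

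The core of the argument uses homomorphy. Let $E$ be the event (over the joint sampling of $k, z_0, z$) that $z \ast z_0 \in \cZ$ and $g_k(z \ast z_0) = g_k(z) \star y_0 \neq \bot$; by hypothesis $\pr{E} \geq \eta$. The key observation is that whenever $E$ holds, the image $y^\ast := g_k(z \ast z_0) = g_k(z) \star y_0$ admits two distinct preimages under $f_{k'}$, namely $(z \ast z_0, 0)$ and $(z, 1)$, so $y^\ast \in \cY^{(2)}_{f_{k'}}$. Now fix $(k, z_0)$ and let $S_{k,z_0}$ be the set of $z \in \cZ$ on which $E$ holds. The two families $\{(z,1) : z \in S_{k,z_0}\}$ and $\{(z \ast z_0, 0) : z \in S_{k,z_0}\}$ both lie in $f_{k'}^{-1}(\cY^{(2)}_{f_{k'}})$, are disjoint (they differ in $c$), and each has cardinality $|S_{k,z_0}|$ since $z \mapsto z \ast z_0$ is injective on $S_{k,z_0}$ (group operation) and maps $S_{k,z_0}$ into $\cZ$ by definition of $E$. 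So a uniform $(z,c) \in \cZ \times \{0,1\}$ lands in $f_{k'}^{-1}(\cY^{(2)}_{f_{k'}})$ with conditional probability at least $2|S_{k,z_0}|/(2|\cZ|) = |S_{k,z_0}|/|\cZ|$, and averaging over $(k,z_0)$ recovers exactly $\pr{E} \geq \eta$, yielding $\delta \geq \eta$.

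I expect the main subtlety to be the bookkeeping around the abstract group structure: making sure $\ast$ really acts as a group operation on a set containing $\cZ \cup \cZ_0$ so that $z \mapsto z \ast z_0$ is injective, verifying that the event $E$ actually guarantees $z \ast z_0 \in \cZ$ so that the ``$c = 0$ half'' of the good pairs lives inside the declared domain, and aligning the per-$(k,z_0)$ conditional bound with the joint sampling distribution used in the homomorphy definition (which in turn relies crucially on $z$ being uniform over $\cZ$, as explicitly required).
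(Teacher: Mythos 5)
Your proof is correct and follows essentially the same route as the paper: the same construction $f_{k'}(z,c) = g_k(z) \star (c\cdot y_0)$, the same use of injectivity of $g_k$ to cap preimages at two, and the same use of the homomorphy event to exhibit the pair of preimages $(z\ast z_0,0)$ and $(z,1)$. The only difference is presentational: the paper splits the probability over the bit $c$ into two events $C_1,C_2$ and uses a change of variables $\hat z = z\ast z_0^{-1}$ (relying on $z$ being uniform) to show $\Pr[C_1]=\Pr[C_2]\geq\eta$, whereas you count both preimage families under the single event $E$ for each fixed $(k,z_0)$, with the injectivity of $z\mapsto z\ast z_0$ on $S_{k,z_0}$ playing exactly the role of that change of variables — a cardinality argument rather than a conditional-probability one, but the same underlying mechanism.
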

\begin{proof}
  Let's do the following construction. To sample a key $k' \in \cK'$, we first sample a key $k$ from $\cK$, as well as an $z_0 \gets^{\cD} \cZ_0$, and we define $k' = (k, y_0 := f_k(z_0))$. Then, we define $f_{k'}(z,0) = g_k(z)$ and $f_{k'}(z,1) = g_k(z) \star y_0$, also denoted later as $f_{k'}(z,c) = g_k(z) \star (c \cdot y_0)$ for simplicity. Now, we remark that:
  \begin{align}
   &\Pr_{\substack{k' \sample \cK' \\ x \sample \cX}}[f_{k'}(x) \in \cY_{f_k'}^{(2)}] \\
   =& \Pr_{\substack{k \sample \cK \\ z_0 \gets^{\cD} \cZ_0 \\ z \sample \cZ
   \\ c \sample \{0,1\}}}[g_k(z) \star (c \cdot g_k(z_0)) \in
   \cY_{f_k'}^{(2)}] \\
    =& \frac{1}{2} \times \Big( \Pr_{\substack{k \sample \cK \\ z_0 \gets^{\cD} \cZ_0 \\ z \sample \cZ}}[g_k(z) \in \cY_{f_k'}^{(2)}]
    + \Pr_{\substack{k \sample \cK \\ z_0 \gets^{\cD} \cZ_0 \\ z \sample \cZ}}[g_k(z) \star g_k(z_0) \in \cY_{f_k'}^{(2)}]  \Big)\\
    \geq & \frac{1}{2} \times \Big( \Pr_{\substack{k \sample \cK \\ z_0 \gets^{\cD} \cZ_0 \\ z \sample \cZ}}[g_k(z) \in \cY_{f_k'}^{(2)} \text{ and } \underbrace{z \ast z_0^{-1} \in \cZ \text{
      and } g_k(z) = g_k(z \ast z_0^{-1}) \star g_k(z_0) \neq \bot}_{C_1}]\\
    &+ \Pr_{\substack{k \sample \cK \\ z_0 \gets^{\cD} \cZ_0 \\ z \sample \cZ}}[g_k(z) \star g_k(z_0) \in \cY_{f_k'}^{(2)} \text{ and } \underbrace{z \ast z_0 \in \cZ \text{
      and } g_k(z) \star g_k(z_0) = g_k(z \ast z_0) \neq \bot}_{C_2}]  \Big)\\
    = & \frac{1}{2} \times \Big( \Pr_{\substack{k \sample \cK \\ z_0 \gets^{\cD} \cZ_0 \\ z \sample \cZ}}[g_k(z) \in \cY_{f_k'}^{(2)} | C_1 ] \times \Pr_{\substack{k \sample \cK \\ z_0 \gets^{\cD} \cZ_0 \\ z \sample \cZ}}[C_1]\\
    &+ \Pr_{\substack{k \sample \cK \\ z_0 \gets^{\cD} \cZ_0 \\ z \sample \cZ}}[g_k(z) \star g_k(z_0) \in \cY_{f_k'}^{(2)} | C_2] \times \Pr_{\substack{k \sample \cK \\ z_0 \gets^{\cD} \cZ_0 \\ z \sample \cZ}}[C_2]  \Big)
  \end{align}

   Now, remark that when $z_0 \ast z \in \cD$ and $g_k(z_0) \star g_k(z) = g_k(z_0 \ast z) \neq \bot$, then $y := g_k(z) \star g_k(z_0) \in \cY_{f_k'}^{(2)}$. Indeed:
   \begin{itemize}
   \item $y \in \cY$ because $g_k(z) \star g_k(z_0) \neq \bot$ and the $\star$ operator is defined on $\cY \cup \bot$
   \item there are at least two preimages mapping to $y$, because $y = f_k(z,1) = g_k(z) \star g_k(z_0) = g_k(z \ast z_0) = f_k(z \ast z_0, 0)$.
   \item there are at most two preimages mapping to $y$: indeed $g_k$ is injective, so both partial functions $f(\cdot, 0)$ and $f(\cdot, 1)$ are injective, so it's not possible to have more than two preimages mapping to $y$.
   \end{itemize}
   So $\Pr_{\substack{k \sample \cK \\ z_0 \gets^{\cD} \cZ_0 \\ z \sample \cZ}}[g_k(z) \star g_k(z_0) \in \cY_{f_k'}^{(2)} | C_2] = 1$. Similarly, $\Pr_{\substack{k \sample \cK \\ z_0 \gets^{\cD} \cZ_0 \\ z \sample \cZ}}[g_k(z) \in \cY_{f_k'}^{(2)} | C_1 ] = 1$.
   So we can rewrite the above equation as:
  \begin{align}
   &\Pr_{\substack{k' \sample \cK' \\ x \sample \cX}}[f_{k'}(x) \in \cY_{f_k'}^{(2)}] \\
    \geq & \frac{1}{2} \times \Big( \Pr_{\substack{k \sample \cK \\ z_0 \gets^{\cD} \cZ_0 \\ z \sample \cZ}}[C_1]
    + \Pr_{\substack{k \sample \cK \\ z_0 \gets^{\cD} \cZ_0 \\ z \sample \cZ}}[C_2]  \Big)
  \end{align}
  Now, remember that $\{g_k\}_k$ is ($\eta$, $\cZ$, $\cZ_0$)-homomorphic, so $\Pr_{\substack{k \sample \cK \\ z_0 \gets^{\cD} \cZ_0 \\ z \sample \cZ}}[C_2] \geq \eta$. By symmetry, we also have $\Pr_{\substack{k \sample \cK \\ z_0 \gets^{\cD} \cZ_0 \\ z \sample \cZ}}[C_1] \geq \eta$. Indeed:
  \begin{align}
    &\Pr_{\substack{k \sample \cK \\ z_0 \gets^{\cD} \cZ_0 \\ z \sample \cZ}}[\underbrace{z \ast z_0^{-1}}_{\hat{z}} \in \cZ \text{
    and } g_k(z) = g_k(z \ast z_0^{-1}) \star g_k(z_0) \neq \bot ]\\
    =&\Pr_{\substack{k \sample \cK \\ z_0 \gets^{\cD} \cZ_0 \\ z \sample \cZ}}[\hat{z} \in \cZ \text{ and } z \in \cZ \text{ and } \hat{z} = z \ast z_0^{-1} \text{ and } g_k(z) = g_k(\hat{z}) \star g_k(z_0) \neq \bot ]\\
    =&\Pr_{\substack{k \sample \cK \\ z_0 \gets^{\cD} \cZ_0 \\ \hat{z} \sample \cZ}}[\hat{z} \in \cZ \text{ and } z \in \cZ \text{ and } \hat{z} = z \ast z_0^{-1} \text{ and } g_k(z) = g_k(\hat{z}) \star g_k(z_0) \neq \bot ]\\
    =&\Pr_{\substack{k \sample \cK \\ z_0 \gets^{\cD} \cZ_0 \\ \hat{z} \sample \cZ}}[\hat{z} \ast z_0 \in \cZ\text{ and } g_k(\hat{z} \ast z_0) = g_k(\hat{z}) \star g_k(z_0) \neq \bot ]\\
    =&\Pr_{\substack{k \sample \cK \\ z_0 \gets^{\cD} \cZ_0 \\ z \sample \cZ}}[C_2]\\
    \geq& \eta
  \end{align}
  So $\Pr_{\substack{k' \sample \cK' \\ x \sample \cX}}[f_{k'}(x) \in \cY_{f_k'}^{(2)}] \geq \eta$, which concludes the proof.
\end{proof}

\section{Proof of the Malicious-Abort QFactory}\label{sec:proof_malicious_appendix}
In this section we moved some of the proofs of the security and correctness of \autoref{protocol:qfactory_abort_real}, which are summarised in \autoref{thm:malicious_abort_secure_correct}.

\printproofs

\section{Discussion about dealing with the abort case without Yao's XOR Lemma}\label{sec:discussion_qfactory_abort}

The proof of security we have when we cannot assume that abort arrives with negligible probability defined in \autoref{sec:qfactory_abort} has two drawbacks: first it relies on the conjecture that Yao's XOR Lemma is valid for one-round protocols (classical messages) with quantum adversary, but it also complicate the protocols by adding replication. We are also working on a second method that runs this time just one instance of Malicious 4-states QFactory, and will leaks the abort bit to the server. The specificity of this method is that the hash function will be send after receiving the $y$, and will be a 2-universal hash function (we just chose the function corresponding to the XOR of a random subset of hardcore bits that is easy to implement on server side as well). We also require that the $\delta$-2-regular trapdoor family of functions needs to have a polynomial number of homomorphic hardcore bits (we can easily extend our courstruction to have such requirements by just adding $q/2 \cdot (d_{0,1}, \dots, d_{0,t}, 0, \dots, 0)^T$). This setting will bring us back very close to the requirements needed by the leftover hash lemma \cite{Impagliazzo1989}. The advantage of this method is that we have a quantum equivalent of this lemma \cite{Tomamichel2010}, but unfortunately this lemma is valid (and expressed) in the information theoretic framework. Because our familly will never be information theoretically secure, but only computationally secure, we need an intermediate step to turn our information theoretically secure argument into a computationally secure one. Usually, this is done by introducing lossy functions \cite{PB2007}, and then using some indistinguishable property between injective and lossy functions to finish the proof. Unfortunately, our protocol has a non standard shape, and it's not yet clear how to define $a$ in the lossy case to make sure an adversary cannot exploit $a$ to distinguish between lossy and non lossy.

We also point out that we can also create some constructions where the abort bit is independent of the secret, by first sending a function whose goal is to create the superposition, and after receiving the $y$, if $y$ has two preimages, we send a single element with no noise $A s_0 + q/2 \cdot d_0$ that will be used like if it were another raw in the $y_0$. This problem is still supposed to be difficult, and because the second message does not have noise, it cannot lead to an abort, to the abort is independent of the secret. But the proof of security for this construction is also a work in progress.

\end{appendices}

\bibliographystyle{alpha}
\bibliography{biblio}

\end{document}